\DeclareMathAlphabet{\mymathbb}{U}{BOONDOX-ds}{m}{n}
\newcommand{\norm}[1]{\left\lVert#1\right\rVert}
\newcommand{\abs}[1]{\left | #1\right |}
\theoremstyle{definition}
\newtheorem{thm}{Theorem}[section]
\newtheorem{lemma}[thm]{Lemma}
\newtheorem{remark}[thm]{Remark}
\newtheorem{problem}[thm]{Problem}
\newtheorem{proposition}[thm]{Proposition}
\newtheorem{conjecture}[thm]{Conjecture}
\newtheorem{definition}[thm]{Definition}
\begin{document}

\title{Resolvent-based quantum phase estimation: 
Towards estimation of parametrized eigenvalues}

\author[1,2]{Abhijeet Alase}
\email{abhijeet.alase@concordia.ca}
\orcid{0000-0002-5230-4597}
\author[2]{Salini Karuvade}
\orcid{0000-0002-1513-7857}
\affiliation[1]{Centre for Engineered Quantum Systems, School of Physics,
The University of Sydney, Sydney, New South Wales 2006, Australia}
\affiliation[2]{Department of Physics, Concordia University, Montreal, QC H4B 1R6, Canada}
\maketitle
\begin{abstract}
  Quantum algorithms for estimating the eigenvalues of matrices, including the phase estimation algorithm, serve as core subroutines in a wide range of quantum algorithms, including those in quantum chemistry and quantum machine learning. The standard quantum eigenvalue (phase) estimation algorithm accepts 
  a Hermitian (unitary) matrix and a state in an unknown superposition of its eigenstates as input, and 
  coherently records the estimates for real eigenvalues (eigenphases) in an ancillary register.
  Extension of quantum eigenvalue and phase estimation algorithms to the case of non-normal input matrices 
  is obstructed by several factors such as non-orthogonality of eigenvectors, existence of generalized eigenvectors and the fact that eigenvalues may lie anywhere on the complex plane.


In this work, we propose a novel approach for estimating the eigenvalues of non-normal matrices based on 
preparation of a state that we call the ``resolvent state''. We construct the first efficient algorithm for estimating the phases of the unimodular eigenvalues of a given non-unitary matrix. We then construct an efficient algorithm for estimating the real eigenvalues of a given non-Hermitian matrix, achieving complexities that match the best known results while operating under significantly relaxed assumptions on the non-real part of the spectrum. The resolvent-based approach that we introduce also extends to estimating eigenvalues that lie on a parametrized complex curve, subject to explicitly stated conditions, thereby paving the way for a new paradigm of parametric eigenvalue estimation.

\end{abstract}

\section{Introduction}\label{sec:intro}
\paragraph{Context:} 
Quantum algorithms for estimating eigenvalues of a Hermitian matrix and 
eigenphases of a unitary matrix~\cite{Kit95,NC00} underpin solutions to some of the 
most exciting computational problems and promising
applications of quantum computing~\cite{Sho94,BHMT02,HHL09,NWZ09,Ral20,DL23,WBL12,LMR14}. 
In quantum phase estimation (QPE), 
we are given oracle access to a unitary matrix $U$ and copies of a state $\ket{\psi}$
formally expressed as $\ket{\psi}=\sum_l\beta_l\ket{\lambda_l}$, where $\{\ket{\lambda_l}\}$
are the eigenvectors of $U$ with corresponding eigenvalues $\{\lambda_l\}$ that are unimodular 
(i.e. $\abs{\lambda_l}=1$).
The goal is to generate a state of the form 
$\ket{\psi_U} = \sum_l \beta_l \ket{\phi_{\lambda_l}}\ket{\lambda_l}$, 
where each $\ket{\phi_{\lambda_l}}$ has high overlap on the basis states 
corresponding to the binary representation of $\arg(\lambda_l)$ to additive precision $\epsilon$. 
The task of quantum eigenvalue estimation (QEVE) is quite similar, except that 
the input matrix is sparse and Hermitian, and $\ket{\phi_{\lambda_l}}$ encodes the binary representation of $\lambda_l$ scaled by a suitable constant. QPE and QEVE are closely related;
in fact, several approaches to QEVE use QPE as a subroutine~\cite{KOS07,Ral20}.

\paragraph{Background and knowledge gap:}
Whereas the first efficient algorithm for quantum phase estimation (QPE) of unitary matrices 
was developed almost three decades ago~\cite{Kit95}, its extension to sparse 
or block-encoded non-unitary
matrices has remained an unsolved problem. The algorithms or techniques proposed
in the literature lead to complexities exponential in at 
least one of the input parameters~\cite{WWLN10,DGK14,DK17}. No efficient algorithm for estimating phases of eigenvalues is known even for the special case in which 
all eigenvalues of the given matrix lie on the unit circle. 

Quantum algorithms for estimating eigenvalues of sparse Hermitian matrices 
were developed alongside QPE, but their efficient extensions to
non-Hermitian matrices have been constructed only in the last three years~\cite{Sha22,SL22,LS24}. The emergence of new linear algebraic techniques for non-Hermitian matrices~\cite{Ber14,TOSU21,TOSU22,LS24} based on quantum linear system algorithms (QLSAs)~\cite{HHL09,CAS+22,Kro23} has enabled efficient algorithms for QEVE in different settings~\cite{Sha22,SL22,LS24,zhang2024exponential,ZZHY24}. In particular, two approaches, one based on ``Fourier state'' generation~\cite{Sha22,SL22} and another based on ``Chebyshev history state'' generation~\cite{LS24} have been  proposed for QEVE of non-Hermitian matrices with only real eigenvalues. The non-normality of these matrices manifests in the form of 
non-orthogonal eigenvectors and the presence of generalized eigenvectors. 
The limitation of both these approaches lies in their assumption, namely that the spectrum of the input matrix $A$ is real. The underlying technique used by both these approaches does not lend itself to a straightforward generalization to the case where $A$ has non-real eigenvalues. 

\begin{figure}
    \centering
    \includegraphics[height=0.4\linewidth]{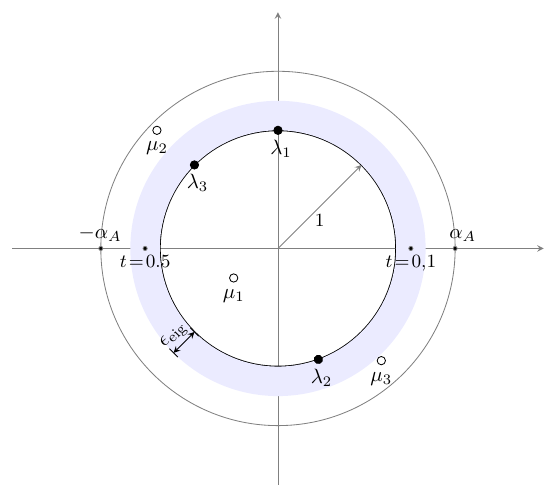}
    \includegraphics[height=0.4\linewidth]{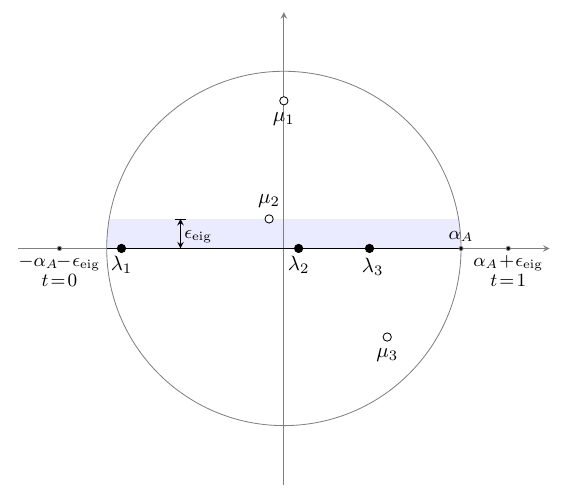}
    \caption{The location of eigenvalues in the problem statement for (left) QEUE and (right) QERE. Eigenvalues are promised to not lie inside the shaded region.}
    \label{fig:setting}
\end{figure}

\paragraph{Results:}
In this work, we consider extensions of QPE and QEVE to non-normal matrices with relaxed assumptions on the spectrum. We call these extensions Quantum coherent Estimation of unimodular Eigenvalues (QEUE) and Quantum coherent Estimation of Real Eigenvalues (QERE) respectively. Here the word ``coherent'' 
refers to coherently recording the answer (eigenvalue or eigenphase) on an ancilla register,
as opposed to returning a classical estimate of the same.
We construct very simple and efficient quantum algorithms to solve these problems. 

We state our problems using the block encoding framework. 
Recall that a unitary $O_A$ is an $\alpha_A$-block encoding of a matrix $A$ if 
$\alpha_A\braket{0^{a_A}|O_A|0^{a_A}} = A$, where $\ket{0^{a_A}}$ is the fiducial state of
the ancilla qubits. We assume that the block encoding of the input matrix $A$ is perfect.
Let us now state the problem of QEUE, which generalizes QPE to non-unitary matrices (see Fig.~\ref{fig:setting}). 
\begin{problem}[Informal statement of QEUE (Problem~\ref{prob:qpe})]
\label{prob:inf_qpe}
We are given an $\alpha_A$-block encoding $O_A$ of an $n$-qubit matrix $A$, two accuracy parameters $\epsilon_{\rm st}, \epsilon_{\rm eig}$, and a unitary $O_\psi$ that generates a state $\ket{\psi} = \sum_l \beta_l\ket{\lambda_l}$ such that each $\ket{\lambda_l}$ is an eigenstate of $A$ with unimodular eigenvalue $\lambda_l$, i.e. $\abs{\lambda_l}=1$. All other eigenvalues $\{\mu_k\}$ of $A$ have magnitude outside the interval $(1,1+\epsilon_{\rm eig})$. The goal is to generate an $(n+a)$-qubit quantum state 
\begin{equation}
    \ket{\psi_A} = \frac{\sum_l \beta_l \ket{\phi_{\lambda_l}}\ket{\lambda_l}}
    {\norm{\sum_l \beta_l \ket{\phi_{\lambda_l}}\ket{\lambda_l}}}
\end{equation}
with accuracy $\epsilon_{\rm st}$, where each $\ket{\phi_{\lambda_l}}$ is a state that, when measured in the computational basis, yields an estimate of $\operatorname{arg}(\lambda_l)$ to additive accuracy $\epsilon_{\rm eig}$ with unit probability.
\end{problem}
Note that in QEUE, only the eigenvalues to be estimated are required to be on the unit circle, 
and other eigenvalues could take any complex values except in an annulus defined by 
radii $1$ and $1+\epsilon_{\rm eig}$ (see Fig.~\ref{fig:setting}).
Our algorithm for QEUE achieves the complexity stated in the following theorem.
\begin{thm}[Informal statement of Theorem~\ref{thm:qpeideal}]
\label{thm:inf_qpa}
Problem~\ref{prob:inf_qpe} can be solved by
    \begin{equation}
        \mathbf{O}\left(\frac{ \alpha_A \kappa_S^4\mathcal{K}^\circ_\delta(A)}{\epsilon_{\rm eig}\epsilon_{\rm st}^2} 
        \log\left(\frac{1}{\epsilon_{\rm st}}\right)\right), \quad 
        \mathcal{K}_\delta^\circ(A)=\delta \sup_{|z| = 1+\delta}\|(z I-A)^{-1}\|
    \end{equation}
uses of $O_A$ and $O_\psi$, where $\delta \in {\bf \Theta} (\epsilon_{\rm eig}\epsilon_{\rm st}^2/\kappa_S^4)$. 
\end{thm}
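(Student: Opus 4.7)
My approach centres on preparing a \emph{discrete resolvent state}. Pick $N=\Theta(1/(\delta\epsilon_{\rm eig}))$ equispaced contour points $z_k=(1+\delta)e^{2\pi i k/N}$ on a slightly enlarged unit circle and aim to produce, up to normalisation,
\[
\ket{R}\;=\;\frac{1}{\sqrt N}\sum_{k=0}^{N-1}\ket{k}\otimes(z_kI-A)^{-1}\ket{\psi}.
\]
Expanding $\ket{\psi}=\sum_l\beta_l\ket{\lambda_l}$ and using $(z_kI-A)^{-1}\ket{\lambda_l}=(z_k-\lambda_l)^{-1}\ket{\lambda_l}$, a short calculation gives $|z_k-\lambda_l|^2\approx\delta^2+(\theta_k-\theta_l)^2$ with $\theta_k=2\pi k/N$ and $\theta_l=\arg\lambda_l$, so each unimodular eigencomponent induces a Lorentzian peak of angular half-width $\delta$ on the first register, centred exactly at the eigenphase. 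The spectral-gap promise---that no other eigenvalue lies in the annulus of inner radius $1$ and outer radius $1+\epsilon_{\rm eig}$---forces the non-unimodular eigencomponents to contribute only a bounded, diffuse background, uniformly controlled by $\mathcal K_\delta^\circ(A)$.

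\textbf{Implementation.} The controlled resolvent $\ket{k}\ket{v}\mapsto\ket{k}(z_kI-A)^{-1}\ket{v}$ can be realised by a QLSA-based block encoding of $(zI-A)^{-1}$ with $z$ read from the first register; each invocation costs $\tilde O\bigl(\alpha_A\mathcal K_\delta^\circ(A)\delta^{-1}\log(1/\epsilon')\bigr)$ queries to $O_A$. Preparing a uniform superposition in $k$, applying this controlled resolvent to $O_\psi\ket{0^n}$, and then promoting the block-encoding success ancillas to constant probability by fixed-point amplitude amplification yields an approximation of $\ket{R}$. I would then show that $\ket{R}$ after normalisation already has the structure required by Problem~\ref{prob:inf_qpe}: the first register is $\epsilon_{\rm st}$-close to a state supported entirely within an $\epsilon_{\rm eig}$-angular neighbourhood of $\theta_l$, because the Lorentzian tail outside such a window carries mass $O(\delta/\epsilon_{\rm eig})$, forcing the choice $\delta=\Theta(\epsilon_{\rm eig}\epsilon_{\rm st}^2/\kappa_S^4)$ asserted in the theorem.

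\textbf{Main obstacle.} The delicate part is controlling the non-normality. Because the eigenbasis $\{\ket{\lambda_l}\}$ is non-orthogonal, two sources of penalty will arise: converting between the computational-basis $\ell^2$ norm on the system register and the eigenbasis weights $|\beta_l|^2$ introduces a factor of $\kappa_S^2$, and bounding cross-terms $\braket{\lambda_l|\lambda_{l'}}\neq0$ between overlapping Lorentzian peaks on the first register introduces another such factor. Together these are responsible for the $\kappa_S^4$ appearing both in the complexity and in the required smallness of $\delta$. To handle the cross-terms rigorously I would combine the Cauchy-type identity $\tfrac{1}{2\pi i}\oint_{|z|=1+\delta}(zI-A)^{-1}\,\mathrm dz=\sum_lP_{\lambda_l}$ with pseudospectral bounds on $\|(zI-A)^{-1}\|$ off the peaks, to show that the normalised resolvent state differs from $\ket{\psi_A}$ by at most $\epsilon_{\rm st}$ in trace distance. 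Multiplying the per-call QLSA cost by the amplitude-amplification overhead and including the tail-truncation budget then yields the claimed query complexity.
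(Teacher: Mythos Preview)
Your core algorithmic idea---prepare the discrete resolvent state via QLSA on the block-diagonal system $\bigoplus_k(z_kI-A)$, observe that each unimodular eigencomponent yields a Lorentzian-type peak of width $\delta$ on the ancilla register, and argue that the tail mass outside an $\epsilon_{\rm eig}$-window is $O(\delta/\epsilon_{\rm eig})$---is exactly what the paper does. The QLSA cost $\tilde O(\alpha_A\mathcal K_\delta^\circ(A)/\delta)$ and the resulting choice $\delta=\Theta(\epsilon_{\rm eig}\epsilon_{\rm st}^2/\kappa_S^4)$ also match.

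Where you diverge from the paper is in the non-orthogonality analysis, and here your plan is both overcomplicated and somewhat off-target. First, a small correction: there are no ``non-unimodular eigencomponents'' to control in the state---by assumption $\ket{\psi}$ is supported only on unimodular eigenvectors and $(z_kI-A)^{-1}$ preserves this, so $\mathcal K_\delta^\circ(A)$ enters \emph{only} through the QLSA condition number, not through any background term in $\ket{R}$. Second, your proposal to handle the cross-terms $\braket{\lambda_l|\lambda_{l'}}$ via the Cauchy contour identity and pseudospectral bounds is not how the paper proceeds and it is unclear how you would make it work: the contour identity gives you spectral projectors, but the quantity you need to control is the $\ell^2$ norm of $\sum_l\beta_l\ket{\chi_l}\ket{\lambda_l}$ for various ancilla states $\ket{\chi_l}$. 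The paper instead proves an elementary two-sided bound
\[
\frac{\min_l\|\ket{\chi_l}\|}{\kappa_S}\;\le\;\Bigl\|\sum_l\beta_l\ket{\chi_l}\ket{\lambda_l}\Bigr\|\;\le\;\kappa_S\max_l\|\ket{\chi_l}\|,
\]
which follows in three lines from $\ket{\lambda_l}=S\ket{e_l}$ and $\|\ket{\psi}\|=1$. Splitting the resolvent state into its ``good'' window part, a norm-mismatch part, and the tail part, and applying this lemma to each piece, gives the $\kappa_S^2$ ratio directly; combined with the $\sqrt{\delta/\epsilon_{\rm eig}}$ tail bound this is what forces $\delta\sim\epsilon_{\rm eig}\epsilon_{\rm st}^2/\kappa_S^4$. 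No contour integrals or cross-term estimates are needed. A minor additional point: your discretisation count $N=\Theta(1/(\delta\epsilon_{\rm eig}))$ is not what the Riemann-sum error analysis in the paper requires; the derivative of the Lorentzian scales like $\delta^{-2}$, so one actually needs $N\gtrsim\epsilon_{\rm eig}/\delta^3$ to keep the discretisation error below $\delta/\epsilon_{\rm eig}$.
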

Here $\kappa_S= \sigma_{\max}(S)/\sigma_{\min}(S)$ is the 
condition number of the (rectangular) 
matrix $S$ whose columns are $\{\ket{\lambda_l}\}$, and $\sigma_{\max}(S)$ ($\sigma_{\min}(S)$)
denotes the highest (lowest non-zero) singular value of $S$.
The quantity $\mathcal{K}_\delta^\circ(A)$ bounds from below the 
Kreiss constant of A with respect to the unit circle, 
which we denote by $\mathcal{K}^{\circ}(A)$ (see Eq.~\eqref{eq:kreisscirc}).
The Kreiss constant $\mathcal{K}^{\circ}(A)$ is a lower bound on the
maximum transient growth $\sup_{k>0}\norm{A^k}$ of the discrete dynamics generated by $A$ for a stable $A$~\cite{TE05}. However, $\mathcal{K}_\delta^\circ(A)$
appearing in our complexity can be finite even if the transient growth is unbounded.
For a diagonalizable $A$, we have $\mathcal{K}_\delta^\circ(A) \in {\bf O}(\bar{\kappa}_A)$, 
where $\bar{\kappa}_A$ is the condition number of the similarity transformation that diagonalizes $A$
(see Def.~\ref{def:jordcond} for a rigorous definition).
This leads to the overall complexity 
${\bf O}\left(\alpha_A \bar{\kappa}_A\kappa_S^4/\epsilon_{\rm eig}\epsilon_{\rm st}^2 \log(1/\epsilon_{\rm st})\right)$. A similar bound for complexity is derived for non-diagonalizable $A$ in terms of the dimension of its largest Jordan block.

The problem QERE, stated below informally, extends QEVE to non-normal 
matrices with possibly non-real eigenvalues.
\begin{problem}[Informal statement of QERE (Problem~\ref{prob:qeve})]
\label{prob:inf_qeve}
We are given an $\alpha_A$-block encoding $O_A$ of an $n$-qubit matrix $A$, two accuracy parameters $\epsilon_{\rm st}, \epsilon_{\rm eig}$, and a unitary $O_\psi$ that generates a state $\ket{\psi} = \sum_l \beta_l\ket{\lambda_l}$ such that each $\ket{\lambda_l}$ is an eigenstate of $A$ with a real eigenvalue $\lambda_l$. All other eigenvalues $\{\mu_k\}$
of $A$ have imaginary parts lie outside the interval $(0,\epsilon_{\rm eig})$. The goal is to generate an $(n+a)$-qubit quantum state 
\begin{equation}
    \ket{\psi_A} = \frac{\sum_l \beta_l \ket{\phi_{\lambda_l}}\ket{\lambda_l}}
    {\norm{\sum_l \beta_l \ket{\phi_{\lambda_l}}\ket{\lambda_l}}}
\end{equation}
with accuracy $\epsilon_{\rm st}$, where each $\ket{\phi_{\lambda_l}}$ is a state that, when measured in the computational basis, yields a scaled estimate of $\lambda_l$ to additive accuracy $\epsilon_{\rm eig}$ with unit probability.
\end{problem}

Our algorithm achieves the complexity stated in the following theorem.
\begin{thm}[Informal statement of Theorem~\ref{thm:qeveideal}]
\label{thm:inf_qeveideal}
Problem~\ref{prob:inf_qeve} can be solved by
    \begin{equation}
        \mathbf{O}\left(\frac{ \alpha_A \kappa_S^4\mathcal{K}_\delta(-iA)}{\epsilon_{\rm eig}\epsilon_{\rm st}^2} 
        \log\left(\frac{1}{\epsilon_{\rm st}}\right)\right),\quad
        \mathcal{K}_\delta(-iA)=\delta \sup_{\Re(z)=\delta}\|(z\mathds{1}+iA)^{-1}\|\,.
    \end{equation}
uses of $O_A$ and $O_\psi$, 
where $\delta \in \Theta(\epsilon_{\rm eig}\epsilon_{\rm st}^2/\kappa_S^4)$.
\end{thm}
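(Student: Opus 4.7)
I would prove Theorem~\ref{thm:inf_qeveideal} by mirroring the proof of Theorem~\ref{thm:inf_qpa} after the rotation $A\mapsto B=-iA$. Real eigenvalues $\lambda_l$ of $A$ become purely imaginary eigenvalues $-i\lambda_l$ of $B$, and the hypothesis that all other eigenvalues of $A$ have imaginary parts outside $(0,\epsilon_{\rm eig})$ becomes the statement that all other eigenvalues of $B$ have real parts outside $(0,\epsilon_{\rm eig})$. The circular contour $|z|=1+\delta$ used for QEUE is correspondingly replaced by the vertical line $\Gamma_\delta=\{z:\Re(z)=\delta\}$ for some $\delta\in(0,\epsilon_{\rm eig})$. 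This line sits at distance at least $\delta$ from every eigenvalue of $B$, so $(zI+iA)^{-1}$ is well defined on $\Gamma_\delta$ and its operator norm there is controlled by the Kreiss quantity $\mathcal{K}_\delta(-iA)/\delta$.

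\textbf{Resolvent state construction.} I would then fix a uniform grid $z_k=\delta+i\omega_k$ with $\omega_k\in[-\Omega,\Omega]$ and spacing finer than $\epsilon_{\rm eig}$, build a block encoding of $z_k I+iA$ from $O_A$ for each $k$, and invoke a QLSA coherently over $k$ to prepare
\begin{equation}
    \ket{R}\;\propto\;\sum_k \ket{k}\otimes (z_kI+iA)^{-1}\ket{\psi}.
\end{equation}
Expanding $\ket{\psi}=\sum_l \beta_l \ket{\lambda_l}$ in the eigenbasis of $A$ (with $\lambda_l\in\mathbb{R}$), the coefficient of $\ket{\lambda_l}$ in the $k$th slot is the discrete Lorentzian $1/(\delta+i(\omega_k+\lambda_l))$, sharply peaked at $\omega_k=-\lambda_l$ with half-width $\delta$. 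Treating the $k$-register (after an affine relabelling of $\omega_k\mapsto-\omega_k$) as the readout register then directly produces the target structure $\sum_l \beta_l\ket{\phi_{\lambda_l}}\ket{\lambda_l}$ required by Problem~\ref{prob:inf_qeve}, provided the peaks are resolved to additive precision $\epsilon_{\rm eig}$.

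\textbf{Parameter balancing and complexity.} Each coherent QLSA call costs $\mathbf{O}(\alpha_A\kappa)$ queries to $O_A$ up to logarithmic factors, where $\kappa$ is the worst-case condition number of $z_k I+iA$ across $k$; by the Kreiss bound, $\kappa\in\mathbf{O}(\mathcal{K}_\delta(-iA)/\delta)$. The non-orthogonality of the eigenbasis of $A$ contributes a $\kappa_S^2$ factor when translating norm bounds on $\ket{R}$ into amplitude bounds on its ideal form, while amplitude amplification on the successful QLSA branch adds a further $\kappa_S^2/\epsilon_{\rm st}$ factor, producing the combined $\kappa_S^4/\epsilon_{\rm st}^2$ multiplier. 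Choosing $\delta\in\Theta(\epsilon_{\rm eig}\epsilon_{\rm st}^2/\kappa_S^4)$ balances the peak-resolution requirement against the resolvent-norm blow-up, yielding the stated complexity $\mathbf{O}(\alpha_A\kappa_S^4\mathcal{K}_\delta(-iA)/(\epsilon_{\rm eig}\epsilon_{\rm st}^2)\log(1/\epsilon_{\rm st}))$ after including the $\log(1/\epsilon_{\rm st})$ factor from QLSA precision.

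\textbf{Main obstacle.} The most delicate part of the argument will be the truncation of the non-compact contour $\Gamma_\delta$ to the finite window $|\omega|\le\Omega$. Unlike the QEUE proof, where the underlying circle is already bounded, here the Lorentzian tails decay only as $1/|\omega|$, so $\Omega$ must be taken large enough (polynomial in $\kappa_S$, $\mathcal{K}_\delta(-iA)$ and $1/\epsilon_{\rm st}$) to keep the truncation error below the state-accuracy budget. I expect that a careful $L^2$-style bound on the tail contribution, rather than the pointwise Kreiss supremum, will be needed; it is this trade-off, together with the propagation of Jordan-block pre-factors in the non-diagonalizable case, that ultimately pins down the polynomial dependence on $\kappa_S$ in the final complexity.
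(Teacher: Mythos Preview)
Your approach is essentially the paper's: prepare the resolvent state via a single QLSA call on the block-diagonal matrix $\bigoplus_k(z_kI-A)$ with $z_k$ sampled on a line shifted by $i\delta$ off the real axis, exploit the Lorentzian peak structure to read out $\lambda_l$, and balance $\delta$ against the state-accuracy budget. Your rotation $A\mapsto -iA$ is merely a relabelling; the paper works directly with $z_j=x_j+i\delta$ and $(z_jI-A)^{-1}$, which is unitarily the same object.

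Two places where your narrative deviates from the actual mechanism are worth correcting. First, the ``main obstacle'' you flag is not one: since $O_A$ is an $\alpha_A$-block encoding, $\|A\|\le\alpha_A$, so every real eigenvalue already lies in $[-\alpha_A,\alpha_A]$ and the window $\Omega=\alpha_A+\epsilon_{\rm eig}$ suffices with no polynomial blow-up. The Lorentzian \emph{probability} tail decays as $1/\omega^2$ (not $1/|\omega|$), and the mass outside the $\epsilon_{\rm eig}$-window is bounded by $(2/\pi)\operatorname{arccot}(\epsilon_{\rm eig}/\delta)\in\mathbf{O}(\delta/\epsilon_{\rm eig})$ via an explicit integral---this, together with the Riemann-sum discretization error, is the genuinely delicate estimate. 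Second, the $\kappa_S^4$ factor does not arise from a separate amplitude-amplification step. It comes entirely from propagating the per-eigenvector failure bound through the non-orthogonal expansion: one shows $\|Q_{\lambda_l,\epsilon}\ket{\tilde\phi_{\lambda_l,\delta}}\|\le\mathbf{O}(\sqrt{\delta/\epsilon_{\rm eig}})$ and then, via the condition number of the matrix $S$ of eigenvectors, $\|\ket{\tilde\psi_{A,\text{bad}}}\|/\|\ket{\tilde\psi_{A,\text{good}}}\|\le\mathbf{O}(\kappa_S^2\sqrt{\delta/\epsilon_{\rm eig}})$. Forcing this below $\epsilon_{\rm st}$ fixes $\delta\in\Theta(\epsilon_{\rm eig}\epsilon_{\rm st}^2/\kappa_S^4)$, and the QLSA cost $\mathbf{O}(\alpha_A\mathcal{K}_\delta(-iA)/\delta\cdot\log(1/\epsilon_{\rm st}))$ then yields the stated complexity directly.
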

The quantity $\mathcal{K}_\delta(-iA)$ is a lower bound on the Kreiss constant of 
$-iA$ with respect to lower half plane, which we denote by $\mathcal{K}(-iA)$. 
The Kreiss constant $\mathcal{K}(-iA)$ provides a lower bound 
on the maximum transient growth $\sup_{t>0}\norm{e^{-iAt}}$ for the continuous dynamics generated by $A$, provided $A$ is stable. 
For diagonalizable $A$, we have $\mathcal{K}_\delta(-iA) \in {\bf O}(\bar{\kappa}_A)$, which leads to the overall complexity
$\mathbf{O}\left( \alpha_A \bar{\kappa}_A\kappa_S^4/\epsilon_{\rm eig}\epsilon_{\rm st}^2 \log(1/\epsilon_{\rm st}\right)$ for QERE for this case. 

If all eigenvalues of $A$ are real, 
then QERE reduces to the coherent version of QEVE, which we discuss
ahead in \S~\ref{sec:bgpreviouswork}. 
For constant $\kappa_S$ and $\epsilon_{\rm st}$, 
 the complexity of our
algorithm for QEVE becomes 
$\mathbf{O}\left( \alpha_A \mathcal{K}_\delta(-iA)/\epsilon_{\rm eig}\right)$.
If $A$ is diagonalizable, then the complexity of our algorithm is 
$\mathbf{O}\left( \alpha_A \bar{\kappa}_A/\epsilon_{\rm eig}\right)$, which 
matches the previous best results~\cite{LS24}. If $A$ is not diagonalizable, i.e.
if the dimension of the largest block is $d\ge 2$, then our algorithm
scaling as $\mathbf{O}\left( \alpha_A \bar{\kappa}_A/\epsilon_{\rm eig}^d\right)$
has better scaling in terms of $\alpha_A$ than the previous best result~\cite{LS24}.

\paragraph{New Techniques:}
Our algorithms for both QEUE and QERE are conceptually very simple
and rely on a single application of QLSA. 
The algorithm for QEUE begins by choosing a set of candidate points 
$\{x_j:=e^{i\phi_j}\}$ to approximate the eigenvalues
in $\{\lambda_l\}$.
These values are chosen to be spaced uniformly on the unit circle. The main idea is to prepare a state proportional to 
\begin{equation}
\label{eq:qpestate}
    \ket{\tilde{\psi}_{A}} \propto \sum_l \beta_l \left(\sum_{j} f(\lambda_l,x_j)\ket{j}\right)\ket{\lambda_l},
\end{equation}
where $\{\ket{j}\}$ are the computational basis states of an ancilla register.
For a fixed $\lambda_l$, we need to ensure that the states $\{\ket{j}: x_j\approx\lambda_l\}$ have higher amplitudes.
Therefore, we choose $f(\lambda_l,x_j) = (x_j(1+\delta)-\lambda_l)^{-1}$. 
This choice also ensures that it is easy to prepare the state $\ket{\tilde{\psi}_{A}}$
by noting that it is the solution of 
the linear equation $M\bm{x} = \bm{b}$, where 
\begin{equation}
    M = Z\otimes \mathds{1}_n - \mathds{1}_a\otimes A,
    \quad Z = \sum_j z_j \ket{j}\bra{j},\quad z_j = x_j(1+\delta),\quad {\rm and}\,\, \bm{b} = \ket{+^a}\ket{\psi},
\end{equation}
and $\ket{+^a}$ is the equal superposition state of all computational basis states 
of the ancillary register.
A block encoding of $M$ is easy to 
generate using standard techniques. Note that $M^{-1}$ is a block diagonal
matrix with $j$th block $M_j^{-1} = (z_j-A)^{-1} = :R(z_j)$ being the resolvent of
$A$ evaluated at $z_j$. Therefore, for the particular choice of  $f(\lambda_l,x_j)$,
we call the state $\ket{\tilde{\psi}_A}$ in Eq.~\eqref{eq:qpestate} 
the ``resolvent state''. 

The correctness of our algorithm relies on the proof that the probability\\ 
$\sum_{|\phi_j-\arg(\lambda_l)|\le \epsilon_{\rm eig}} |f(\lambda_l,x_j)|^2$, 
corresponding to the ancilla register carrying an $\epsilon_{\rm eig}$-additive parametric 
approximation to $\lambda_l$, dominates the complementary probability\\
$\sum_{|\phi_j-\arg(\lambda_l)| > \epsilon_{\rm eig}} |f(\lambda_l,x_j)|^2$. We prove this by relating these discrete sums to integrals, while carefully accounting for the discretization errors. 
The failure probability is then bounded by explicit integration, which reveals the suitable
choice of parameters to achieve accuracy $\epsilon_{\rm st}$.
The complexity of the algorithm follows from that of QLSA.

The algorithm for QERE is very similar to that for QEUE. In this case, the candidate
approximations $\{x_j \}$ to $\{\lambda_l\}$ 
are chosen to be spaced uniformly on a real interval.
The algorithm prepares the resolvent state similar to the one in Eq.~\eqref{eq:qpestate} 
by an application of QLSA,
but in this case we use $f(\lambda_l,x_j) = (x_j+i\delta-\lambda_l)^{-1}$  and $z_j = x_j+i\delta$.
The error and complexity analysis is similar 
to that of QEUE. 

\paragraph{Discussion:} 
The resolvent state-based approach introduced in this work treats QEUE and QERE problems on an equal footing, producing remarkably similar, extremely simple and efficient algorithms for the two tasks. Moreover, our algorithm extends to the case when the eigenvalues to be estimated are neither real nor unimodular, and instead lie on a known parametrized complex curve satisfying some conditions that we state in \S~\ref{sec:peve}. 
Thus, our work establishes the problem of estimating parametrized eigenvalues as a natural extension and unification of QPE and QEVE, and provides a promising technique for its solution. These successes pave the way for a new paradigm of parametric eigenvalue estimation as a general approach to estimating eigenvalues of non-normal matrices.  

Our algorithms for QEUE and QERE apply to an important class of operators studied in
non-Hermitian physics~\cite{BB98,AKS22,Kar22}. Specifically, our algorithm for QERE can be applied to
the estimation of eigenvalues of an unbroken PT-symmetric Hamiltonian $H$; such
Hamiltonians are known to have real spectrum~\cite{Mos02,KAS22}. It can also be applied to PT-symmetric Hamiltonians
with broken PT symmetry, which have some real eigenvalues and remaining eigenvalues
in complex conjugate pairs~\cite{BBJ03}. In this latter case, the state $\ket{\psi}$ used for eigenvalue estimation 
should have support only on PT-unbroken eigenvectors for our algorithm to apply.
Similarly, our algorithm for QEUE can be applied for estimation of eigenphases of
a time-propagator $U$ generated by an unbroken PT-symmetric Hamiltonian. Such propagators 
are known to have unimodular eigenvalues. Our algorithm for QEUE would also apply
to a broken PT-symmetric propagator, assuming the state $\ket{\psi}$ used for the estimation
has support only on the PT-unbroken eigenvectors. Both the algorithms for QEUE and QERE
could be helpful in estimation of ground energies for PT-symmetric systems.

Our algorithm for QEUE is the first efficient algorithm that successfully extends QPE to non-unitary matrices. 
In \S~\ref{sec:bgpreviouswork}, we discuss the challenges in extending known 
approaches for QPE and QEVE to QEUE.
For QEVE, the algorithm based on Chebyshev history state generation~\cite{LS24} 
was the state-of-the-art before this work.
If $A$ is diagonalizable, then the complexity of our algorithm for QERE
matches this algorithm (see \S~\ref{sec:bgpreviouswork} for more details).
If $A$ is not diagonalizable, the complexity of our algorithm  has better dependence on $\alpha_A$  while matching ~\cite{LS24} in all other input parameters.
As mentioned before, both 
Chebyshev-history state based and Fourier-state based~\cite{Sha22,SL22} algorithms for QEVE assume
that all eigenvalues of $A$ are real. Their techniques
do not extend to the case of complex eigenvalues in a straightforward manner,
particularly when at least one eigenvalue of $A$ has positive imaginary part. In such a case, 
$A$ generates unstable dynamics, leading to an exponential query cost for a
Fourier-state based approach for QERE.  
The approach based on Chebyshev history state generation is limited
by the condition number of the matrix associated to the generator of
the Chebyshev functions, which grows exponentially when $A$ has complex eigenvalues.

Regarding the techniques used, a state similar to the resolvent state in Eq.~\eqref{eq:qpestate} was used for matrix-function implementation using  the Cauchy integral approach~\cite{TOSU22}. The utility of this state for phase estimation was not recognized; in contrast, this approach was thought to circumvent the (standard) QPE. Our work establishes that this approach indeed performs imperfect QPE, which surprisingly allows for efficient function implementation, countering the usual notion that QPE-based approaches are inefficient for this task.

\paragraph{Organization:}
The organization of the rest of the paper is as follows. In \S~\ref{sec:bg}, we review some definitions and basic results on discretization of curves and integrals, Jordan form decomposition of matrices, the block encoding framework for matrix arithmetic and previous results on quantum phase and eigenvalue estimation. In \S~\ref{sec:qeve}, we formalize the problem
of estimating real eigenvalues of a given non-normal matrix, and construct the algorithm based on
the resolvent approach. In \S~\ref{sec:qpe}, we formalize an analogous problem for estimating the unimodular eigenvalues of a non-normal matrix and construct a resolvent-based algorithm for the same. Finally in \S~\ref{sec:peve}, we discuss extension of our approach to estimation of eigenvalues on a parametrized curve in the complex plane.

\section{Background and preliminaries}
\label{sec:bg}
In this section, we review the concepts required for developing the quantum algorithms discussed in this paper. We also include some technical lemmas that we use in later sections.

\subsection{Discretization of a curve and Riemann sum}
We start this section by reviewing the definitions of a curve in a complex plane, discretization of a curve and Riemann sum. We also provide a bound on the error obtained when approximating an integral on a curve with a discrete Riemann sum.\begin{definition}[Complex curves and their discretization]
    A continuous function $\gamma:[0,1] \to \mathbb{C}$ is called a {\it curve} in a complex plane. 
    A curve $\gamma$ is said to be {\it closed} if $\gamma(0) = \gamma(1)$. For an $N \in \mathbb{Z}^+$, an {\it $N$-discretization} of $\gamma$ is the ordered set
\begin{equation}
\label{eq:discrete_curve}
        [\gamma,N] := \left(\gamma(t_j)\right)_{j=0}^{N-1}, \quad 
        t_j := \frac{j}{N}.
\end{equation} 
An ordered set $\Gamma\in [0,1]^N$ of $N$ real numbers is called a {\it discrete curve} if it is an $N$-discretization of some complex curve $\gamma$.
\end{definition}
For a closed curve $\gamma$, it is sometimes easier to consider the extension of $\gamma$ to
$\tilde{\gamma}:\mathbb{R}\to \mathbb{C}$ such that $\tilde{\gamma}(t) = \gamma(t \mod 1)$.
In other words, we treat $t$ and $j$ as modular variables with 
\begin{equation}
    t = t\mod 1, \quad j = j \mod N.
\end{equation}
Such a treatment simplifies presentation of some proofs in \S~\ref{sec:qpe}.

\begin{definition}[Riemann sum]
\label{def:riemannsum}
For a continuous function $f:[0,1]\to \mathbb{C}$, an integer $N \in \mathbb{Z}^+$ and a closed interval 
$[t_{\rm min}, t_{\rm max}] \subseteq [0,1]$, 
the {\it Riemann sum} is defined as
\begin{equation}
        \sum_{t_{\rm min}}^{t_{\rm max}}[f,N]:= \frac{1}{N}\sum_{j = \lceil Nt_{\rm min} \rceil}^{\lfloor Nt_{\rm max}\rfloor}f(j/N).
\end{equation}

\end{definition}
\begin{lemma}[Error in Riemann sum]
\label{lem:riemannsum}
    For a differentiable function $f:(0,1)\to \mathbb{C}$, an integer $N\mathbb{Z}^+$ and a closed interval 
$[t_{\rm min}, t_{\rm max}] \subseteq [0,1]$, we have
\begin{equation}
        \abs{\int_{t_{\rm min}}^{t_{\rm max}} f(t)dt - \sum_{t_{\rm min}}^{t_{\rm max}}[f,N]} \le 
        \frac{1}{N}\left\{\frac{(t_{\rm max}-t_{\rm min})^2}{2}
        \left(\max_{t\in(t_{\rm min},t_{\rm max})}\abs{\frac{df}{dt}}\right) + 2\max_{t\in(t_{\rm min},t_{\rm max})}
        \abs{f(t)}\right\}.
\end{equation}
\end{lemma}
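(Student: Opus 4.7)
The plan is to separate the error into contributions from the two boundary intervals---where $t_{\min}$ and $t_{\max}$ fall between consecutive discretization points $j/N$---and from the bulk interior, where the Riemann sum and the integral can be compared subinterval by subinterval.

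Set $j_1 := \lceil N t_{\min}\rceil$ and $j_2 := \lfloor N t_{\max}\rfloor$, so that by construction $0 \le j_1/N - t_{\min} \le 1/N$ and $0 \le (j_2+1)/N - t_{\max} \le 1/N$. I would first decompose
\begin{equation}
\int_{t_{\min}}^{t_{\max}} f(t)\,dt \;=\; \int_{t_{\min}}^{j_1/N} f(t)\,dt \;+\; \sum_{j=j_1}^{j_2}\int_{j/N}^{(j+1)/N} f(t)\,dt \;-\; \int_{t_{\max}}^{(j_2+1)/N} f(t)\,dt.
\end{equation}
The two boundary integrals each range over an interval of length at most $1/N$, so each contributes at most $\frac{1}{N}\max|f|$ in absolute value, which together produce the $\frac{2}{N}\max|f|$ term of the stated bound.

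For the bulk I would subtract the Riemann sum term by term. Applying the mean value inequality $|f(t) - f(j/N)| \le (t - j/N)\max|df/dt|$ on $[j/N,(j+1)/N]$ and integrating,
\begin{equation}
\left|\int_{j/N}^{(j+1)/N}\bigl(f(t) - f(j/N)\bigr)\,dt\right| \;\le\; \max\left|\frac{df}{dt}\right|\int_{j/N}^{(j+1)/N}(t - j/N)\,dt \;=\; \frac{1}{2N^2}\max\left|\frac{df}{dt}\right|,
\end{equation}
and summing over the $j_2 - j_1 + 1 \le N(t_{\max} - t_{\min}) + 1$ bulk subintervals feeds the derivative term in the lemma. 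Combining with the boundary contribution, reindexing, and invoking $t_{\max} - t_{\min}\in[0,1]$ to absorb the leftover $1/N^2$ correction into the $\frac{2}{N}\max|f|$ term completes the assembly.

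The main obstacle, as I see it, is extracting the stated $(t_{\max}-t_{\min})^2/2$ factor rather than the $(t_{\max}-t_{\min})/2$ that the direct subinterval sum produces. Since $t_{\max}-t_{\min}\le 1$, the stated bound is tighter than the straightforward estimate on the derivative term, so a slightly sharper argument is needed---for instance, by exploiting telescoping cancellations in the Riemann sum or by refining the per-subinterval bound using a secondary smallness from the total variation of $f$ across the bulk region, rather than only across individual subintervals. Beyond this, the proof is pure bookkeeping; the only additional point requiring minor care is tracking the signs of the two boundary integrals, since $(j_2+1)/N > t_{\max}$ generically and degenerate cases where $Nt_{\min}$ or $Nt_{\max}$ is an integer must be handled separately.
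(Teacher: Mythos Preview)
Your decomposition into boundary and bulk is exactly what the paper does: it bounds the two boundary slivers by $2\max|f|/N$ and then cites a calculus textbook for the left-endpoint Riemann-sum error on the aligned interval $[\lceil Nt_{\min}\rceil/N,(\lfloor Nt_{\max}\rfloor+1)/N]$, combining the two by the triangle inequality. There is no additional idea beyond what you wrote.

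Your concern about the coefficient is well founded, and the paper does not resolve it. The textbook estimate it invokes has the form $\frac{(b-a)^2}{2n}\max|f'|$ with $n$ the \emph{number of subintervals}; here $n\approx N(t_{\max}-t_{\min})$, not $N$, so correctly applied it yields $\frac{t_{\max}-t_{\min}}{2N}\max|f'|$---precisely what your per-subinterval computation produces. The paper has evidently substituted $N$ for $n$, and the resulting intermediate inequality is false as written: take $f(t)=t$, $t_{\min}=0$, $t_{\max}=1/2$, $N=2$, where the bulk error is $1/4$ but the claimed bound is $1/16$. There is no telescoping or sharper argument that recovers the squared factor, so you should not search for one. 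Fortunately none of this damages the paper: every downstream application of the lemma immediately replaces the prefactor by $1/2$ using $t_{\max}-t_{\min}\le 1$, and for that the linear version you have already established suffices.
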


\begin{proof}
First note that 
\begin{equation}
\abs{\int_{t_{\rm min}}^{t_{\rm max}} f(t)dt - \int_{(\lceil Nt_{\rm min} \rceil)/N}^{(\lfloor Nt_{\rm max}\rfloor+1)/N} f(t)dt}
\le \frac{2\max_{t\in(t_{\rm min},t_{\rm max}})\abs{f(t)}}{N}.
\end{equation}
It is known that~\cite{hughes2020calculus}
\begin{equation}
\abs{\int_{(\lceil Nt_{\rm min} \rceil)/N}^{(\lfloor Nt_{\rm max}\rfloor+1)/N} f(t)dt - \sum_{t_{\rm min}}^{t_{\rm max}}[f,N]} \le 
       \frac{(t_{\rm max}-t_{\rm min})^2}{2N}\left(\max_{t\in(t_{\rm min},t_{\rm max})}\abs{\frac{df}{dt}}\right).
\end{equation}
The statement of the Lemma now follows by triangle inequality.
\end{proof}

\subsection{Jordan form decomposition and Kreiss constants for a matrix}

In this section, we review the Jordan form decomposition of matrices and review some useful results on the Kreiss constants. We closely follow the presentation of former topic in Ref.~\cite{LS24}.

We denote the space of complex $N\times N$ matrices by $\mathbb{C}^{N \times N}$. The spectrum of a matrix $C \in \mathbb{C}^{N \times N}$ is denoted by $\Lambda(C)$. 

\begin{proposition}[Jordan form decomposition]
    The following statements hold for a matrix $C\in\mathbb{C}^{N\times N}$:
    There exist an invertible matrix $T\in\mathbb{C}^{N\times N}$ and a block diagonal matrix 
        \begin{equation}
        \label{eq:jordan_block}
        \begin{aligned}
            J&=
            \begin{bmatrix}
                J(\lambda_0,d_0) & & &\\
                & J(\lambda_1,d_1) & &\\
                & &\ddots &\\
                & & &J(\lambda_{s-1},d_{s-1})
            \end{bmatrix}\in\mathbb{C}^{N\times N},\\
            J(\lambda_l,d_l)&=
            \begin{bmatrix}
                \lambda_l &  &  &  &  & \\
                1 & \lambda_l &  &  &  & \\
                0 & 1 & \lambda_l &  &  & \\
                \vdots & 0 & 1 & \ddots &  & \\
                \vdots & \ddots & \ddots & \ddots & \lambda_l & \\
                0 & \cdots & \cdots & 0 & 1 & \lambda_l\\
            \end{bmatrix}\in\mathbb{C}^{d_l\times d_l},
        \end{aligned}
        \end{equation}
        such that $C=TJT^{-1}$.
\end{proposition}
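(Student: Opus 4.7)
The plan is to establish the Jordan form through two classical ingredients: a primary decomposition into generalized eigenspaces, followed by a nilpotent basis construction within each such space. First I would invoke the fundamental theorem of algebra to factor the characteristic polynomial of $C$ as $p(x)=\prod_{l=0}^{s-1}(x-\lambda_l)^{m_l}$ with distinct $\lambda_l$. By the Cayley--Hamilton theorem $p(C)=0$, and since the factors $(x-\lambda_l)^{m_l}$ are pairwise coprime in $\mathbb{C}[x]$, Bezout's identity produces polynomials that, evaluated at $C$, give projectors $P_l$ onto the generalized eigenspaces $V_l:=\ker\bigl((C-\lambda_l I)^{m_l}\bigr)$ satisfying $\sum_l P_l=I$ and $P_l P_k=\delta_{lk}P_l$. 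Each $V_l$ is $C$-invariant, yielding the direct sum decomposition $\mathbb{C}^N=\bigoplus_l V_l$ and reducing the problem to the restriction of $C$ to each $V_l$.

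On $V_l$, the operator $N_l:=(C-\lambda_l I)|_{V_l}$ is nilpotent, so the task reduces to showing that any nilpotent operator $N$ on a finite-dimensional space $V$ admits a basis in which it is block diagonal with each block having zeros on the diagonal and ones on the subdiagonal. For this I would use the increasing filtration $\{0\}=K_0\subsetneq K_1\subsetneq\cdots\subsetneq K_r=V$ with $K_k=\ker(N^k)$ and proceed from the top: lift a basis of $K_r/K_{r-1}$ to representatives in $V$, and apply $N,N^2,\ldots,N^{r-1}$ to each to generate Jordan chains of length $r$. Since $N$ induces an injection $\overline{N}:K_k/K_{k-1}\hookrightarrow K_{k-1}/K_{k-2}$, at each subsequent filtration level one extends the already-propagated images to a full basis of $K_{k-1}/K_{k-2}$ and iterates. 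Concatenating the resulting chains yields a basis of $V$ in which $N$ takes exactly the displayed subdiagonal form; adding back $\lambda_l I$ converts each block into $J(\lambda_l,d_l)$.

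The main obstacle is verifying that the vectors collected across all filtration levels form a basis rather than merely a spanning set or a linearly independent family. I would argue by downward induction on $k$: assuming the chains generated from $K_j/K_{j-1}$ for $j>k$ contribute exactly $\dim(K_j/K_{j-1})$ vectors at filtration level $k$, a dimension count using the injectivity of $\overline{N}$ shows that these propagated vectors, together with a choice of complementary representatives, form a basis of $K_k/K_{k-1}$. Summing these dimensions telescopes to $\dim V$, ruling out any redundancy. Finally, I would assemble Jordan bases for every $V_l$, arrange their concatenation as the columns of an invertible matrix $T$, and observe that $T^{-1}CT$ is block diagonal with the blocks being $J(\lambda_l,d_l)$, which rearranges to $C=TJT^{-1}$ as claimed.
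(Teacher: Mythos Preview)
Your proof outline is a correct and standard route to the Jordan canonical form: primary decomposition via Cayley--Hamilton and Bezout, followed by the filtration-by-kernels construction of Jordan chains for the nilpotent part. The dimension-counting argument you sketch for linear independence is the usual one and would go through.

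However, the paper does not actually prove this proposition. It is stated as a background result from linear algebra (the section opens by noting that the presentation follows Ref.~\cite{LS24}), and no proof is supplied. So there is no ``paper's own proof'' to compare against; your proposal simply fills in a proof where the paper chose to quote a classical theorem.
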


\begin{proposition}[Jordan form decomposition of a function of a matrix]
Given a matrix $C$ with Jordan form decomposition $C=TJT^{-1}$, 
if a scalar function $f(x)$ is analytic at all eigenvalues of $C$, then
\begin{equation}
\label{eq:jordan_form_transformation}
\begin{aligned}
    f(C)&=Tf(J)T^{-1}
    =T
    \begin{bmatrix}
        f(J(\lambda_0,d_0)) & & &\\
        & f(J(\lambda_1,d_1)) & &\\
        & &\ddots &\\
        & & &f(J(\lambda_{s-1},d_{s-1}))     
    \end{bmatrix}T^{-1},\\
    f(J(\lambda_l,d_l))&=
    \begin{bmatrix}
        f(\lambda_l) &  &  &  &  & \\
        f'(\lambda_l) & f(\lambda_l) &  &  &  & \\
        \frac{f^{(2)}(\lambda_l)}{2!} & f'(\lambda_l) & f(\lambda_l) &  &  & \\
        \vdots & \frac{f^{(2)}(\lambda_l)}{2!} & f'(\lambda_l) & \ddots &\\
        \vdots & \ddots & \ddots & \ddots & \ddots & \\
        \frac{f^{(d_l-1)}(\lambda_l)}{(d_l-1)!} & \cdots & \cdots & \frac{f^{(2)}(\lambda_l)}{2!} & f'(\lambda_l) & f(\lambda_l)\\
    \end{bmatrix}.
\end{aligned}
\end{equation}
In particular, for $f(C) = (z\mathds{1}-C)^{-1}$ for some $z\notin \Lambda(C)$, 
\begin{multline}
    f(J(\lambda_l,d_l))=(z\mathds{1}-J(\lambda_l,d_l))^{-1}\\
    =
    \begin{bmatrix}
        (z-\lambda_l)^{-1} &  &  &  &  & \\
        (z-\lambda_l)^{-2} & (z-\lambda_l)^{-1} &  &  &  & \\
        (z-\lambda_l)^{-3} & (z-\lambda_l)^{-2} & (z-\lambda_l)^{-1} &  &  & \\
        \vdots & (z-\lambda_l)^{-3} & (z-\lambda_l)^{-2} & \ddots &\\
        \vdots & \ddots & \ddots & \ddots & \ddots & \\
        (z-\lambda_l)^{-d_l} & \cdots & \cdots & (z-\lambda_l)^{-3} & (z-\lambda_l)^{-2} & 
        (z-\lambda_l)^{-1}\\
    \end{bmatrix}.
\end{multline}
\end{proposition}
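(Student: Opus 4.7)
The plan is to prove both identities by reducing to a single Jordan block and then invoking the nilpotent expansion of $f$ at the corresponding eigenvalue. First I would note that since $J$ is block diagonal, any polynomial $p(J)$ is block diagonal with blocks $p(J(\lambda_l,d_l))$. Taking the standard definition of $f(C)$ for a function analytic on a neighbourhood of $\Lambda(C)$ (via a matrix Taylor series centred at each eigenvalue, or equivalently via the Cauchy integral $\frac{1}{2\pi i}\oint f(z)(z\mathds{1}-C)^{-1}dz$), this block-diagonal structure is preserved, so it suffices to identify $f(J(\lambda_l,d_l))$ for a single Jordan block.

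Next I would write $J(\lambda_l,d_l)=\lambda_l\mathds{1}_{d_l}+N_{d_l}$, where $N_{d_l}$ is the nilpotent matrix with ones on the first subdiagonal and zeros elsewhere. A direct inductive calculation shows that $(N_{d_l})^k$ has ones on the $k$-th subdiagonal and is zero for $k\ge d_l$. Expanding $f$ in its Taylor series at $\lambda_l$, namely $f(x)=\sum_{k\ge 0}\frac{f^{(k)}(\lambda_l)}{k!}(x-\lambda_l)^k$, and substituting $x=J(\lambda_l,d_l)$, every term with $k\ge d_l$ vanishes by nilpotency. This gives
\begin{equation}
f(J(\lambda_l,d_l))=\sum_{k=0}^{d_l-1}\frac{f^{(k)}(\lambda_l)}{k!}(N_{d_l})^k,
\end{equation}
which upon collecting the subdiagonals is exactly the lower triangular Toeplitz matrix displayed in the proposition.

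To obtain $f(C)=Tf(J)T^{-1}$, I would use that this identity holds trivially for polynomials, since $TJ^nT^{-1}=C^n$ implies $Tp(J)T^{-1}=p(C)$ for every polynomial $p$, and then extend to analytic $f$ by uniform approximation of the Taylor series on a compact neighbourhood of $\Lambda(C)$ (equivalently, by pulling the similarity inside the Cauchy integral, using $(z\mathds{1}-C)^{-1}=T(z\mathds{1}-J)^{-1}T^{-1}$). The resolvent case is then a specialisation: for $f(x)=(z-x)^{-1}$ with $z\notin\Lambda(C)$, one computes $f^{(k)}(x)=k!\,(z-x)^{-(k+1)}$, hence $\frac{f^{(k)}(\lambda_l)}{k!}=(z-\lambda_l)^{-(k+1)}$, which matches the displayed entries.

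The only genuinely delicate step is the consistency of the definition of $f$ applied to a matrix; once one fixes either the Taylor-series or holomorphic functional calculus viewpoint (and both agree on a neighbourhood of $\Lambda(C)$ where $f$ is analytic), the rest of the argument is purely algebraic bookkeeping. I would therefore expect the presentation burden to lie in stating the functional calculus clearly, rather than in any nontrivial estimate.
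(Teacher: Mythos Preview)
Your argument is correct and follows the standard textbook route: reduce to a single Jordan block, write it as $\lambda_l\mathds{1}+N$ with $N$ nilpotent, truncate the Taylor series by nilpotency, and specialise to $f(x)=(z-x)^{-1}$. There is nothing to compare against, since the paper states this proposition as a reviewed fact without proof; your write-up would serve perfectly well as the missing justification.
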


\begin{definition}[Jordan Condition Number of a Matrix]
    \label{def:jordcond}
    the Jordan condition number $\bar{\kappa}_C$ of $C\in\mathbb{C}^{N\times N}$ is
    \begin{equation}
        \bar{\kappa}_C := \min_{T \in \mathcal{T}}\norm{T}\norm{T^{-1}},
    \end{equation}
    where 
    \begin{equation}
    \mathcal{T} = \{T: C = TJT^{-1} \text{ is a Jordan decomposition of $C$}\}.
    \end{equation}
\end{definition}
\begin{lemma}[Spectral norm bound for matrices]
\label{lem:block_norm}
For any  $C\in\mathbb{C}^{N\times N}$,
    \begin{equation}
        \norm{C} \leq
        \sqrt{\max_{1\leq k\leq N}\sum_{j=1}^N|C_{jk}|}
        \sqrt{\max_{1\leq j\leq N}\sum_{k=1}^N|C_{jk}|}.
    \end{equation}
\end{lemma}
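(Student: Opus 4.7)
The plan is to reduce the bound to the standard interpolation estimate
\begin{equation*}
\norm{C} \le \sqrt{\norm{C}_1 \cdot \norm{C}_\infty},
\end{equation*}
where $\norm{C}_1 := \max_k \sum_j |C_{jk}|$ is the maximum absolute column sum and $\norm{C}_\infty := \max_j \sum_k |C_{jk}|$ is the maximum absolute row sum; the right-hand side of the lemma statement is then precisely $\sqrt{\norm{C}_1\,\norm{C}_\infty}$.

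First I would fix a unit vector $v \in \mathbb{C}^N$ and expand $\norm{Cv}^2 = \sum_j \abs{\sum_k C_{jk} v_k}^2$. The key step is to apply Cauchy--Schwarz to the inner sum in the factorized form
\begin{equation*}
\abs{\sum_k C_{jk} v_k}^2 \;=\; \abs{\sum_k \sqrt{|C_{jk}|}\cdot\bigl(\sqrt{|C_{jk}|}\,e^{i\theta_{jk}}\,v_k\bigr)}^2 \;\le\; \biggl(\sum_k |C_{jk}|\biggr)\biggl(\sum_k |C_{jk}|\,|v_k|^2\biggr),
\end{equation*}
where $e^{i\theta_{jk}} = C_{jk}/|C_{jk}|$ with the convention that this phase is $0$ when $C_{jk}=0$. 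This particular splitting is essential because it distributes $|C_{jk}|$ into one factor that gives a row-sum bound and a second factor that, after an interchange of summations, gives a column-sum bound.

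Next I would bound the first factor uniformly by $\norm{C}_\infty$, sum over $j$, interchange the $j$ and $k$ summations, and bound $\sum_j |C_{jk}| \le \norm{C}_1$, obtaining
\begin{equation*}
\norm{Cv}^2 \;\le\; \norm{C}_\infty \sum_k |v_k|^2 \sum_j |C_{jk}| \;\le\; \norm{C}_1\,\norm{C}_\infty\,\norm{v}^2 \;=\; \norm{C}_1\,\norm{C}_\infty.
\end{equation*}
Taking square roots and passing to the supremum over unit $v$ yields the lemma. I do not anticipate any real obstacle; the only subtlety is selecting the precise Cauchy--Schwarz factorization so that one column sum and one row sum emerge with matching $1/2$ powers. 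This inequality is a finite-dimensional avatar of Riesz--Thorin interpolation between the $\ell^1$- and $\ell^\infty$-induced operator norms and, by construction, is tight when $C$ is a permutation of a diagonal matrix.
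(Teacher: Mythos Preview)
Your proof is correct; this is the standard Schur-test/Cauchy--Schwarz argument for the interpolation bound $\norm{C}\le\sqrt{\norm{C}_1\norm{C}_\infty}$. The paper states this lemma without proof as a known background result, so there is no paper proof to compare against.
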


\begin{proposition}[Spectral norm bound on the resolvent]
\label{lem:resolventbound}
For a matrix $C$ with Jordan form decomposition $C=TJT^{-1}$, let $d = \max_l\{d_l\}$ be the dimension of the largest Jordan block. Let $z \notin \Lambda(C)$ and $\delta = \min_l|z-\lambda_l|$ be the distance
between $z$ and the eigenvalue $\lambda_l$ of $C$. Suppose $\delta < 1$. Then  
\begin{equation}
    \norm{(z\mathds{1}-C)^{-1}} \le  \frac{\bar{\kappa}_C (1-\delta^d)}{\delta^d(1-\delta)}.
\end{equation}
\end{proposition}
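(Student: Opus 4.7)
The plan is to reduce the problem to bounding the norm of the resolvent of a single Jordan block, exploiting the block-diagonal structure of $(z\mathds{1}-J)^{-1}$ given by the preceding proposition on Jordan decompositions of matrix functions.

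First I would pick a Jordan decomposition $C = TJT^{-1}$ that nearly attains the minimum defining $\bar\kappa_C$ (or work with a generic one and take the infimum at the end), and write
\begin{equation}
(z\mathds{1}-C)^{-1} = T (z\mathds{1}-J)^{-1} T^{-1},
\end{equation}
so that submultiplicativity of the spectral norm gives $\|(z\mathds{1}-C)^{-1}\| \le \|T\|\,\|T^{-1}\|\,\|(z\mathds{1}-J)^{-1}\|$. Because $(z\mathds{1}-J)^{-1}$ is block diagonal with blocks $(z\mathds{1}-J(\lambda_l,d_l))^{-1}$, its spectral norm equals $\max_l \|(z\mathds{1}-J(\lambda_l,d_l))^{-1}\|$, reducing the task to a single Jordan block.

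Next I would use the explicit lower-triangular Toeplitz form of $(z\mathds{1}-J(\lambda_l,d_l))^{-1}$ displayed above, in which the $k$th subdiagonal entry equals $(z-\lambda_l)^{-k}$ for $k=1,\dots,d_l$. By the hypothesis $|z-\lambda_l| \ge \delta$ and $\delta<1$, every such entry has modulus at most $\delta^{-k}$, so each row sum and each column sum is bounded by
\begin{equation}
\sum_{k=1}^{d_l} \delta^{-k} \;\le\; \sum_{k=1}^{d} \delta^{-k} \;=\; \frac{1-\delta^d}{\delta^d(1-\delta)},
\end{equation}
where the inequality uses $\delta<1$ (so extra terms only increase the sum) and the closed form is just a geometric series. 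Lemma~\ref{lem:block_norm} then yields
\begin{equation}
\|(z\mathds{1}-J(\lambda_l,d_l))^{-1}\| \;\le\; \frac{1-\delta^d}{\delta^d(1-\delta)}.
\end{equation}

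Combining these bounds and taking the infimum over admissible $T$ gives the claim
\begin{equation}
\|(z\mathds{1}-C)^{-1}\| \;\le\; \bar\kappa_C \cdot \frac{1-\delta^d}{\delta^d(1-\delta)}.
\end{equation}
There is no real obstacle; the only point requiring mild care is the uniformization of the per-block bound by passing from $d_l$ to the global maximum $d$, which is justified precisely because $\delta<1$ makes $\delta^{-k}$ increasing in $k$. If one wanted tighter constants, one could instead invoke the Frobenius norm on each Toeplitz block, but the row/column sum route via Lemma~\ref{lem:block_norm} already matches the stated bound exactly.
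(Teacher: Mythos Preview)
Your proof is correct and follows essentially the same route as the paper: factor through the Jordan form, then bound $\|(z\mathds{1}-J)^{-1}\|$ via Lemma~\ref{lem:block_norm} using the explicit lower-triangular Toeplitz structure and a geometric-series estimate. The only cosmetic difference is that you first reduce to a single block before applying Lemma~\ref{lem:block_norm}, whereas the paper applies the lemma directly to the full block-diagonal matrix; the resulting bound is identical.
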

\begin{proof}
Observe that 
\begin{equation}
        \norm{(z\mathds{1}-C)^{-1}} = \norm{T(z\mathds{1}-J)^{-1}T^{-1}}
        \le \bar{\kappa}_C \norm{(z\mathds{1}-J)^{-1}}.
    \end{equation}
By Lemma~\ref{lem:block_norm}, we have $\norm{(z\mathds{1}-J)^{-1}} 
    \le \delta^{-d}(1-\delta^d)/(1-\delta)$,
which completes the proof.
\end{proof}

We now discuss two kinds of Kreiss constants, related to the transient growth of the continuous-time and discrete-time dynamics generated by a matrix, respectively. 
\begin{proposition}[Kreiss matrix theorem \cite{TE05}]
 For an arbitrary $C\in\mathbb{C}^{N\times N}$, the following bounds are optimal:
 \begin{enumerate}
     \item 
     \begin{equation}
     \mathcal{K}(C)\leq \sup_{t\geq 0}\|\exp(Ct)\|\leq eN\mathcal{K}(C)\,,
 \end{equation}
 where $e$ is the Euler constant and 
 the Kreiss constant with respect to left-half plane, $\mathcal{K}(C)$, is defined as
 \begin{equation}
 \label{eq:kreisslin}
     \mathcal{K}(C)=\sup_{\Re(z)>0}\Re(z)\|(z I-C)^{-1}\|\,.
 \end{equation}

 \item 
 \begin{equation}
     \mathcal{K}^\circ(C)\leq \sup_{k\geq 0}\|C^k\|\leq eN\mathcal{K}^\circ(C)\,,
 \end{equation}
 where $e$ is the Euler constant and 
 the Kreiss constant with respect to the unit circle, $\mathcal{K}^\circ(C)$, is defined as
 \begin{equation}
 \label{eq:kreisscirc}
     \mathcal{K}^\circ(C)=\sup_{|z|>1}(|z|-1)\|(z I-C)^{-1}\|\,.
 \end{equation}
 \end{enumerate}
 
\end{proposition}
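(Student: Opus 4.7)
The plan is to establish the two statements separately, treating the lower bound (from Kreiss constant to transient growth) and the upper bound (from transient growth to $eN$ times the Kreiss constant) by quite different routes. The two parts are related to one another by the Cayley transform $z\mapsto (z+1)/(z-1)$, which sends the right half-plane to the exterior of the unit disk and conjugates $C$ to $(I+C)(I-C)^{-1}$, so in principle one case reduces to the other; I would nevertheless prove both directly for transparency.

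For the lower bounds, the tool in each case is an integral representation of the resolvent. In the continuous case, for $\Re(z)$ larger than every $\Re(\lambda)$ with $\lambda\in\Lambda(C)$ one has $(zI-C)^{-1}=\int_0^\infty e^{-zt}e^{Ct}\,dt$; taking the operator norm under the integral and using $\Re(z)>0$ gives $\|(zI-C)^{-1}\|\leq \sup_{t\geq 0}\|e^{Ct}\|/\Re(z)$, hence $\mathcal{K}(C)\leq \sup_{t\geq 0}\|e^{Ct}\|$. The discrete analog uses the Neumann series $(zI-C)^{-1}=\sum_{k\geq 0} z^{-k-1}C^k$ for $|z|$ larger than every $|\lambda|$ with $\lambda\in\Lambda(C)$, which yields $\mathcal{K}^\circ(C)\leq \sup_{k\geq 0}\|C^k\|$ by the triangle inequality.

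For the upper bounds, I would start from the Dunford--Cauchy representations $C^n=\frac{1}{2\pi i}\oint_{|z|=r} z^n(zI-C)^{-1}\,dz$ and $e^{Ct}=\frac{1}{2\pi i}\int_\Gamma e^{zt}(zI-C)^{-1}\,dz$, taking $r=1+1/n$ in the discrete case and $\Gamma$ a vertical line $\Re(z)=1/t$ in the continuous one. A brute-force estimate using $\|(zI-C)^{-1}\|\leq n\mathcal{K}^\circ(C)$ on the chosen circle together with $|z|^n\leq (1+1/n)^n\leq e$ yields only $\|C^n\|\leq en\mathcal{K}^\circ(C)$, the classical but dimension-blind Kreiss bound. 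To replace $n$ by $N$, I would pair the matrix-valued integral with unit vectors $u,v$ to obtain the scalar function $f(z)=\langle u,(zI-C)^{-1}v\rangle$, which is a rational function of degree at most $N$, integrate by parts in $z$ so that $z^n$ becomes $z^{n+1}/(n+1)$ and $f$ becomes $f'$, and then invoke Spijker's lemma: for any rational $f$ of degree $\leq N$ with poles inside $|z|=r$, $\oint_{|z|=r}|f'(z)|\,|dz|\leq 2\pi N\sup_{|z|=r}|f(z)|$. Combining this with the resolvent bound $\sup_{|z|=r}|f(z)|\leq \mathcal{K}^\circ(C)/(r-1)=n\mathcal{K}^\circ(C)$, the $(1+1/n)^{n+1}\to e$ factor, and a supremum over unit $u,v$ gives $\|C^n\|\leq eN\mathcal{K}^\circ(C)$. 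The continuous case is entirely parallel, using the half-plane form of Spijker's lemma on $\Re(z)=1/t$ and the factor $e^{zt}$ in place of $z^n$.

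The genuine obstacle is Spijker's lemma itself; everything else is contour-integral bookkeeping. Its proof relies on a partial-fraction decomposition of $f'$ together with a lateral-integration argument that exploits the $1/(z-\lambda_k)^2$ structure of the singularities to produce the crucial \emph{linear} rather than quadratic dependence on $N$. Finally, sharpness of the constant $eN$ would be argued by exhibiting a nearly extremal family of $N\times N$ Jordan blocks (e.g.\ $J(1-1/N,N)$ in the continuous case and a suitable discrete analog) for which both sides of the inequality can be computed to leading order and shown to saturate the bound.
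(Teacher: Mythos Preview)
Your outline is correct and follows the classical route to the Kreiss matrix theorem: the lower bounds via the Laplace/Neumann integral representations of the resolvent, and the upper bounds via the Cauchy integral for $C^n$ (resp.\ $e^{Ct}$), integration by parts, and Spijker's lemma applied to the scalar rational function $f(z)=\langle u,(zI-C)^{-1}v\rangle$. The arithmetic with $r=1+1/n$ works out exactly as you indicate (after the integration by parts the factor is $(1+1/n)^{n+1}\cdot\frac{n}{n+1}=(1+1/n)^n<e$), and the restriction to $\mathcal{K}^\circ(C)<\infty$ is harmless since otherwise the upper bound is vacuous and all poles of $f$ then lie in the closed unit disk, hence inside $|z|=r$. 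The continuous case on the line $\Re(z)=1/t$ is handled the same way, with the unbounded-contour version of Spijker's lemma. Your remark on sharpness via near-extremal Jordan blocks is also the standard one.

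The only point of comparison with the paper is that there is nothing to compare: the paper does not prove this proposition at all. It is stated as background with a citation to Trefethen and Embree, \emph{Spectra and Pseudospectra}, and used only to motivate the restricted Kreiss constants $\mathcal{K}_\delta$ and $\mathcal{K}_\delta^\circ$ that appear in the complexity bounds. So your proposal is not an alternative to the paper's proof but a self-contained sketch of the cited result; as such it is accurate, though the hard work you correctly flag (Spijker's lemma) is itself a nontrivial lemma whose proof you would still need to supply or cite.
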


\begin{definition}[Restricted Kreiss constants]
\label{def:Kdelta}\label{def:Kdeltacirc}
For an arbitrary $C\in\mathbb{C}^{N\times N}$ and a $\delta >0$, define
\begin{equation}
     \mathcal{K}_\delta(C)=\delta \sup_{y\in\mathbb{R}}\|((\delta + iy) I-C)^{-1}\|\,.
 \end{equation}
and 
\begin{equation}
     \mathcal{K}_\delta^\circ(C)=\delta \sup_{|z| = 1+\delta}\|(z I-C)^{-1}\|\,.
 \end{equation}
\end{definition}
Notice that $\mathcal{K}(C) = \sup_{\delta > 0}\mathcal{K}_\delta(C)$, and therefore 
$\mathcal{K}_\delta(C) \le \mathcal{K}(C)$. Moreover, $\mathcal{K}_\delta(C)$ can be finite
for some values of $\delta$ even if $\mathcal{K}(C)$ is unbounded, as is the case for 
any matrix $C$ with a defective imaginary eigenvalue. Similarly,
$\mathcal{K}^\circ(C) = \sup_{\delta > 0}\mathcal{K}_\delta^\circ(C)$, and therefore 
$\mathcal{K}_\delta^\circ(C) \le \mathcal{K}^\circ(C)$. 
Moreover, $\mathcal{K}_\delta^\circ(C)$ can be finite
for some values of $\delta$ even if $\mathcal{K}^\circ(C)$ is unbounded, as is the case for 
any matrix $C$ with a defective eigenvalue on the unit circle.

\begin{lemma}[Spectral norm bound on restricted Kreiss constants]
\label{prop:kreissbound}
For a matrix $C$ with Jordan form decomposition $C=SJS^{-1}$, let $d = \max_l\{d_l\}$ be the dimension of the largest Jordan block.
\begin{enumerate}
    \item If $\Re(\lambda_l) \notin (0,2\delta)$ $\forall l$ and for some $\delta \in (0,1)$.
Then  
\begin{equation}
    \mathcal{K}_\delta(C) \le  \bar{\kappa}_C \left(\frac{1}{\delta}\right)^{d-1}\left(\frac{1-\delta^d}{1-\delta}\right).
\end{equation}
    \item If $|\lambda_l| \notin (1,1+2\delta)$ $\forall l$ for some $\delta \in (0,1)$.
    Then  
    \begin{equation}
        \mathcal{K}_\delta^\circ(C) \le  \bar{\kappa}_C \left(\frac{1}{\delta}\right)^{d-1}\left(\frac{1-\delta^d}{1-\delta}\right).
    \end{equation}
\end{enumerate}

\end{lemma}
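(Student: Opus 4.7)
The plan is to reduce both parts to the resolvent estimate of Proposition~\ref{lem:resolventbound}. Write the Jordan decomposition $C = TJT^{-1}$; then for any $z \notin \Lambda(C)$,
\begin{equation}
\|(zI - C)^{-1}\| \;\le\; \|T\|\,\|T^{-1}\|\,\|(zI - J)^{-1}\| \;\le\; \bar{\kappa}_C\,\|(zI - J)^{-1}\|,
\end{equation}
where the minimum over admissible $T$'s in the definition of $\bar{\kappa}_C$ is used. Since $(zI-J)^{-1}$ is block diagonal with blocks $(zI - J(\lambda_l,d_l))^{-1}$ whose explicit entries are recorded in Eq.~\eqref{eq:jordan_form_transformation}, applying Lemma~\ref{lem:block_norm} to each block and using $d_l \le d$ yields, for any $z$ satisfying $\min_l |z - \lambda_l| \ge \delta$ with $\delta \in (0,1)$,
\begin{equation}
\|(zI - J)^{-1}\| \;\le\; \max_l \sum_{k=1}^{d_l} |z-\lambda_l|^{-k} \;\le\; \sum_{k=1}^{d} \delta^{-k} \;=\; \frac{1-\delta^{d}}{\delta^{d}(1-\delta)}.
\end{equation}
This is essentially the statement of Proposition~\ref{lem:resolventbound}, except that I use the distance lower bound $\delta$ rather than the true minimum distance; monotonicity of $x \mapsto \sum_{k=1}^d x^{-k}$ on $(0,1)$ justifies this replacement.

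\textbf{Part 1 (real-part gap).} I need to show that the hypothesis $\Re(\lambda_l) \notin (0, 2\delta)$ for every $l$ forces $|z - \lambda_l| \ge \delta$ for every $z$ with $\Re(z) = \delta$. For such $z = \delta + iy$,
\begin{equation}
|z - \lambda_l|^2 \;=\; (\delta - \Re(\lambda_l))^2 + (y - \Im(\lambda_l))^2 \;\ge\; (\delta - \Re(\lambda_l))^2,
\end{equation}
and the two cases $\Re(\lambda_l) \le 0$ and $\Re(\lambda_l) \ge 2\delta$ each give $|\delta - \Re(\lambda_l)| \ge \delta$. Combining with the resolvent bound above and multiplying by $\delta$, as per Definition~\ref{def:Kdelta}, produces the claim $\mathcal{K}_\delta(C) \le \bar{\kappa}_C (1-\delta^d)/(\delta^{d-1}(1-\delta))$.

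\textbf{Part 2 (modulus gap).} Here $z$ lies on the circle $|z| = 1 + \delta$, and the hypothesis $|\lambda_l| \notin (1, 1+2\delta)$ splits into $|\lambda_l| \le 1$ or $|\lambda_l| \ge 1 + 2\delta$. The reverse triangle inequality $|z-\lambda_l| \ge \bigl||z| - |\lambda_l|\bigr|$ gives $|z-\lambda_l| \ge 1 + \delta - |\lambda_l| \ge \delta$ in the first case and $|z-\lambda_l| \ge |\lambda_l| - (1+\delta) \ge \delta$ in the second. Plugging into the resolvent bound and multiplying by $\delta$ (Definition~\ref{def:Kdeltacirc}) yields the claimed bound on $\mathcal{K}_\delta^\circ(C)$.

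\textbf{Main obstacle.} The only nontrivial step is ensuring the uniform distance bound $\min_l |z - \lambda_l| \ge \delta$ over the entire vertical line $\Re(z)=\delta$ in Part~1 and the entire circle $|z| = 1+\delta$ in Part~2; this is exactly why the hypotheses exclude a strip of width $2\delta$ and an annulus of outer radius $1 + 2\delta$, respectively, rather than just the boundaries at $\delta$ and $1+\delta$. Once that geometric step is in place, the rest is a direct assembly of Lemma~\ref{lem:block_norm} and the Jordan form data.
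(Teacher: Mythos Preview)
Your proposal is correct and follows essentially the same approach as the paper: establish the uniform distance bound $\min_l |z-\lambda_l|\ge\delta$ from the spectral gap hypotheses, then invoke Proposition~\ref{lem:resolventbound} and multiply by $\delta$. Your write-up is in fact more careful than the paper's in two respects---you spell out the geometric case analysis for the distance bound (the paper simply asserts it), and you explicitly note the monotonicity needed to pass from the true minimum distance to the lower bound $\delta$ when citing the resolvent estimate.
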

\begin{proof}
\begin{enumerate}
    \item Recall that $\mathcal{K}_\delta(C)=\delta \sup_{y\in\mathbb{R}}\|((\delta + iy) I-C)^{-1}\|$. Since $\Re(\lambda_l) \notin (0,2\delta)$, we have $|(\delta + iy)-\lambda_l| \ge \delta$ $\forall y \in \mathbb{R}$. From Lemma~\ref{lem:resolventbound}, $\norm{\left((\delta + iy)\mathds{1}-C\right)^{-1}} 
\le  \bar{\kappa}_C \delta^{-d}(1-\delta^d)/(1-\delta)$ $\forall y \in \mathbb{R}$, which leads to the desired bound on $\mathcal{K}_\delta(C)$.
    \item The proof is very similar to the first part. We use the fact that any $z$ with $|z| = 1+\delta$ satisfies $|z-\lambda_l| \ge \delta$ $\forall l$. The desired bound follows from Lemma~\ref{lem:resolventbound}. 
\end{enumerate}
\end{proof}

\subsection{Matrix arithmetic in the block encoding framework}
In this section, we review the relevant results from the literature on block encoding of matrices.
\begin{definition}[Block encoding of a matrix \cite{GSLW19}]
\label{def:block encoding}
An $(n+a_C)$-qubit unitary operator $O_C$ is an $\alpha_C$-block encoding of $C\in \mathbb{C}^{2^n\times 2^n}$ 
for some $\alpha_C \in \mathbb{R}^+$ if 
\begin{equation}
    C/\alpha_C  =  \bra{0^{a_C}}O_C \ket{0^{a_C}}, 
\end{equation}
where $ \bra{0^{a_C}} O_C \ket{0^{a_C}}  := \sum_{i,j \in [2^n]}\left(\bra{0^{a_C}}\bra{i} O_C \ket{0^{a_C}}\ket{j} 
\right) \ket{i}\bra{j}$.
\end{definition}

\begin{lemma}[Linear combination of block-encoded matrices, Lemma~29 in~\cite{GSLW19}]
\label{lem:LCBE}
Given $\alpha_{C_1}$-block encoding $O_{C_1}$ of $C_1$ and $\alpha_{C_2}$-block encoding $O_{C_2}$ of $C_2$, 
an $\alpha_C$-block encoding $O_C$ of $C= (C_1+C_2)$ with $\alpha_C = \alpha_{C_1}+\alpha_{C_2}$
can be constructed with one use of controlled-$O_{C_1}$ and controlled-$O_{C_2}$ each.
\end{lemma}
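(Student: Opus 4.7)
The plan is to use the standard Linear Combination of Unitaries (LCU) trick. I will introduce one extra ancilla qubit (call its register $\mathsf{c}$) and a state-preparation unitary $V$ acting on $\mathsf{c}$ with
\begin{equation}
V\ket{0} = \sqrt{\tfrac{\alpha_{C_1}}{\alpha_C}}\,\ket{0} + \sqrt{\tfrac{\alpha_{C_2}}{\alpha_C}}\,\ket{1},
\qquad \alpha_C = \alpha_{C_1}+\alpha_{C_2}.
\end{equation}
The purpose of $V$ is to produce a superposition whose branch amplitudes are exactly the square roots of the subnormalizations that we want to combine. Since $\alpha_{C_1},\alpha_{C_2}\ge 0$, $V$ is realized by a single single-qubit rotation.

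Next I would define the select operation
\begin{equation}
\Lambda = \ket{0}\!\bra{0}_{\mathsf{c}} \otimes O_{C_1} \otimes I_{a_{C_2}} \;+\; \ket{1}\!\bra{1}_{\mathsf{c}} \otimes I_{a_{C_1}} \otimes O_{C_2},
\end{equation}
acting on $\mathsf{c}$ together with the two block-encoding ancilla registers and the system register. Each summand is implemented as one use of the controlled version of $O_{C_1}$ or $O_{C_2}$, respectively, giving the stated gate cost. I then propose
\begin{equation}
O_C = (V^{\dagger}\otimes I)\,\Lambda\,(V\otimes I),
\end{equation}
and set the combined ancilla register of $O_C$ to be $\mathsf{c}$ together with the two block-encoding ancilla registers initialized to $\ket{0}\ket{0^{a_{C_1}}}\ket{0^{a_{C_2}}}$.

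Verification is then a direct matrix element computation: using $\bra{0^{a_{C_j}}}O_{C_j}\ket{0^{a_{C_j}}} = C_j/\alpha_{C_j}$ from Definition~\ref{def:block encoding}, the projection onto the all-zero ancilla state selects the $\ket{0}\!\bra{0}_{\mathsf{c}}$ contribution after $V$ and $V^{\dagger}$, producing $(\sqrt{\alpha_{C_1}/\alpha_C})(\sqrt{\alpha_{C_1}/\alpha_C})\,C_1/\alpha_{C_1} + (\sqrt{\alpha_{C_2}/\alpha_C})(\sqrt{\alpha_{C_2}/\alpha_C})\,C_2/\alpha_{C_2} = (C_1+C_2)/\alpha_C$, as required. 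There is no real obstacle here; the only subtlety worth flagging is careful bookkeeping of which ancilla register belongs to which block encoding, so that the controlled versions of $O_{C_1}$ and $O_{C_2}$ do not accidentally touch each other's ancillas. Since this result is exactly Lemma~29 of~\cite{GSLW19}, I would in the paper simply cite it and omit the verification, but the construction above is the one that justifies the stated gate count of one controlled use of each $O_{C_j}$.
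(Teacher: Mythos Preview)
Your construction is correct and is exactly the standard LCU argument underlying Lemma~29 of~\cite{GSLW19}; the paper itself does not give a proof but simply cites that reference, so there is nothing further to compare.
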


\begin{lemma}[Optimal scaling quantum linear system algorithm {\cite{CAS+22}}]
\label{lem:opt_lin}
    Let $O_C$ be an $\alpha_C$-block encoding of $C$. Let $O_b$ be the oracle preparing the initial state $\ket{b}$. Then the quantum state
    \begin{equation}
        \frac{C^{-1}\ket{b}}{\norm{C^{-1}\ket{b}}}
    \end{equation}
    can be prepared with accuracy $\epsilon$ using
    \begin{equation}
    \label{eq:opt_lin_cost}
        \mathbf{O}\left(\alpha_C\alpha_{C^{-1}}\log\left(\frac{1}{\epsilon}\right)\right)
    \end{equation}
    queries to controlled-$O_C$, controlled-$O_b$, and their inverses, where $\alpha_{C^{-1}}\geq\norm{C^{-1}}$ is an upper bound on the norm of the inverse matrix.
\end{lemma}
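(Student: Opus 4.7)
The plan is to build the state $C^{-1}\ket{b}/\|C^{-1}\ket{b}\|$ by applying a polynomial approximation of the inverse function to the block encoding $O_C$, and then boosting the amplitude of the desired component by amplitude amplification. Concretely, I would let $\kappa := \alpha_C \alpha_{C^{-1}}$; because $O_C$ is an $\alpha_C$-block encoding of $C$, the nonzero singular values of $C/\alpha_C$ lie in $[1/\kappa,1]$. The starting point is a classical approximation-theory fact: there is an odd polynomial $P$ of degree $d \in \mathbf{O}(\kappa \log(1/\epsilon))$ such that $|P(x) - (2\kappa)^{-1}/x| \le \epsilon/\kappa$ for all $x \in [1/\kappa,1]\cup[-1,-1/\kappa]$ and $|P(x)|\le 1$ on $[-1,1]$. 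This is the standard rational/polynomial approximation of $1/x$ away from zero used, for example, in the quantum singular value transformation (QSVT) literature.

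The next step is to feed this polynomial into the QSVT framework. Using phase factors computed classically from the Chebyshev expansion of $P$, one builds a unitary $U_P$ on the encoding ancilla plus the system that is an $\epsilon$-approximate block encoding of $P(C/\alpha_C) \approx (C/\alpha_C)^{-1}/(2\kappa) = C^{-1}/2\alpha_{C^{-1}}$. This QSVT circuit uses $\mathbf{O}(d) = \mathbf{O}(\kappa \log(1/\epsilon))$ queries to controlled-$O_C$ and its inverse. Applying $U_P$ to $\ket{0^{a_C}}\ket{b}$ yields a state of the form
\begin{equation}
\ket{0^{a_C}}\frac{C^{-1}\ket{b}}{2\alpha_{C^{-1}}} + \ket{\perp},
\end{equation}
where $\ket{\perp}$ is orthogonal to $\ket{0^{a_C}}$ on the ancilla register, and the norm of the good component is at least $\|C^{-1}\ket{b}\|/(2\alpha_{C^{-1}}) \ge 1/(2\alpha_{C^{-1}})$ since one may assume $\|\,\ket{b}\,\|=1$ and $\|C^{-1}\|\le\alpha_{C^{-1}}$. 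Finally, I would apply fixed-point amplitude amplification on the flag $\ket{0^{a_C}}$ to rotate this component to amplitude $1-\mathbf{O}(\epsilon)$; this takes $\mathbf{O}(\alpha_{C^{-1}})$ rounds, each invoking $U_P$ and its inverse once.

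Multiplying the cost of one $U_P$ by the number of amplification rounds gives the overall query complexity $\mathbf{O}(\alpha_C \alpha_{C^{-1}} \log(1/\epsilon))$ to controlled-$O_C$, controlled-$O_b$, and their inverses, matching the statement. The main obstacle I anticipate is obtaining the linear-in-$\kappa$ scaling rather than a naive $\kappa^2$: a direct application of QSVT with post-selection produces the correct state but only with success probability $\Theta(1/\kappa^2)$, so simple repetition would be suboptimal. The resolution is to interleave QSVT with fixed-point amplitude amplification so that the inverse-square dependence on the success amplitude collapses to a linear one, and to use the optimal Chebyshev-type polynomial so that $d$ is linear in $\kappa$; pushing the analysis to true optimality in $\kappa$ (rather than $\kappa\,\mathrm{polylog}\,\kappa$) requires the refined discrete adiabatic argument of \cite{CAS+22}, but for the statement as written the QSVT-plus-amplification pipeline suffices.
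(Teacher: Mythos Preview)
The paper does not prove this lemma; it is quoted verbatim as a black-box result from \cite{CAS+22}, so there is no ``paper's own proof'' to compare against. Your proposal is therefore an independent attempt to re-derive a known literature result.

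There is, however, a genuine gap in your argument. You claim the good component has norm at least $\|C^{-1}\ket{b}\|/(2\alpha_{C^{-1}}) \ge 1/(2\alpha_{C^{-1}})$, justifying this with ``$\|\ket{b}\|=1$ and $\|C^{-1}\|\le\alpha_{C^{-1}}$''. But $\|C^{-1}\|\le\alpha_{C^{-1}}$ gives an \emph{upper} bound on $\|C^{-1}\ket{b}\|$, not a lower bound. The correct lower bound is $\|C^{-1}\ket{b}\|\ge \sigma_{\min}(C^{-1})=1/\|C\|\ge 1/\alpha_C$, so the good amplitude is only guaranteed to exceed $1/(2\alpha_C\alpha_{C^{-1}})=1/(2\kappa)$. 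Fixed-point amplitude amplification then needs $\mathbf{O}(\kappa)$ rounds, and multiplying by the $\mathbf{O}(\kappa\log(1/\epsilon))$ cost per QSVT circuit yields $\mathbf{O}(\kappa^2\log(1/\epsilon))$, not the $\mathbf{O}(\kappa\log(1/\epsilon))$ stated in the lemma.

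Your final paragraph half-acknowledges this, but the conclusion ``for the statement as written the QSVT-plus-amplification pipeline suffices'' is incorrect: straight QSVT plus fixed-point amplification is quadratic in $\kappa$, and even variable-time amplitude amplification only brings this down to $\mathbf{O}(\kappa\,\mathrm{polylog}(\kappa,1/\epsilon))$. Achieving the clean $\mathbf{O}(\alpha_C\alpha_{C^{-1}}\log(1/\epsilon))$ scaling genuinely requires the discrete-adiabatic/eigenstate-filtering machinery of \cite{CAS+22} (or the Lin--Tong filtering approach), which is precisely why the paper cites that reference rather than giving a self-contained proof.
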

It is worth emphasizing that Lemma~\ref{lem:opt_lin} reports the worst-case complexity of preparing the target state with high ${\bf O}(1)$ probability. We state all other state-preparation results in this paper under the same assumption.

\subsection{Previous results on quantum eigenvalue and phase estimation}
\label{sec:bgpreviouswork}
In this section, we briefly review the existing algorithms for QEVE and QPE, 
and comment on the limitations of their techniques. 
Let us begin the discussion by reviewing the technique of standard (unitary)
phase estimation~\cite{Kit95,NC00} and Hermitian eigenvalue estimation~\cite{KOS07}. In the standard
QPE algorithm, a unitary $U$ and one of its eigenvectors $\ket{\lambda}$ are 
given, and the goal is to estimate $\phi = \arg(\lambda)$ to a precision $\epsilon$.
This is achieved by generating a state of the form
\begin{equation}
\label{eq:QPEstate}
    \frac{1}{\sqrt{2^a}}\sum_j\ket{j}U^j\ket{\lambda} = \frac{1}{\sqrt{2^a}}\sum_je^{i\phi j}\ket{j}\ket{\lambda},
\end{equation}
which is also called the Fourier state in the literature~\cite{LS24}. Next, applying the inverse quantum Fourier transform on the first register yields
\begin{equation}
    \frac{1}{\sqrt{2^a}}\sum_je^{2\pi i\phi j}\ket{j}\ket{\lambda} \mapsto 
    \ket{\tilde{\phi}}\ket{\lambda},
\end{equation}
where $\ket{\tilde{\phi}}$ has high overlap over those basis states 
$\{\ket{j}\}$ for which $j$ is an $a$-bit approximation of $\phi/2\pi$.
The crucial step in the standard QPE algorithm is the preparation of 
the Fourier state in Eq.~\eqref{eq:QPEstate}, which is achieved by successive applications of
powers of controlled-$U$ with control on the ancilla register.
This algorithm for QPE immediately leads to an algorithm for QEVE for Hermitian matrices.
Given a Hermitian matrix $H$, one can use 
Hamiltonian simulation techniques to generate a unitary $U: = e^{-2\pi iH/\norm{H}}$.
Now the QPE algorithm described above can be used as a subroutine to obtain
an approximation of $\lambda/\norm{H}$, where $\lambda\in \mathbb{R}$ is 
the eigenvalue of $H$ corresponding to the given eigenvector $\ket{\lambda}$.

Suppose now that $U$ is non-unitary with eigenvalues on the unit circle
and is given by a block encoding $O_U$. In this case,  
implementing controlled-$U^j$ is exponentially 
expensive in the integer power $j$ in general~\cite{LS24}. Therefore, a straightforward extension 
of the standard QPE leads to exponential complexity in at least one of the 
input parameters. To overcome this limitation, a novel efficient approach to the generation of
a state similar to the one in Eq.~\eqref{eq:QPEstate} was mandated. 

While preparation of the Fourier state in Eq.~\eqref{eq:QPEstate} remains an open
problem for a given non-unitary $U$,\footnote{Although, the potential of the 
Cauchy integral approach~\cite{TOSU22} has not been investigated. Also, the techniques 
in Ref.~\cite{JL23} could be relevant.} 
an efficient approach to prepare this state for a non-Hermitian matrix $A$
was found and led to an efficient algorithm for QEVE.
This approach based on differential equation-solver
requires $A$ to be diagonalizable, i.e. $A=SDS^{-1}$, and have real spectrum.
Both these conditions can be relaxed to some extent, but the algorithm loses efficiency
when $A$ has complex eigenvalues with positive imaginary part.
The Fourier state is prepared by observing that it is the solution history state
for the differential equation $d\ket{\psi}/dt = -iA\ket{\psi}$.
Combined with the state-of-the-art techniques for solving differential equations, 
this approach was reported to achieve an asymptotic query complexity 
${\bf \tilde{O}}(\norm{A}\bar{\kappa}_A^2/\epsilon)$,
where ${\bf \tilde{O}}(f) = {\bf O}(f \text{polylog}(f))$ 
denotes polylogarithmic correction~\cite{Sha22}. However, the analysis in this work 
ignores the errors in the inverse Fourier transform step, and the correct complexity 
of this approach is unknown.
The complexity of this algorithm is also slightly suboptimal 
with respect to its dependence on $\alpha_A/\epsilon$. 

The dependence on $\alpha_A/\epsilon$ was improved later 
by an algorithm based on ``Chebyshev history state'' generation, 
which achieves optimal ${\bf O}(\norm{A}/\epsilon)$ dependence, 
known as the Heisenberg scaling. The technique also leverages
QLSA to generate a normalized state proportional to
\begin{equation}
    \sum_j\ket{j}\tilde{T}_j(A/\alpha_A)\ket{\psi} \propto 
    \sum_l \beta_l \left(\sum_j \cos(2\pi j \phi_l)\ket{j}\right)\ket{\lambda_l},
\end{equation}
where $\{\tilde{T}_j\}$ denote rescaled Chebyshev polynomials of the first kind, and
$\phi_l = \arccos(\lambda_l/\alpha_A)/2\pi$. After generating this step, an application of inverse quantum Fourier
transform yields an approximation of $\phi_l$ in the ancilla register. The condition number of the matrix to be inverted using QLSA scales as $K_{U}(A) = \sum_{j=1}^{n} \norm{U_j(A/\alpha_A)}$, 
where $\{U_j\}$ denote Chebyshev polynomials of the second kind, and $n \in {\bf O}(\alpha_A/\epsilon)$. 

In Ref.~\cite{LS24}, the input state to QEVE is assumed to be an eigenstate $\ket{\lambda_l}$, instead of a superposition as considered in the previous work and our work. The goal of QEVE formulated in 
Ref.~\cite{LS24} is to return a classical estimate of $\lambda_l$.
For this easier problem, the complexity of the approach based on Chebyshev history state generation is ${\bf O}(\alpha_A K_U(A))/\epsilon$. A subtlety here is that the problem formulated in Ref.~\cite{LS24} could be solved more efficiently by other approaches based on expectation value estimation
with complexity ${\bf O}(\alpha_A/\epsilon)$ independent of $K_U(A)$. To be more 
rigorous, we state this in the form of a proposition below.

\begin{proposition}[Quantum eigenvalue estimation by expectation value estimation]
    We are given an accuracy parameters $\epsilon_{\rm eig} \in \mathbb{R}^+$, a lower bound on the
    probability of failure $p_{\rm fail} \in (0,1)$, 
    an $\alpha_A$-block encoding $O_A$ of an $n$-qubit matrix $A$ with real spectrum, 
    and an $n$-qubit unitary circuit $O_{\lambda_l}$ such that $O_\lambda\ket{0}=\ket{\tilde{\lambda}_l}$ prepares an initial state within distance 
    $\norm{\ket{\tilde{\lambda}_l}-\ket{\lambda_l}}\le \epsilon_{\rm eig}/2\alpha_A$ from an eigenstate $\ket{\lambda_l}$ of $A$ 
    with unknown eigenvalue $\lambda_l$. 
    Then, $\lambda_l$ can be estimated with accuracy $\epsilon_{\rm eig}$ 
    and probability $1-p_{\rm fail}$ using
    \begin{equation}
        \mathbf{O}\left(\frac{\alpha_A}{\epsilon_{\rm eig}}\log\left(\frac{1}{p_{\rm fail}}\right)\right)
    \end{equation}
    queries to controlled-$O_A$, controlled-$O_\lambda$, and their inverses.
\end{proposition}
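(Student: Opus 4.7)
The plan is to reduce the task to estimating $\operatorname{Re}\bra{0^{a_A},\tilde{\lambda}_l}O_A\ket{0^{a_A},\tilde{\lambda}_l} = \operatorname{Re}\bra{\tilde{\lambda}_l}A\ket{\tilde{\lambda}_l}/\alpha_A$ by a Hadamard test followed by quantum amplitude estimation, exploiting the elementary identity $\bra{\lambda_l}A\ket{\lambda_l}=\lambda_l$ for any normalized eigenvector, together with the hypothesis that $\lambda_l$ is real.

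First I would bound the systematic error incurred by using $\ket{\tilde{\lambda}_l}$ in place of $\ket{\lambda_l}$. Writing $\ket{\tilde{\lambda}_l}=\ket{\lambda_l}+\ket{\delta}$ with $\norm{\ket{\delta}}\le\epsilon_{\rm eig}/(2\alpha_A)$ and using $(A-\lambda_l\mathds{1})\ket{\lambda_l}=0$ together with normalization $\braket{\tilde{\lambda}_l|\tilde{\lambda}_l}=1$ gives
\begin{equation}
    \bra{\tilde{\lambda}_l}A\ket{\tilde{\lambda}_l}-\lambda_l=\bra{\tilde{\lambda}_l}(A-\lambda_l\mathds{1})\ket{\delta},
\end{equation}
whose modulus is at most $(\norm{A}+\abs{\lambda_l})\norm{\ket{\delta}}\le 2\alpha_A\cdot\epsilon_{\rm eig}/(2\alpha_A)=\epsilon_{\rm eig}$, since the spectral radius satisfies $\abs{\lambda_l}\le\norm{A}\le\alpha_A$. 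Tightening the constant in the preparation tolerance by a factor of order one makes this systematic deviation any prescribed fraction of $\epsilon_{\rm eig}$, which only changes constants hidden in the final $\mathbf{O}$-bound.

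Next I would estimate $\operatorname{Re}\bra{0^{a_A},\tilde{\lambda}_l}O_A\ket{0^{a_A},\tilde{\lambda}_l}$ by treating $O_A$ as a unitary observable on the combined register. The standard Hadamard test with controlled-$O_A$ acting on the state prepared by $O_\lambda$ (together with the block-encoding ancillas $\ket{0^{a_A}}$ and a control qubit) has control-measurement probability $\tfrac{1}{2}(1+\operatorname{Re}\bra{\tilde{\lambda}_l}A\ket{\tilde{\lambda}_l}/\alpha_A)$. Quantum amplitude estimation returns an estimate of this probability to additive accuracy $\epsilon'$ using $\mathbf{O}(1/\epsilon')$ queries to controlled-$O_A$, controlled-$O_\lambda$ and their inverses, with constant success probability at least $8/\pi^2$. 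Setting $\epsilon'=\Theta(\epsilon_{\rm eig}/\alpha_A)$ yields an estimator of $\lambda_l$ accurate to $\epsilon_{\rm eig}$ using $\mathbf{O}(\alpha_A/\epsilon_{\rm eig})$ queries, and taking the median of $\mathbf{O}(\log(1/p_{\rm fail}))$ independent copies boosts the success probability to $1-p_{\rm fail}$, yielding the claimed total complexity.

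The main subtlety is the non-normality of $A$: because $A\neq A^\dagger$, the scalar $\bra{\tilde{\lambda}_l}A\ket{\tilde{\lambda}_l}$ is generically complex, so it is essential to extract the real part rather than the modulus or a raw probability. The error bound above shows the imaginary piece is itself $\mathbf{O}(\epsilon_{\rm eig})$ and can be safely discarded, and the hypothesis $\lambda_l\in\mathbb{R}$ guarantees that the real part carries the full target information. Once this is recognized, the remainder is a routine composition of standard primitives (Hadamard test, amplitude estimation, median-of-means boosting), and I do not anticipate any further obstacles.
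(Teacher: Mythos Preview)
Your proposal is correct and follows essentially the same approach as the paper: both reduce to estimating the expectation value $\operatorname{Re}\bra{\lambda_l}A\ket{\lambda_l}=\bra{\lambda_l}(A+A^\dagger)/2\ket{\lambda_l}=\lambda_l$, with the paper simply citing the literature on expectation-value estimation of block-encoded matrices while you spell out the Hadamard-test/amplitude-estimation/median implementation and the perturbation bound for the approximate eigenstate. The only cosmetic difference is that the paper passes to the Hermitian operator $(A+A^\dagger)/2$ explicitly, whereas you achieve the same effect by taking the real part in the Hadamard test; these are equivalent since $\operatorname{Re}\bra{\psi}A\ket{\psi}=\bra{\psi}(A+A^\dagger)/2\ket{\psi}$.
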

\begin{proof}
    By noting that 
    \begin{equation}
        \lambda_l = \braket{\lambda_l|A|\lambda_l} = \braket{\lambda_l|(A+A^\dagger)/2|\lambda_l},
    \end{equation}
    this task can be achieved using the algorithm for 
    computing the expectation value of a block encoded matrix~\cite{Ral20,ANB+22}.
\end{proof}

Nevertheless, the techniques developed in Ref.~\cite{LS24} could be extended to coherent phase estimation. 
For Problem~\ref{prob:inf_qeve}, the complexity would still scale as  ${\bf O}\left(\alpha_A K_U(A)/\epsilon_{\rm eig}\right)$
or worse for constant $\kappa_S$ and $\epsilon_{\rm st}$.
For diagonalizable $A$, we have $K_U(A) \in {\bf O}(\bar{\kappa}_A)$, so that the total complexity scales as ${\bf O}\left(\alpha_A \bar{\kappa}_A/\epsilon_{\rm eig}\right)$. If $A$ is non-diagonalizable and has the largest Jordan block of size $d$, then $K_U(A) \in {\bf O}(\bar{\kappa}_A \alpha_A^{d-1}/\epsilon^{d-1})$. Note that the total complexity of the approach based on Chebyshev history state scales as ${\bf O}(\alpha_A^d)$ with respect to $\alpha_A$.

The discussion above also suggests why extension of the existing approach for QEVE
to solving QERE would be inefficient. In the Fourier-state based approach, the complexity
of solving the relevant differential equation grows exponentially with total time $T$
when eigenvalues have positive imaginary part~\cite{Kro23}. For eigenvalue estimation, 
$T \propto 1/\epsilon_{\rm eig}$,
so we expect exponential dependence on $1/\epsilon_{\rm eig}$ for this algorithm. 
For Chebyshev history state-based algorithm, the complexity depends on 
the condition number of the matrix to be inverted, which scales as $K_U$. 
Since Chebyshev polynomials $U_j$ grow exponentially with $j$
for complex values, and maximum value of $j$ is proportional to $1/\epsilon$, we
expect the complexity of this approach to also be exponential in $1/\epsilon$.

\section{Estimation of eigenvalues on the unit circle}
\label{sec:qpe}

In this section, we first define the problem of estimation of unimodular eigenvalues rigorously, the informal statement for which is given in Problem~\ref{prob:inf_qpe}. Then we design the algorithm based on resolvent-state generation, 
and finally prove the main result by performing error and complexity analysis. 

First, we provide a rigorous definition for the state $\ket{\phi_{\lambda}}$ 
in Problem~\ref{prob:inf_qpe}. We keep this definition more general so as 
to be applicable for QERE as well.
\begin{definition}[$\epsilon$-additive parametric quantum approximation]
\label{def:aqa}
For an accuracy $\epsilon \in \mathbb{R}^+$, a parameter $\lambda \in \mathbb{C}$, $a \in \mathbb{Z}^+$ qubits defining a Hilbert space $\mathscr H$ with computational basis $\{\ket{j}\}_{j=0}^{2^a-1}$, and a $2^a$-discretization $[\gamma,2^a]$ of a complex curve $\gamma$, we say that a normalized state $\ket{\phi_{\lambda}} \in \mathcal{H}$ is an {\it $\epsilon$-additive quantum approximation ($\epsilon$-APQA) of $\lambda$} with respect to $[\gamma,2^a]$ if
\begin{equation}\label{eq:aqa}
    \sum_{j:|j/2^a-\gamma^{-1}(\lambda)| \le \epsilon} \abs{\braket{j|\phi_{\lambda_l}}}^2 = 1,
\end{equation}
or equivalently, if $\braket{j|\phi_{\lambda}} =0$ for all  $j:|j/2^a-\gamma^{-1}(\lambda)| > \epsilon$. 
\end{definition}
The intuition behind this definition is that for any measurement result $j$ obtained by measuring  $\ket{\phi_{\lambda}}$ in the computational basis, $j/2^a$ is guaranteed to be $\epsilon-$close
to the parameter value $\gamma^{-1}(\lambda)$ corresponding to $\lambda$. 
In other words,  measuring $\ket{\phi_{\lambda}}$ produces a good estimate of $\lambda$. 
We can also restate Eq.~\eqref{eq:aqa}
in terms of the projector on to the desired subspace and its orthogonal complement such that
\begin{align}\label{eq:proj}
    P_{\lambda,\epsilon} &:=  \sum_{j: |j/2^a-\gamma^{-1}(\lambda)| \le \epsilon} \ket{j}\bra{j} \implies  P_{\lambda,\epsilon}\ket{\phi_{\lambda}} = \ket{\phi_{\lambda}} ,\nonumber\\
Q_{\lambda,\epsilon} &:= \mathds{1}_a -  P_{\lambda,\epsilon} \implies Q_{\lambda,\epsilon}\ket{\phi_{\lambda}} = \bm{0}.
\end{align}
If $\gamma$ is a closed curve, then the limits of the summation in Eq.~\eqref{eq:aqa} in Def.~\ref{def:aqa}
are replaced by 
$j:|j/2^a-\gamma^{-1}(\lambda) \mod 1| \le \epsilon$, to account for the fact that $t=0$ and $t=1$ denote the
same points on the curve.

To provide the formal definition of the problem solved by our algorithm, recall from Problem~\ref{prob:inf_qpe} that we are given a non-unitary matrix $A$ which has some eigenvalues of unit magnitude. We denote these eigenvalues by $\{\lambda_l\}$ and other eigenvalues of $A$ by $\{\mu_k\}$. It is assumed that all eigenvalues in $\{\mu_k\}$ are such that $\abs{\mu_k} \notin (1,1+\epsilon_{\rm eig})$. No assumption is made on the defectiveness (or size of the Jordan block) of the eigenvalues $\{\lambda_l\}$ or $\{\mu_k\}$. We are given an input state $\ket{\psi} = \sum_l \beta_l \ket{\lambda_l}$, where $\ket{\lambda_l}$ is a (rank-$1$) eigenvector of $A$ with eigenvalue $\lambda_l$. The goal is to generate a state close to $\ket{\psi_A}= \sum_{l}\beta_l\ket{\phi_{\lambda_l}}\ket{\lambda_l}$, where $\ket{\phi_{\lambda_l}}$ is an $(\epsilon_{\rm eig}/2\pi)$-APQA of  $\lambda_l$ (see Def.~\ref{def:aqa}). 
Note that $(\epsilon_{\rm eig}/2\pi)$-APQA of $\lambda_l$ ensures that a 
good estimate of $\arg(\lambda_l)$ has at most $\epsilon_{\rm eig}$ additive error. 
We are now ready to state the problem rigorously.

\begin{problem}[Quantum coherent Estimation of unimodular Eigenvalues (QEUE)]
\label{prob:qpe}
We are given two accuracy parameters $\epsilon_{\rm st}, \epsilon_{\rm eig} \in \mathbb{R}^+$, an $\alpha_A$-block encoding $O_A$ of an $n$-qubit matrix $A$ such that $\Lambda(A)\cap \{z:1 < |z| < 1+\epsilon_{\rm eig}\}=\emptyset$, and an $n$-qubit unitary circuit $O_{\psi}$ such that $\ket{\psi}:=O_{\psi}\ket{0^n}$ has support over only the eigenstates $\{\ket{\lambda_l}\}$ of $A$ with eigenvalues $\{\lambda_l\}$ on the unit circle $S^1$, i.e.  
\begin{equation}
\label{eq:qpefinalstate}
    \ket{\psi} = \sum_{l:\lambda_l \in \Lambda(A) \cap S^1} \beta_l \ket{\lambda_l}, \quad \beta_l \in \mathbb{C} \ \forall l.
\end{equation}
The goal is to generate to an accuracy $\epsilon_{\rm st}$ an $(n+a)$-qubit state of the form
\begin{equation}
    \ket{\psi_A} = \frac{\sum_l \beta_l \ket{\phi_{\lambda_l}}\ket{\lambda_l}}
    {\norm{\sum_l \beta_l \ket{\phi_{\lambda_l}}\ket{\lambda_l}}},
\end{equation}
where $\ket{\phi_{\lambda_l}} \in \mathcal{H}_2^{\otimes a}$ for each $l$ is an 
$(\epsilon_{\rm eig}/2\pi)$-APQA of $\lambda_l$ with respect to a discretization of the unit circle $[\gamma,2^a]$. Here
$\gamma:[0,1]\mapsto S^1$ is the curve $\gamma(t)=e^{2\pi it}$.
\end{problem}

The key tool used in our algorithm is the resolvent of $A$, given by $R(z) = (z\mathds{1}_n - A)^{-1}$,  which we use to generate the state $\ket{\psi_A}$ to some desired accuracy. We consider a block diagonal matrix $R$ whose $j$-th block is $R(z_j)$ for a complex number $z_j= (1+\delta)\gamma(t_j)$  $\forall j$, where $\delta\in(0,\epsilon_{\rm eig})$ 
is a parameter whose value we set later. The matrix $R$ can be expressed compactly in terms of a diagonal matrix with entries $\{z_j\}$ as defined below. 
\begin{definition}[Matrix encoding of a discretized curve]
For a $\delta>0$, $a\in\mathbb{Z}^+$, and a discretization $\Gamma=[\gamma, 2^a]$, define
\begin{equation}
    Z_{\delta,\Gamma} := \sum_{j=0}^{2^a-1} z_j \ket{j}\bra{j}, \quad 
    z_j := (1+\delta)\gamma(t_j).
\end{equation}
\end{definition}

Since none of the eigenvalues of $A$ have magnitude $1+\delta$, the resolvent matrix $R:=(Z_{\delta,\Gamma}\otimes \mathds{1}_n - \mathds{1}_a \otimes A)^{-1}$ is bounded. Our algorithm proceeds by generating an approximation to the state $\ket{\tilde{\psi}_{A}} \propto R\ket{+^a}\ket{\psi}$, which we call the resolvent state, using QLSA. Towards this step, we first generate the block encoding of 
\begin{equation}
\label{eq:A}
M: = R^{-1} = Z_{\delta,\Gamma}\otimes \mathds{1}_n - \mathds{1}_a \otimes A
\end{equation}
in the following lemma.
\begin{lemma}[Block encoding of $M$]
Given the inputs of Problem~\ref{prob:qpe}, a $\delta>0$, and a discretization $\Gamma=[\gamma, 2^a]$, then an $\alpha_M$-block encoding $O_M$ of $M$ with $\alpha_M = 1+\delta +\alpha_A $ can be constructed with one use of $O_A$.
\end{lemma}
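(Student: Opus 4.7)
The plan is to decompose $M$ as the sum $M_1 + M_2$ with $M_1 := Z_{\delta,\Gamma}\otimes \mathds{1}_n$ and $M_2 := -\mathds{1}_a \otimes A$, construct a block encoding of each summand separately, and combine them via the linear-combination-of-block-encodings lemma (Lemma~\ref{lem:LCBE}). The subnormalization contributed by $M_1$ will be $(1+\delta)$, the subnormalization contributed by $M_2$ will be $\alpha_A$, and their sum gives the claimed $\alpha_M = (1+\delta)+\alpha_A$.

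For $M_1$, the key observation is that the parametrization $\gamma(t)=e^{2\pi it}$ is unimodular on $[0,1]$, so the operator
\begin{equation}
    U_Z := (1+\delta)^{-1} Z_{\delta,\Gamma} = \sum_{j=0}^{2^a-1} e^{2\pi i j/2^a}\ket{j}\bra{j}
\end{equation}
is itself a diagonal unitary on the $a$-qubit ancilla register, implementable by single-qubit phase gates and requiring no calls to $O_A$. Consequently, $U_Z \otimes \mathds{1}_n$ is a unitary that serves as a $(1+\delta)$-block encoding of $M_1$ with zero auxiliary qubits. For $M_2$, padding $O_A$ with identity on the $a$-qubit register gives an $\alpha_A$-block encoding of $\mathds{1}_a \otimes A$ using one call to $O_A$; the minus sign is absorbed either by giving the corresponding coefficient in the linear combination the sign $-1$ (as in the standard LCU construction underlying Lemma~\ref{lem:LCBE}) or, equivalently, by a trivial sign flip on a block-encoding ancilla, neither of which invokes $O_A$ again.

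Applying Lemma~\ref{lem:LCBE} to the two block encodings above immediately yields an $\alpha_M$-block encoding $O_M$ of $M_1 + M_2 = M$ with $\alpha_M = (1+\delta) + \alpha_A$, and since only the block encoding of $M_2$ consumes $O_A$, the total cost is exactly one (controlled) use of $O_A$ as claimed. No substantive obstacle arises: the content is standard block-encoding arithmetic, and the only mildly nontrivial step is the observation that unimodularity of $\gamma$ allows the factor $(1+\delta)$ to be absorbed into the subnormalization rather than introducing a separate block-encoding ancilla for the diagonal summand.
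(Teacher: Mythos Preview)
Your proposal is correct and follows essentially the same approach as the paper: block-encode $Z_{\delta,\Gamma}\otimes\mathds{1}_n$ with subnormalization $1+\delta$, block-encode $-\mathds{1}_a\otimes A$ via $O_A$ with subnormalization $\alpha_A$, and combine using Lemma~\ref{lem:LCBE}. The paper's proof is terser, simply noting $\norm{Z_{\delta,\Gamma}}=1+\delta$, whereas you make explicit the helpful observation that $(1+\delta)^{-1}Z_{\delta,\Gamma}$ is itself a diagonal unitary.
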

\begin{proof}
Since $\norm{Z_{\delta,\Gamma}} = 1+\delta$, we can construct a $(1+\delta)$-block encoding of $Z_{\delta,\Gamma}$. Then the desired 
$\alpha_M$-block encoding of $M$ can be constructed by linear combination using Lemma~\ref{lem:LCBE}.
\end{proof}

The complexity of generating the state $\ket{\tilde{\psi}_{A}}$ using QLSA depends on $\norm{M^{-1}}$. The following lemma establishes a bound on $\norm{M^{-1}}$.
\begin{lemma}[Spectral norm bound on the resolvent]
\label{lem:condA_qpe}
Let $A$ be a complex matrix of size $2^n \times 2^n$ such that $\mathcal{K}^\circ_{\delta}(A) < \infty$. Then $M$ defined in Eq.~\ref{eq:A}  is invertible and
\begin{equation}
    \norm{M^{-1}} \le \frac{\mathcal{K}^\circ_{\delta}(A)}{\delta}.
\end{equation}
\end{lemma}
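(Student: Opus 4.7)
The plan is to exploit the block-diagonal structure of $M^{-1}$ that is already visible from the tensor-product form of $M$ in Eq.~\eqref{eq:A}. Since $Z_{\delta,\Gamma}$ is diagonal in the ancilla basis $\{\ket{j}\}$, we can write
\begin{equation}
    M = \sum_{j=0}^{2^a-1}\ket{j}\bra{j}\otimes (z_j\mathds{1}_n - A).
\end{equation}
First I would argue that $M$ is invertible. By hypothesis $\mathcal{K}^\circ_\delta(A)<\infty$, and $|z_j|=1+\delta$ for every $j$; hence by definition of $\mathcal{K}^\circ_\delta(A)$ each $(z_j\mathds{1}_n-A)^{-1}$ exists with bounded norm, which means no $z_j$ is an eigenvalue of $A$. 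Therefore each diagonal block is invertible and
\begin{equation}
    M^{-1} = \sum_{j=0}^{2^a-1}\ket{j}\bra{j}\otimes (z_j\mathds{1}_n - A)^{-1} = \sum_{j=0}^{2^a-1}\ket{j}\bra{j}\otimes R(z_j).
\end{equation}

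Next I would use the standard fact that the spectral norm of a block-diagonal operator equals the maximum of the spectral norms of its diagonal blocks:
\begin{equation}
    \norm{M^{-1}} = \max_{0\le j \le 2^a-1} \norm{R(z_j)}.
\end{equation}
Since every $z_j$ lies on the circle $|z|=1+\delta$, the supremum in the definition of the restricted Kreiss constant (Def.~\ref{def:Kdeltacirc}) dominates each such $\norm{R(z_j)}$:
\begin{equation}
    \norm{R(z_j)} \le \sup_{|z|=1+\delta}\norm{(z\mathds{1}_n-A)^{-1}} = \frac{\mathcal{K}^\circ_\delta(A)}{\delta}.
\end{equation}
Taking the maximum over $j$ yields the claimed bound.

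There is essentially no obstacle here: the argument is a direct unpacking of the definition of $\mathcal{K}^\circ_\delta(A)$ combined with the block-diagonal norm identity. The only point that deserves a sentence of care is justifying that the finiteness of the restricted Kreiss constant indeed forbids any $z_j$ from being a spectral point of $A$, so that $M^{-1}$ makes sense as written before we bound its norm.
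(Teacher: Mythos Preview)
Your proposal is correct and follows essentially the same approach as the paper: write $M$ as a block-diagonal operator in the ancilla basis, invert blockwise, use that the spectral norm of a block-diagonal matrix is the maximum block norm, and bound each block via the definition of $\mathcal{K}^\circ_\delta(A)$. Your added sentence justifying invertibility from $\mathcal{K}^\circ_\delta(A)<\infty$ is a nice touch that the paper leaves implicit.
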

\begin{proof}
Note that $M = \sum_j \ket{j}\bra{j}\otimes [z_j\mathds{1}_n - A]$, so that $M^{-1} = \sum_j \ket{j}\bra{j}\otimes [z_j\mathds{1}_n - A]^{-1}$ is block diagonal. Consequently, 
\begin{equation}
        \norm{M^{-1}} = \max_j \norm{[z_j\mathds{1}_n - A]^{-1}} \le \sup_{z:|z|=1+\delta}\norm{[z\mathds{1}_n - A]^{-1}} = \mathcal{K}^\circ_\delta(A)/\delta,
    \end{equation}
    following the definition of $\mathcal{K}^\circ_\delta(A)$ in Def.~\ref{def:Kdeltacirc}.
\end{proof}

We are now ready to derive the complexity of generating the state proportional to $R\ket{+^a}\ket{\psi}$.
\begin{lemma}[Complexity of preparing the resolvent state]
\label{lem:stateprep_qpe}
We are given the inputs of Problem~\ref{prob:qpe}, a $\delta\in(0,\epsilon_{\rm eig})$, an $a\in\mathbb{Z}^+$, and $\mathcal{K}^\circ_{\delta}(A)$. Let
    \begin{equation}\label{eq:psiUtilde}
        \ket{\tilde{\psi}_{A}}:=\sum_l \beta_l\ket{\tilde{\phi}_{\lambda_l,\delta}}\ket{\lambda_l},\quad
      \ket{\tilde{\phi}_{\lambda_l,\delta}}  := 
      \sqrt{\frac{\delta(2+\delta)}{2^a}}\sum_{j=0}^{2^a-1} (z_j-\lambda_l)^{-1}\ket{j} \quad \forall l.
    \end{equation}
Then the state $\ket{\tilde{\psi}_{A}}/\norm{\ket{\tilde{\psi}_{A}}}$ can be prepared with accuracy $\epsilon_{\rm st}/2$ using
    \begin{equation}
        \mathbf{O}\left(\frac{\alpha_A\mathcal{K}_{\delta}(A)}{\delta}
        \log\left(\frac{1}{\epsilon_{\rm st}}\right)\right)
    \end{equation}
queries to controlled-$O_A$,  controlled-$O_{\psi}$ and their inverses.
\end{lemma}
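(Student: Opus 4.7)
The plan is to recognize $\ket{\tilde{\psi}_A}$ as the normalized solution to a linear system and then invoke the optimal-scaling QLSA. First, I would check the algebraic identity
\begin{equation}
    M^{-1}\bigl(\ket{+^a}\otimes\ket{\psi}\bigr)
    = \frac{1}{\sqrt{2^a}}\sum_{j=0}^{2^a-1}\ket{j}\otimes (z_j\mathds{1}_n-A)^{-1}\sum_l \beta_l \ket{\lambda_l}
    = \frac{1}{\sqrt{2^a}}\sum_l \beta_l \sum_j \frac{\ket{j}\ket{\lambda_l}}{z_j-\lambda_l},
\end{equation}
where I used that $M^{-1}$ is block diagonal in the first register (as in Lemma~\ref{lem:condA_qpe}) and that $\ket{\lambda_l}$ is a right eigenvector of $A$ with eigenvalue $\lambda_l$. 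Comparing with the definition of $\ket{\tilde\psi_A}$ in Eq.~\eqref{eq:psiUtilde} shows that $\ket{\tilde\psi_A}/\norm{\ket{\tilde\psi_A}} = M^{-1}(\ket{+^a}\ket{\psi})/\norm{M^{-1}(\ket{+^a}\ket{\psi})}$; the $\sqrt{\delta(2+\delta)/2^a}$ prefactor in Eq.~\eqref{eq:psiUtilde} is a (state-independent) scalar that drops out of the normalization.

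Second, I would assemble the ingredients required by Lemma~\ref{lem:opt_lin}. The preceding block-encoding lemma supplies an $\alpha_M$-block encoding $O_M$ of $M$ with $\alpha_M = 1+\delta+\alpha_A \in \mathbf{O}(\alpha_A)$ using a single use of $O_A$. The right-hand side $\ket{+^a}\ket{\psi}$ is prepared by Hadamards on the ancilla register together with one application of $O_\psi$. For the norm bound needed by QLSA, Lemma~\ref{lem:condA_qpe} gives $\norm{M^{-1}} \le \mathcal{K}^\circ_\delta(A)/\delta$, so I may take $\alpha_{M^{-1}} = \mathcal{K}^\circ_\delta(A)/\delta$.

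Third, I apply Lemma~\ref{lem:opt_lin} with target accuracy $\epsilon_{\rm st}/2$. Its query cost
\begin{equation}
    \mathbf{O}\!\left(\alpha_M \,\alpha_{M^{-1}}\,\log\!\left(\tfrac{1}{\epsilon_{\rm st}}\right)\right)
    = \mathbf{O}\!\left(\alpha_A \cdot \frac{\mathcal{K}^\circ_\delta(A)}{\delta}\,\log\!\left(\tfrac{1}{\epsilon_{\rm st}}\right)\right)
\end{equation}
is in calls to controlled-$O_M$, controlled-$O_{\ket{+^a}\ket{\psi}}$, and their inverses. Since each controlled-$O_M$ uses a single controlled-$O_A$ and each controlled state-preparation uses a single controlled-$O_\psi$, this yields the claimed query count.

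The only subtlety I foresee is bookkeeping: one must ensure $\alpha_M = 1+\delta+\alpha_A$ is absorbed into $\mathbf{O}(\alpha_A)$, which is valid under the standard convention $\alpha_A \ge \norm{A} \ge 1$ (the latter holding because $A$ has eigenvalues on the unit circle), and one must verify that the $\epsilon_{\rm st}/2$ accuracy guaranteed by QLSA on the state $M^{-1}(\ket{+^a}\ket{\psi})/\norm{M^{-1}(\ket{+^a}\ket{\psi})}$ transfers to $\ket{\tilde\psi_A}/\norm{\ket{\tilde\psi_A}}$, which follows because the two states are equal. Everything else is a direct application of results already established in the excerpt.
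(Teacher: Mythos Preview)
Your proposal is correct and follows exactly the paper's approach: identify $\ket{\tilde\psi_A}/\norm{\ket{\tilde\psi_A}}$ with $M^{-1}(\ket{+^a}\ket{\psi})/\norm{M^{-1}(\ket{+^a}\ket{\psi})}$ and then invoke the QLSA of Lemma~\ref{lem:opt_lin} together with the block-encoding lemma for $M$ and the norm bound of Lemma~\ref{lem:condA_qpe}. The paper's own proof is a terse two-line version of precisely this argument, so your additional bookkeeping (on $\alpha_M\in\mathbf{O}(\alpha_A)$ and on the prefactor dropping out of the normalization) simply makes explicit what the paper leaves implicit.
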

\begin{remark}
\label{rem:prefactor}
The above lemma holds even if the prefactor  $\sqrt{\delta(2+\delta)/2^a}$  does not accompany the definition of  $\ket{\tilde{\phi}_{\lambda_l,\delta}}$.  In general, the norm of $\ket{\tilde{\phi}_{\lambda_l,\delta}}$ depends on $\lambda_l$. However, for large $a$ and small $\delta$, the norm is nearly independent of $\lambda_l$ as we will show later in Lemma~\ref{lem:qpe_errors}, and the factor $\sqrt{\delta(2+\delta)/2^a}$ approximately normalizes $\ket{\tilde{\phi}_{\lambda_l,\delta}}$.
\end{remark}
\begin{proof}
Observe that
\begin{equation}
    \frac{\ket{\tilde{\psi}_{A}}}{\norm{\ket{\tilde{\psi}_{A}}}} = \frac{M^{-1}\ket{+^a}\ket{\psi}}{\norm{M^{-1}\ket{+^a}\ket{\psi}}}.
\end{equation}
The complexity follows from the complexity of QLSA in Lemma~\ref{lem:opt_lin}.
\end{proof}

The resolvent state $\ket{\tilde{\psi}_{A}}/\norm{\ket{\tilde{\psi}_{A}}}$ is our candidate for the approximation of the desired state $\ket{\psi_A}$. We now analyze the errors in order to determine suitable values of $\delta$ and $a$ to meet the requirements on $\ket{\psi_A}$ stated in Problem~\ref{prob:qpe}. The value of $\delta$ in turn will also dictate the complexity of our algorithm due to Lemma~\ref{lem:stateprep_qpe}.

\begin{lemma}[Upper bound on discretization error]
For $\delta \le 1$ and 
\begin{equation}
	2^a \ge \frac{(3\sqrt{2}\pi+6)\epsilon_{\rm eig}}{\delta^3},
\end{equation}
projectors $ P_{\lambda_l,\epsilon}, Q_{\lambda_l,\epsilon}$ defined in Eq.~\eqref{eq:proj} 
with $\epsilon:=\epsilon_{\rm eig}/2\pi$,
and state $\ket{\tilde{\phi}_{\lambda_l,\delta}}$ defined in Eq.~\eqref{eq:psiUtilde}, we have
\begin{align}
\abs{\norm{\ket{\tilde{\phi}_{\lambda_l,\delta}}}^2-
\int_{0}^{1} g_l(t) dt} &\le \frac{\delta}{\epsilon_{\rm eig}}, \nonumber\\
\abs{\norm{P_{\lambda_l,\epsilon}\ket{\tilde{\phi}_{\lambda_l,\delta}}}^2-
\int_{t_-}^{t_+} g_l(t) dt} &\le \frac{\delta}{\epsilon_{\rm eig}},
\end{align}
where 
\begin{equation}
        g_l(t) := \frac{\delta(2+\delta)}{\abs{(1+\delta)e^{2\pi it}-\lambda_l}^2}=\frac{\delta(2+\delta)}{\delta^2+4(1+\delta)\sin^2(\pi t-\arg(\lambda_l)/2)}
\end{equation}
and
$t_\pm = (\arg(\lambda_l) \pm \epsilon_{\rm eig})/2\pi$.
\end{lemma}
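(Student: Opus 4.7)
The plan is to recognize both quantities as Riemann sums of $g_l$ over subintervals of $[0,1]$ and then invoke Lemma~\ref{lem:riemannsum} to control the discretization error.

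First I would compute $\abs{z_j-\lambda_l}^2 = \abs{(1+\delta)e^{2\pi it_j}-e^{i\arg(\lambda_l)}}^2 = \delta^2 + 4(1+\delta)\sin^2(\pi t_j - \arg(\lambda_l)/2)$, so that $\delta(2+\delta)\abs{z_j-\lambda_l}^{-2} = g_l(t_j)$. Substituting into the definition of $\ket{\tilde\phi_{\lambda_l,\delta}}$ in Eq.~\eqref{eq:psiUtilde} immediately yields $\norm{\ket{\tilde\phi_{\lambda_l,\delta}}}^2 = 2^{-a}\sum_{j=0}^{2^a-1}g_l(t_j) = \sum_0^1[g_l,2^a]$. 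The projector $P_{\lambda_l,\epsilon}$ restricts the sum to those $j$ with $\abs{j/2^a - \arg(\lambda_l)/(2\pi)} \le \epsilon_{\rm eig}/(2\pi)$ (modulo $1$), equivalently $t_j \in [t_-,t_+]$, giving $\norm{P_{\lambda_l,\epsilon}\ket{\tilde\phi_{\lambda_l,\delta}}}^2 = \sum_{t_-}^{t_+}[g_l,2^a]$. If $[t_-,t_+]$ wraps around the endpoint of $[0,1]$, I would split it into two subintervals or pass to the periodic extension of $g_l$; both moves are harmless because $\gamma$ is closed.

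Next I would apply Lemma~\ref{lem:riemannsum}, which requires bounds on $\max_t\abs{g_l(t)}$ and $\max_t\abs{g_l'(t)}$. The first is attained when $\sin^2(\pi t-\arg(\lambda_l)/2)=0$ and equals $(2+\delta)/\delta \le 3/\delta$ for $\delta\le 1$. The main obstacle is the derivative bound. Direct differentiation gives
\begin{equation*}
g_l'(t) = -\frac{4\pi\delta(1+\delta)(2+\delta)\sin(2\pi t-\arg(\lambda_l))}{[\delta^2+4(1+\delta)\sin^2(\pi t-\arg(\lambda_l)/2)]^2}.
\end{equation*}
Setting $s := \sin(\pi t-\arg(\lambda_l)/2)$ and using $\abs{\sin(2\pi t-\arg(\lambda_l))} \le 2\abs{s}$ reduces the task to maximizing the univariate function $h(s) := s\,[\delta^2+4(1+\delta)s^2]^{-2}$ over $s\in[0,1]$. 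The first-order condition $h'(s)=0$ yields $s^2 = \delta^2/(12(1+\delta))$; substituting back into $h$ and using $\delta\le 1$, I expect to obtain an explicit bound of the form $\max\abs{g_l'} \le C/\delta^2$ with $C$ at most $6\sqrt{2}\pi$, so that half this quantity matches the stated threshold constant.

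Plugging these bounds into Lemma~\ref{lem:riemannsum} with $(t_{\min},t_{\max})=(0,1)$ yields $\abs{\sum_0^1[g_l,2^a] - \int_0^1 g_l\,dt} \le 2^{-a}\bigl\{\tfrac12\cdot 6\sqrt{2}\pi/\delta^2 + 2\cdot 3/\delta\bigr\}$, which, using $1/\delta \le 1/\delta^2$ for $\delta\le 1$, is at most $(3\sqrt{2}\pi+6)/(2^a\delta^2)$. The hypothesis $2^a \ge (3\sqrt{2}\pi+6)\epsilon_{\rm eig}/\delta^3$ is tailored precisely to force this to be $\le \delta/\epsilon_{\rm eig}$, proving the first inequality. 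The second inequality follows identically on the interval $(t_-,t_+)$: since $(t_+-t_-)^2 = (\epsilon_{\rm eig}/\pi)^2 \le 1$, the derivative contribution in Lemma~\ref{lem:riemannsum} is no larger than in the first case, so the same threshold on $2^a$ suffices.
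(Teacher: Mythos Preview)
Your proposal is correct and follows essentially the same route as the paper: identify the two quantities as Riemann sums $\sum_0^1[g_l,2^a]$ and $\sum_{t_-}^{t_+}[g_l,2^a]$, bound $\max|g_l|$ and $\max|g_l'|$, and apply Lemma~\ref{lem:riemannsum}. The only cosmetic difference is in the derivative bound: the paper substitutes $x = 2\sqrt{1+\delta}\sin(\pi t-\arg(\lambda_l)/2)$ and uses the AM--GM estimate $\abs{2\delta x/(\delta^2+x^2)}\le 1$ to reach $\max|g_l'|\le 6\sqrt{2}\pi/\delta^2$ directly, whereas you carry out the explicit first-order optimization in $s$ (which in fact gives the slightly sharper constant $\tfrac{27\sqrt{2}\pi}{4\sqrt{3}}<6\sqrt{2}\pi$).
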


\begin{proof}
In this proof, we treat $t$ and $j$ as modular variables, i.e. 
$t = t \mod 1$ and $j = j \mod 2^a$. 
Observe that 
\begin{align}
\label{eq:normtosum}
\norm{\ket{\tilde{\phi}_{\lambda_l,\delta}}}^2 &= 
\frac{\delta(2+\delta)}{2^a}\sum_{j=0}^{2^a-1}\frac{1}{\abs{z_j-\lambda_l}^2}\\
\norm{P_{\lambda_l,\epsilon}\ket{\tilde{\phi}_{\lambda_l,\delta}}}^2 &=
\frac{\delta(2+\delta)}{2^a}\sum_{j=\lceil Nt_-\rceil }^{\lfloor Nt_+\rfloor}\frac{1}{\abs{z_j-\lambda_l}^2}.
\end{align}
Using the cosine formula, we get
\begin{equation}
    \abs{(1+\delta)e^{2\pi it}-\lambda_l}^2 = 1+(1+\delta)^2-2(1+\delta)\cos(2\pi t-\arg(\lambda_l)) = \delta^2+4(1+\delta)\sin^2(\pi t_j-\arg(\lambda_l)/2).
\end{equation}
Now we can view $\norm{\ket{\tilde{\phi}_{\lambda_l,\delta}}}^2$ and  
$\norm{P_{\lambda_l,\epsilon}\ket{\tilde{\phi}_{\lambda_l,\delta}}}^2$
as discrete integrals of $g_l$. More precisely,
\begin{equation}
\norm{\ket{\tilde{\phi}_{\lambda_l,\delta}}}^2 = 
\sum_{0}^{1}[g_l,2^a],\quad 
\norm{P_{\lambda_l,\epsilon}\ket{\tilde{\phi}_{\lambda_l,\delta}}}^2 =
\sum_{t_-}^{t_+}[g_l,2^a].
\end{equation}
In the next step, we will approximate these discrete integrals by their continuous counterparts.
To bound $d g_l/dt$, we first define $x = 2\sqrt{1+\delta}\sin((\pi t-\arg(\lambda_l)/2))$,
so that $g_l(t) = \delta(2+\delta)/(\delta^2+x^2)$. Then
\begin{align}
\abs{\frac{d g_l}{d t}} &= \abs{\frac{2\delta x}{(\delta^2+x^2)^2} \left((2+\delta) \frac{dx}{dt}\right)} \\
&\le \abs{\frac{2\delta x}{(\delta^2+x^2)}\frac{1}{(\delta^2+x^2)}\left((2+\delta) 2\pi\sqrt{1+\delta}\right)} \\
&\le \frac{1}{\delta^2}\left(2\pi(2+\delta) \sqrt{1+\delta}\right) \\
&\le \frac{6\sqrt{2}\pi}{\delta^2}.
\end{align}
Observe that 
\begin{equation}
\abs{g_l(t)} = \abs{\frac{\delta(2+\delta)}{(\delta^2+x^2)}}\le \abs{\frac{\delta(2+\delta)}{\delta^2}}
\le 1+\frac{2}{\delta}.
\end{equation}
Now using Lemma~\ref{lem:riemannsum}, we have
\begin{equation}
    \label{eq:discerror}
        \abs{\int_{0}^{1} g_l(t)dt - \sum_{0}^{1}[g_l,2^a]} 
        \le \epsilon_{\rm disc}, \quad
        \abs{\int_{t_-}^{t_+} g_l(t)dt - \sum_{t_-}^{t_+}[g_l,2^a]} \le 
        \epsilon_{\rm disc},
\end{equation}
where
\begin{equation}
\epsilon_{\rm disc} = \frac{1}{2^a}\left\{ \frac{6\sqrt{2}\pi}{2\delta^2} + \frac{4}{\delta} +2\right\} 
\le \frac{3\sqrt{2}\pi+6}{2^a \delta^2}. 
\end{equation}
Finally,
\begin{equation}
2^a \ge \frac{(3\sqrt{2}\pi+6)\epsilon_{\rm eig}}{\delta^3} \implies 
\epsilon_{\rm disc} \le \frac{\delta}{\epsilon_{\rm eig}}.
\end{equation}

\end{proof}

\begin{lemma}[Failure probability of eigenvalue estimation]
\label{lem:qpe_errors}
For $\delta \le \epsilon_{\rm eig}/4$ and 
\begin{equation}
	2^a \ge \frac{(3\sqrt{2}\pi+6)\epsilon_{\rm eig}}{\delta^3},
\end{equation}
projectors $ P_{\lambda_l,\epsilon}, Q_{\lambda_l,\epsilon}$ defined in Eq.~\eqref{eq:proj} and state $\ket{\tilde{\phi}_{\lambda_l,\delta}}$ defined in Eq.~\eqref{eq:psiUtilde}, the following inclusions hold for each $l$:
    \begin{enumerate}
        \item \begin{equation}
            \frac{\sqrt{5}}{2} \ge \norm{P_{\lambda_l,\epsilon}\ket{\tilde{\phi}_{\lambda_l,\delta}}} 
      \ge \frac{1}{2}.
        \end{equation}
        \item \begin{equation}
        \abs{1-\frac{\norm{P_{\lambda_0,\epsilon}\ket{\tilde{\phi}_{\lambda_0,\delta}}}}{\norm{P_{\lambda_l,\epsilon}\ket{\tilde{\phi}_{\lambda_l,\delta}}}}} 
        \le \frac{4\delta}{\epsilon_{\rm eig}}.
    \end{equation}
    \item \begin{equation}
        \norm{Q_{\lambda_l,\epsilon}\ket{\tilde{\phi}_{\lambda_l,\delta}}}       
        \le \sqrt{2(1+1/\pi)}\sqrt{\frac{\delta}{\epsilon_{\rm eig}}}.
    \end{equation}
    \end{enumerate}

\end{lemma}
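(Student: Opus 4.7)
The plan is to reduce all three parts to explicit bounds on two integrals of $g_l$: the total integral $\int_0^1 g_l(t)\,dt$ and the ``tail'' integral $\int_{[0,1]\setminus[t_-,t_+]} g_l(t)\,dt$. The preceding lemma already controls the difference between the relevant discrete sums (namely $\norm{\ket{\tilde\phi_{\lambda_l,\delta}}}^2$ and $\norm{P_{\lambda_l,\epsilon}\ket{\tilde\phi_{\lambda_l,\delta}}}^2$) and these integrals by $\delta/\epsilon_{\rm eig}$, so once the integrals are evaluated the three claims reduce to elementary algebra.

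The main technical step is to obtain a closed-form antiderivative of $g_l$. By periodicity I may assume $\arg(\lambda_l)=0$, so that $g_l$ is even in $t$ and, after writing $4(1+\delta)\sin^2(\pi t) = 2(1+\delta)(1-\cos(2\pi t))$, the denominator becomes $(1+\delta)^2 + 1 - 2(1+\delta)\cos(2\pi t)$. The Weierstrass substitution $u=\tan(\pi t)$ then yields
\begin{equation}
    \int g_l(t)\,dt = \frac{1}{\pi}\arctan\!\Bigl(\frac{(2+\delta)\tan(\pi t)}{\delta}\Bigr) + C.
\end{equation}
Taking $t\to 1/2^-$ gives $\int_0^{1/2}g_l = 1/2$, hence $\int_0^1 g_l = 1$. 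The tail integral evaluates in closed form to $\frac{2}{\pi}\arctan\!\bigl(\delta/[(2+\delta)\tan(\epsilon_{\rm eig}/2)]\bigr)$, which by $\arctan(x)\le x$ and $\tan(x)\ge x$ on $[0,\pi/2)$ is bounded by $2\delta/(\pi\epsilon_{\rm eig})$. Note that the value of $\int_{t_-}^{t_+}g_l$ depends only on the width $\epsilon_{\rm eig}$, not on $\arg(\lambda_l)$, so it is the same for all $l$.

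With these integral bounds in hand, the three parts fall out directly. Writing $a_l := \norm{P_{\lambda_l,\epsilon}\ket{\tilde\phi_{\lambda_l,\delta}}}^2$, the discretization lemma gives $|a_l - I| \le \delta/\epsilon_{\rm eig}$ where $I := \int_{t_-}^{t_+}g_l$ satisfies $1-2\delta/(\pi\epsilon_{\rm eig}) \le I \le 1$. For $\delta \le \epsilon_{\rm eig}/4$ this forces $a_l \in [1/4, 5/4]$, yielding part 1. Part 2 then follows from the identity $\sqrt{a_0}/\sqrt{a_l} - 1 = (a_0-a_l)/\bigl(\sqrt{a_l}(\sqrt{a_0}+\sqrt{a_l})\bigr)$, using $|a_0-a_l|\le 2\delta/\epsilon_{\rm eig}$ (since both lie in an interval of length $2\delta/\epsilon_{\rm eig}$ around $I$) and $\sqrt{a_l}(\sqrt{a_0}+\sqrt{a_l})\ge 1/2$. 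Part 3 comes from $\norm{Q_{\lambda_l,\epsilon}\ket{\tilde\phi_{\lambda_l,\delta}}}^2 = \norm{\ket{\tilde\phi_{\lambda_l,\delta}}}^2 - a_l \le (1+\delta/\epsilon_{\rm eig}) - (I - \delta/\epsilon_{\rm eig}) \le 2\delta/\epsilon_{\rm eig} + 2\delta/(\pi\epsilon_{\rm eig}) = 2(1+1/\pi)\delta/\epsilon_{\rm eig}$.

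The one delicate point is the constant in the tail bound: a naive estimate obtained by substituting $\sin^2(\pi t)\ge 4t^2$ and dropping the $\delta^2$ from the denominator would give a tail of order $\pi\delta/\epsilon_{\rm eig}$, too large by a factor of $\pi^2/2$ and inconsistent with the advertised prefactor $\sqrt{2(1+1/\pi)}$ of part 3. Keeping the $\delta^2$ in the denominator and using the exact antiderivative above is what produces the sharp constant $2/\pi$. Everything else is bookkeeping.
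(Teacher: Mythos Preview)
Your proof is correct and follows essentially the same route as the paper: you invoke the discretization lemma to reduce to the continuous integrals, evaluate $\int g_l$ in closed form via the Weierstrass substitution to obtain the $\arctan$ expression (the paper writes the same quantity as $\frac{2}{\pi}\arctan\eta$ with $\eta = \tfrac{2+\delta}{\delta}\tan(\epsilon_{\rm eig}/2)$), observe that the integral over $[t_-,t_+]$ is independent of $l$, and then derive parts 1--3 by the same elementary algebra. The only cosmetic differences are that you use $a_l$ for the squared norm while the paper uses it for the norm itself, and that you bound $I$ from below more tightly (by $1-2\delta/(\pi\epsilon_{\rm eig})$) whereas the paper is content with $I\ge 1/2$ via $\eta\ge 1$; both yield the required $a_l\ge 1/4$.
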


\begin{proof}
Let $a_l = \norm{P_{\lambda_l}\ket{\tilde{\phi}_{\lambda_l,\delta}}}$
and $b_l = \norm{Q_{\lambda_l}\ket{\tilde{\phi}_{\lambda_l,\delta}}}$. 
\begin{enumerate}
\item We have
\begin{equation}
a_l^2 = \norm{P_{\lambda_l,\epsilon}\ket{\tilde{\phi}_{\lambda_l,\delta}}}^2 =
\sum_{t_-}^{t_+}[g_l,2^a]
\end{equation}
and
\begin{equation}
        \abs{\int_{t_-}^{t_+} g_l(t)dt - \sum_{t_-}^{t_+}[g_l,2^a]}  = 
        \abs{\int_{t_-}^{t_+} g_l(t)dt - a_l^2} \le 
        \epsilon_{\rm disc} \le \frac{\delta}{\epsilon_{\rm eig}}\le 1/4.
\end{equation}
By explicit integration, we get
\begin{equation}
\int_{t_-}^{t_+} g_l(t)dt = \frac{2}{\pi}\arctan{\eta}, \quad
\eta = \frac{2+\delta}{\delta}\tan(\epsilon_{\rm eig}/2).
\end{equation}
Since $\epsilon_{\rm eig} \ge \delta$, we have 
\begin{equation}
\eta = \frac{2+\delta}{\delta}\tan(\epsilon_{\rm eig}/2) \ge 
\frac{2+\delta}{\delta}\left(\frac{\epsilon_{\rm eig}}{2}\right) \ge 1
\end{equation}
and therefore $\int_{t_-}^{t_+} g_l(t)dt \ge 1/2$.
Consequently, we have 
\begin{equation}
a_l^2 = \norm{P_{\lambda_l,\epsilon}\ket{\tilde{\phi}_{\lambda_l,\delta}}}^2 \ge 
\int_{t_-}^{t_+} g_l(t)dt - \epsilon_{\rm disc} \ge 1/4.
\end{equation}
Consequently $a_l\ge 1/2$ for all $l$.
Similarly, 
\begin{equation}
a_l^2 =  \le 
\int_{t_-}^{t_+} g_l(t)dt + \epsilon_{\rm disc} \le 1 + \frac{\delta}{\epsilon_{\rm eig}} \le \frac{5}{4}.
\end{equation}

\item We have $\abs{a_l^2-a_0^2} \le 2\epsilon_{\rm disc}$. Then using $a_l \ge 1/2$, we get
\begin{equation}
\abs{1-\frac{a_0}{a_l}} = \abs{\frac{a_l^2-a_0^2}{a_l(a_0+a_l)}} \le \frac{2\epsilon_{\rm disc}}{1/2}
\le \frac{4\delta}{\epsilon_{\rm eig}}.
\end{equation}

\item Finally,
\begin{equation}
b_l^2 = \norm{Q_{\lambda_l,\epsilon}\ket{\tilde{\phi}_{\lambda_l,\delta}}}^2 =
\left(\sum_{0}^{1} - \sum_{t_-}^{t_+}\right)[g_l,2^a].
\end{equation}
We then have 
\begin{equation}
        \abs{\left(\int_{0}^{1} - \int_{t_-}^{t_+}\right) g_l(t)dt - 
        \left(\sum_{0}^{1} - \sum_{t_-}^{t_+}\right)[g_l,2^a]}  \le 
        2\epsilon_{\rm disc} \le 2\delta/\epsilon_{\rm eig}.
\end{equation}
The integral can be bounded as follows:
\begin{equation}
\left(\int_{0}^{1} - \int_{t_-}^{t_+}\right) g_l(t)dt = 1 - \frac{2}{\pi}\arctan{\eta} = 
\frac{2}{\pi}\text{arccot}({\eta}).
\end{equation}
Let $\theta = \text{arccot}({\eta})$, or equivalently, $\cot(\theta) = \eta$.
Then 
\begin{equation}
\theta \le \tan(\theta) = \frac{\delta}{2+\delta}\cot(\epsilon_{\rm eig}) \le 
\frac{\delta}{2+\delta}\left(\frac{2}{\epsilon_{\rm eig}}\right)
\le \frac{\delta}{\epsilon_{\rm eig}}.
\end{equation}
Consequently, we have 
\begin{equation}
\left(\int_{0}^{1} - \int_{t_-}^{t_+}\right) g_l(t)dt \le \frac{2\delta}{\pi \epsilon_{\rm eig}}.
\end{equation}
Finally,
\begin{equation}
b_l^2 \le \frac{2\delta}{\pi \epsilon_{\rm eig}}+2\epsilon_{\rm disc} \implies b_l \le 
\sqrt{(2+2/\pi)\frac{\delta}{\epsilon_{\rm eig}}}.
\end{equation}
\end{enumerate}
\end{proof}

Our final theorem requires bounding the error in approximating $\ket{\psi_A}$ in terms of the errors in  individual $\ket{\tilde{\phi}_{\lambda_l,\delta}}$. The following lemma allows us to accomplish this.
\begin{lemma}[Propagation of failure in eigenvalue estimation]
    Let $\ket{\psi} = \sum_l \beta_l \ket{\lambda_l}$ be such that $\norm{\ket{\psi}} = 1$. Let 
    $S$ be a rectangular matrix with columns $\{\ket{\lambda_l}\}$.
    Then for any set of ancilla states $\{\ket{\chi_l} \in \mathcal{H}'\}$,
    \begin{equation}
       \frac{\min_l\norm{\ket{\chi_l}}}{\kappa_S}  \le \norm{\sum_l \beta_l \ket{\chi_l}\ket{\lambda_l}} \le \kappa_S \max_l\norm{\ket{\chi_l}},
    \end{equation}    
    where $\kappa_S$ denotes the condition number of $S$.
\end{lemma}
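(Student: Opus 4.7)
The plan is to rewrite the vector in question as the image of a cleaner ``decoupled'' state under a tensor product operator, so that the two tensor factors can be handled independently.

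First I would introduce a standard orthonormal basis $\{\bm{e}_l\}$ indexing the eigenvalues, and define the decoupled state $\ket{\Phi} := \sum_l \beta_l \ket{\chi_l} \otimes \bm{e}_l$. By construction
\begin{equation}
\sum_l \beta_l \ket{\chi_l}\ket{\lambda_l} = (\mathds{1}_{\mathcal{H}'} \otimes S)\ket{\Phi},
\end{equation}
because $S \bm{e}_l = \ket{\lambda_l}$. The advantage of $\ket{\Phi}$ is that its squared norm factorizes cleanly as $\|\ket{\Phi}\|^2 = \sum_l |\beta_l|^2 \norm{\ket{\chi_l}}^2$, which is sandwiched between $\min_l \norm{\ket{\chi_l}}^2 \cdot \norm{\bm{\beta}}^2$ and $\max_l \norm{\ket{\chi_l}}^2 \cdot \norm{\bm{\beta}}^2$.

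Next I would relate $\norm{\bm{\beta}}$ to $\kappa_S$ using the hypothesis $\norm{\ket{\psi}}=1$ and $\ket{\psi}=S\bm{\beta}$. Since $S$ has linearly independent columns, the Moore--Penrose pseudoinverse $S^+$ satisfies $S^+ S = \mathds{1}$ and $\bm{\beta} = S^+\ket{\psi}$, with $\norm{S} = \sigma_{\max}(S)$ and $\norm{S^+} = 1/\sigma_{\min}(S)$. The standard inequalities $\sigma_{\min}(S)\norm{\bm{\beta}} \le \norm{S\bm{\beta}} \le \sigma_{\max}(S)\norm{\bm{\beta}}$ then give
\begin{equation}
\frac{1}{\sigma_{\max}(S)} \le \norm{\bm{\beta}} \le \frac{1}{\sigma_{\min}(S)}.
\end{equation}

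For the upper bound I would apply $\norm{(\mathds{1}\otimes S)\ket{\Phi}} \le \norm{S}\norm{\ket{\Phi}}$, combine with the maximum-norm bound on $\ket{\Phi}$, and use $\norm{\bm{\beta}} \le 1/\sigma_{\min}(S)$, obtaining a factor of $\sigma_{\max}(S)/\sigma_{\min}(S) = \kappa_S$. For the lower bound I would go the other direction: apply $\mathds{1}\otimes S^+$ to recover $\ket{\Phi}$, so that $\norm{\ket{\Phi}} \le \norm{S^+}\,\norm{\sum_l \beta_l\ket{\chi_l}\ket{\lambda_l}}$; then use $\norm{\ket{\Phi}} \ge \min_l \norm{\ket{\chi_l}}\cdot \norm{\bm{\beta}}$ together with $\norm{\bm{\beta}} \ge 1/\sigma_{\max}(S)$, which again produces $\kappa_S$ in the denominator. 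There is no real obstacle here beyond bookkeeping; the only conceptual step is the tensor-product factorization through $S$ and $S^+$, which is clean because $S$ has full column rank whenever $\kappa_S$ is finite.
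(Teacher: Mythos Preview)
Your proposal is correct and follows essentially the same approach as the paper: both arguments factor the vector through $(\mathds{1}\otimes S)$ acting on the orthonormal ``decoupled'' state $\sum_l \beta_l \ket{\chi_l}\ket{e_l}$, derive the two-sided bound $1/\sigma_{\max}(S)\le \norm{\bm\beta}\le 1/\sigma_{\min}(S)$ from $\norm{S\bm\beta}=1$, and then combine with the trivial min/max bounds on $\sum_l|\beta_l|^2\norm{\ket{\chi_l}}^2$. The only cosmetic difference is that you invoke the pseudoinverse $S^+$ explicitly for the lower bound, whereas the paper simply uses $\norm{Sv}\ge\sigma_{\min}(S)\norm{v}$ directly; these are equivalent.
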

\begin{proof}
    Since the columns of $S$ are $\ket{\lambda_l}$, we have
    $\ket{\lambda_l} = S\ket{e_l}$, where $\{\ket{e_l}\}$ denote the
    computational basis states.
    First note that $\ket{\psi} =S\sum_l \beta_l \ket{e_l}$, which together with 
    $\norm{\ket{\psi}}=1$ yields
\begin{equation}
    \sigma_{\max}(S)\norm{\sum_l \beta_l \ket{e_l}} \ge 1 ,\quad 
    \sigma_{\min}(S)\norm{\sum_l \beta_l \ket{e_l}} \le 1.
\end{equation}
We therefore get
\begin{equation}
\label{eq:normbeta}
        \frac{1}{\sigma_{\max}(S)} \le \sqrt{\sum_l \abs{\beta_l}^2} \le \frac{1}{\sigma_{\min}(S)}.
\end{equation}
Now
\begin{align}
    \norm{\sum_l \beta_l \ket{\chi_l}\ket{\lambda_l}} &= \norm{S\sum_l \beta_l \ket{\chi_l}\ket{e_l}}\\
    &\le \sigma_{\max}(S)\norm{\sum_l \beta_l \ket{\chi_l}\ket{e_l}}\\
    &\le \sigma_{\max}(S)\sqrt{\sum_l \abs{\beta_l}^2 \norm{\ket{\chi_l}}^2}\\
    &\le \sigma_{\max}(S)\max_l\norm{\ket{\chi_l}}\sqrt{\sum_l \abs{\beta_l}^2 }\\
    &\le \kappa_S\max_l\norm{\ket{\chi_l}},
\end{align}
where in the last step we used Eq.~\eqref{eq:normbeta}. The other inequality can be proved similarly with the starting point
\begin{equation}
    \norm{\sum_l \beta_l \ket{\chi_l}\ket{\lambda_l}} \ge 
    \sigma_{\min}(S)\norm{\sum_l \beta_l \ket{\chi_l}\ket{e_l}}.
\end{equation}
\end{proof}

\begin{lemma}
\label{lem:normalizederror}
Let $\ket{x}$ and $\ket{\tilde{x}}$ be such that $\norm{\ket{x}-\ket{\tilde{x}}} \le \epsilon$ for some 
$\epsilon \in \mathbb{R}^+$. Then 
\begin{equation}
\norm{\frac{\ket{x}}{\norm{\ket{x}}} - \frac{\ket{\tilde{x}}}{\norm{\ket{\tilde{x}}}}} \le \frac{2\epsilon}{\norm{\ket{x}}}
\end{equation}
\end{lemma}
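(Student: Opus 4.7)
The plan is to apply the triangle inequality after introducing the intermediate vector $\ket{\tilde{x}}/\norm{\ket{x}}$, which differs from $\ket{x}/\norm{\ket{x}}$ only through the numerator, and from $\ket{\tilde{x}}/\norm{\ket{\tilde{x}}}$ only through the denominator. Formally, I would write
\begin{equation}
\norm{\frac{\ket{x}}{\norm{\ket{x}}} - \frac{\ket{\tilde{x}}}{\norm{\ket{\tilde{x}}}}} \le \norm{\frac{\ket{x}-\ket{\tilde{x}}}{\norm{\ket{x}}}} + \norm{\ket{\tilde{x}}}\,\abs{\frac{1}{\norm{\ket{x}}} - \frac{1}{\norm{\ket{\tilde{x}}}}}.
\end{equation}

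The first term is immediately $\le \epsilon/\norm{\ket{x}}$ by hypothesis. For the second term, I would simplify
\begin{equation}
\norm{\ket{\tilde{x}}}\,\abs{\frac{1}{\norm{\ket{x}}} - \frac{1}{\norm{\ket{\tilde{x}}}}} = \frac{\abs{\norm{\ket{\tilde{x}}}-\norm{\ket{x}}}}{\norm{\ket{x}}},
\end{equation}
and then apply the reverse triangle inequality $\abs{\norm{\ket{\tilde{x}}}-\norm{\ket{x}}} \le \norm{\ket{x}-\ket{\tilde{x}}} \le \epsilon$ to bound this by $\epsilon/\norm{\ket{x}}$. Adding the two contributions yields the stated bound $2\epsilon/\norm{\ket{x}}$.

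There is no real obstacle here; the only subtlety is the choice of intermediate vector, since the other natural option (inserting $\ket{x}/\norm{\ket{\tilde{x}}}$) would give the bound in terms of $\norm{\ket{\tilde{x}}}$ rather than the desired $\norm{\ket{x}}$. Everything else is triangle inequality and reverse triangle inequality.
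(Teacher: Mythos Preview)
Your proof is correct and matches the paper's approach exactly: both insert the intermediate vector $\ket{\tilde{x}}/\norm{\ket{x}}$, bound the first piece directly by $\epsilon/\norm{\ket{x}}$, and handle the second piece via the reverse triangle inequality to get another $\epsilon/\norm{\ket{x}}$.
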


\begin{proof}
\begin{align}
\norm{\frac{\ket{x}}{\norm{\ket{x}}} - \frac{\ket{\tilde{x}}}{\norm{\ket{\tilde{x}}}}} &\le
\norm{\frac{\ket{x}}{\norm{\ket{x}}} - \frac{\ket{\tilde{x}}}{\norm{\ket{x}}}} + \norm{\frac{\ket{\tilde{x}}}{\norm{\ket{x}}} - \frac{\ket{\tilde{x}}}{\norm{\ket{\tilde{x}}}}} \nonumber\\
&\le \frac{\epsilon}{\norm{\ket{x}}} + \norm{\ket{\tilde{x}}}\frac{\abs{\norm{\ket{x}} - \norm{\ket{\tilde{x}}}}}{\norm{\ket{x}}\norm{\ket{\tilde{x}}}} \nonumber \\
& \le  \frac{\epsilon}{\norm{\ket{x}}} + \frac{\epsilon}{\norm{\ket{x}}} = \frac{2\epsilon}{\norm{\ket{x}}},
\end{align}
where we used $\abs{\norm{\ket{x}} - \norm{\ket{\tilde{x}}}} \le \norm{\ket{x}-\ket{\tilde{x}}} \le \epsilon$ 
by triangle inequality
in the penultimate step.
\end{proof}

We are now ready to state and prove the main theorem of this section.
This theorem stipulates the values of the parameters $\delta, a$ 
in our algorithm that ensure that $\ket{\tilde{\psi}_A}$ is $\epsilon_{\rm st}$-close
to $\ket{\psi_A}$ as required.
\begin{thm}[Complexity of solving QEUE]
\label{thm:qpeideal}
Given the inputs of Problem~\ref{prob:qpe},  a quantum algorithm can be constructed that generates $\ket{\psi_A}$ to accuracy $\epsilon_{\rm st}$ by making
    \begin{equation}
        \mathbf{O}\left(\frac{ \alpha_A \kappa_S^4\mathcal{K}^\circ_\delta(A)}{\epsilon_{\rm eig}\epsilon_{\rm st}^2} 
        \log\left(\frac{1}{\epsilon_{\rm st}}\right)\right)
    \end{equation}
queries to controlled-$O_A$, controlled-$O_{\psi}$ and their inverses, where
$\delta \in \Theta(\epsilon_{\rm eig}\epsilon_{\rm st}^2/\kappa_S^4)$.
\end{thm}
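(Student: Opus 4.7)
The plan is to build $\ket{\psi_A}$ in two stages: first use Lemma~\ref{lem:stateprep_qpe} to obtain the resolvent state $\ket{\tilde{\psi}_A}/\norm{\ket{\tilde{\psi}_A}}$ to accuracy $\epsilon_{\rm st}/2$, then argue that for a suitable choice of $\delta$ (and correspondingly $a$) this resolvent state is itself within $\epsilon_{\rm st}/2$ of an admissible target state of the form required in Problem~\ref{prob:qpe}. The natural choice of target is to set
\begin{equation}
    \ket{\phi_{\lambda_l}} := \frac{P_{\lambda_l,\epsilon}\ket{\tilde{\phi}_{\lambda_l,\delta}}}
    {\norm{P_{\lambda_l,\epsilon}\ket{\tilde{\phi}_{\lambda_l,\delta}}}}, \qquad \epsilon=\epsilon_{\rm eig}/2\pi,
\end{equation}
which is by construction an $(\epsilon_{\rm eig}/2\pi)$-APQA because the projector annihilates all undesired basis states; the resulting normalized superposition $\ket{\psi_A} = (\sum_l \beta_l \ket{\phi_{\lambda_l}}\ket{\lambda_l})/\norm{\cdot}$ then serves as the legitimate target.

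Next I would decompose $\ket{\tilde{\phi}_{\lambda_l,\delta}} = \ket{p_l} + \ket{q_l}$ with $\ket{p_l} = P_{\lambda_l,\epsilon}\ket{\tilde{\phi}_{\lambda_l,\delta}}$ and $\ket{q_l} = Q_{\lambda_l,\epsilon}\ket{\tilde{\phi}_{\lambda_l,\delta}}$, and compare $\ket{\tilde{\psi}_A}$ to the unnormalized vector $\ket{x} := \norm{\ket{p_0}}\sum_l \beta_l \ket{\phi_{\lambda_l}}\ket{\lambda_l}$, which is a positive multiple of $\ket{\psi_A}$. The difference splits into two pieces:
\begin{equation}
\ket{\tilde{\psi}_A} - \ket{x} = \sum_l \beta_l\bigl(\norm{\ket{p_l}}-\norm{\ket{p_0}}\bigr)\ket{\phi_{\lambda_l}}\ket{\lambda_l} + \sum_l \beta_l \ket{q_l}\ket{\lambda_l}.
\end{equation}
Lemma~\ref{lem:qpe_errors} bounds the per-$l$ norms in these two sums by $O(\delta/\epsilon_{\rm eig})$ and $O(\sqrt{\delta/\epsilon_{\rm eig}})$ respectively, and the propagation lemma (with condition number $\kappa_S$) lifts each to a bound of $O(\kappa_S\sqrt{\delta/\epsilon_{\rm eig}})$ on $\norm{\ket{\tilde{\psi}_A}-\ket{x}}$. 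A matching lower bound $\norm{\ket{x}} \ge \norm{\ket{p_0}}/\kappa_S \ge 1/(2\kappa_S)$ follows from the same propagation lemma together with $\norm{\ket{p_0}} \ge 1/2$.

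Applying Lemma~\ref{lem:normalizederror} then yields
\begin{equation}
\left\lVert\frac{\ket{\tilde{\psi}_A}}{\norm{\ket{\tilde{\psi}_A}}} - \ket{\psi_A}\right\rVert
\le \frac{2\norm{\ket{\tilde{\psi}_A}-\ket{x}}}{\norm{\ket{x}}} \in O\!\left(\kappa_S^2\sqrt{\delta/\epsilon_{\rm eig}}\right),
\end{equation}
so picking $\delta \in \Theta(\epsilon_{\rm eig}\epsilon_{\rm st}^2/\kappa_S^4)$ makes this bound at most $\epsilon_{\rm st}/2$. Combined with the $\epsilon_{\rm st}/2$ error from the QLSA step, the triangle inequality gives total accuracy $\epsilon_{\rm st}$. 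The query cost then follows from Lemma~\ref{lem:stateprep_qpe}, which scales as $O(\alpha_A \mathcal{K}_\delta^\circ(A)/\delta)$; substituting the chosen $\delta$ produces exactly the stated bound, provided $a$ is taken large enough (logarithmic in $1/\delta$) to meet the hypothesis $2^a \ge (3\sqrt{2}\pi+6)\epsilon_{\rm eig}/\delta^3$ of Lemma~\ref{lem:qpe_errors}. The main obstacle in the argument is the careful bookkeeping of $\kappa_S$ factors: both the norm amplification of the error terms and the norm contraction in the denominator each bring a $\kappa_S$, so the overall $\kappa_S^4$ appears only after simultaneously exploiting both directions of the propagation lemma. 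Once that is handled, the remaining calculations are straightforward insertions of the bounds from Lemma~\ref{lem:qpe_errors} and Lemma~\ref{lem:stateprep_qpe}.
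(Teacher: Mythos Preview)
Your proposal is correct and follows essentially the same approach as the paper: the target vector $\ket{x}=\norm{\ket{p_0}}\sum_l\beta_l\ket{\phi_{\lambda_l}}\ket{\lambda_l}$ coincides with the paper's $\ket{\tilde{\psi}_{A,1}}$, and your two error pieces are exactly the paper's $\ket{\tilde{\psi}_{A,2}}$ and $\ket{\tilde{\psi}_{A,3}}$, bounded via Lemma~\ref{lem:qpe_errors} and the propagation lemma, then finished with Lemma~\ref{lem:normalizederror} and Lemma~\ref{lem:stateprep_qpe} just as in the paper.
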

\begin{proof}
Observe that $\ket{\tilde{\psi}_A}= \sum_{l} \beta_l \ket{\tilde{\phi}_{\lambda_l,\delta}}\ket{\lambda_l}$
can be expressed as
\begin{multline}
        \ket{\tilde{\psi}_A} = 
         \underbrace{\sum_{l} \beta_l \left(\frac{\norm{P_{\lambda_0,\epsilon}\ket{\tilde{\phi}_{\lambda_0,\delta}}}}{\norm{P_{\lambda_l,\epsilon}\ket{\tilde{\phi}_{\lambda_l,\delta}}}}\right)P_{\lambda_l,\epsilon}\ket{\tilde{\phi}_{\lambda_l,\delta}}\ket{\lambda_l}}_{\ket{\tilde{\psi}_{A,1}}} \\+ 
        \underbrace{\sum_{l} \beta_l \left(1-\frac{\norm{P_{\lambda_0,\epsilon}\ket{\tilde{\phi}_{\lambda_0,\delta}}}}{\norm{P_{\lambda_l,\epsilon}\ket{\tilde{\phi}_{\lambda_l,\delta}}}}\right)P_{\lambda_l,\epsilon}\ket{\tilde{\phi}_{\lambda_l,\delta}}\ket{\lambda_l}}_{\ket{\tilde{\psi}_{A,2}}}\\
        +\underbrace{\sum_{l} \beta_l Q_{\lambda_l,\epsilon}\ket{\tilde{\phi}_{\lambda_l,\delta}}\ket{\lambda_l}}_{\ket{\tilde{\psi}_{A,3}}}.
    \end{multline}
Now using Lemma~\ref{lem:qpe_errors}, 
\begin{align}
        \norm{\ket{\tilde{\psi}_{A,1}}} &\ge \norm{P_{\lambda_0,\epsilon}\ket{\tilde{\phi}_{\lambda_l,\delta}}}/\kappa_S 
        = \frac{a_0}{\kappa_S} \ge \frac{1}{2\kappa_S}\\
        \norm{\ket{\tilde{\psi}_{A,2}}} &\le \kappa_S\max_l \left(1-\frac{\norm{P_{\lambda_0,\epsilon}\ket{\tilde{\phi}_{\lambda_0,\delta}}}}{\norm{P_{\lambda_l,\epsilon}\ket{\tilde{\phi}_{\lambda_l,\delta}}}}\right)\norm{P_{\lambda_l,\epsilon}\ket{\tilde{\phi}_{\lambda_l,\delta}}} \le \kappa_S\frac{2\sqrt{5}\delta}{\epsilon_{\rm eig}}\\
        \norm{\ket{\tilde{\psi}_{A,3}}} &\le \kappa_S\max_l \norm{Q_{\lambda_l,\epsilon}\ket{\tilde{\phi}_{\lambda_l,\delta}}}
        \le \kappa_S\sqrt{2(1+1/\pi)\frac{\delta}{\epsilon_{\rm eig}}}.
    \end{align}
    We therefore have
    \begin{equation}
        \frac{\norm{\ket{\tilde{\psi}_{A,2}}}}{\norm{\ket{\tilde{\psi}_{A,1}}}}
        \le \frac{4\sqrt{5}\delta\kappa_S^2}{\epsilon_{\rm eig}}, \quad 
        \frac{\norm{\ket{\tilde{\psi}_{A,3}}}}{\norm{\ket{\tilde{\psi}_{A,1}}}}
        \le \kappa_S^2\sqrt{8(1+1/\pi)\frac{\delta}{\epsilon_{\rm eig}}}
    \end{equation}
For 
\begin{equation}
\delta \le \min\left\{\frac{\epsilon_{\rm st}\epsilon_{\rm eig}}{32\sqrt{5}\kappa_S^2}, \frac{\epsilon_{\rm st}^2\epsilon_{\rm eig}}{512(1+1/\pi)\kappa_S^4}\right\}, 
\end{equation}
we get
 \begin{equation}
        \frac{\norm{\ket{\tilde{\psi}_{A,2}}}}{\norm{\ket{\tilde{\psi}_{A,1}}}}
        \le \frac{\epsilon_{\rm st}}{8}, \quad 
        \frac{\norm{\ket{\tilde{\psi}_{A,3}}}}{\norm{\ket{\tilde{\psi}_{A,1}}}}
        \le \frac{\epsilon_{\rm st}}{8}.    
\end{equation}
We now set $\ket{x} = \ket{\tilde{\psi}_{A,1}}$ and $\ket{\tilde{x}} = \ket{\tilde{\psi}_{A}}$. 
Then we have 
\begin{equation}
\frac{\norm{\ket{\tilde{x}} - \ket{x}}}{\norm{\ket{x}}} \le 
\frac{\norm{\ket{\tilde{\psi}_{A,2}}}}{\norm{\ket{\tilde{\psi}_{A,1}}}} +
\frac{\norm{\ket{\tilde{\psi}_{A,3}}}}{\norm{\ket{\tilde{\psi}_{A,1}}}} \le \frac{\epsilon_{\rm st}}{4}.
\end{equation}
Equivalently, we have 
\begin{equation}
\frac{\norm{\ket{\tilde{x}} - \ket{x}}}{\norm{\ket{x}}} \le \frac{\epsilon_{\rm st}\norm{\ket{x}}}{4}.
\end{equation}
Then by Lemma~\ref{lem:normalizederror}, we have 
\begin{equation}
\norm{\frac{\ket{x}}{\norm{\ket{x}}} - \frac{\ket{\tilde{x}}}{\norm{\ket{\tilde{x}}}}} \le \frac{2\epsilon_{\rm st}}{4} 
= \frac{\epsilon_{\rm st}}{2}.
\end{equation}
If QLSA returns the state $\ket{\tilde{x}}/\norm{\ket{\tilde{x}}} = \ket{\tilde{\psi}_A}/\norm{\ket{\tilde{\psi}_A}}$ with accuracy $\epsilon_{\rm st}/2$ as in Lemma \ref{lem:stateprep_qpe}, then the total error will be less than $\epsilon_{\rm st}$. 

The algorithm for preparing the state $\ket{\tilde{\psi}_A}/\norm{\ket{\tilde{\psi}_A}}$ and its query complexity follows from 
Lemma~\ref{lem:stateprep_qpe} and the fact that 
\begin{equation}
\frac{1}{\delta} \in {\bf O}\left(\frac{\kappa_S^4}{\epsilon_{\rm st}^2\epsilon_{\rm eig}}\right).
\end{equation}
\end{proof}

\begin{remark}
In Theorem~\ref{thm:qpeideal}, $\mathcal{K}^\circ_\delta(A)$ can be replaced by
any $K$ such that $\mathcal{K}^\circ_\delta(A) \le K$ 
that is given instead of $\mathcal{K}^\circ_\delta(A)$. In particular, 
if the dimension of the largest Jordan block $d$ and 
Jordan condition number $\bar{\kappa}_A$ are given, then by 
Proposition~\ref{prop:kreissbound}, $\mathcal{K}^\circ_\delta(A)$
can be replaced by $K = \bar{\kappa}_A(1-\delta^d)/\delta^{d-1}(1-\delta)$, 
resulting in the total query complexity
\begin{equation}
    {\bf O}\left(\frac{\alpha_A\bar{\kappa}_A\kappa_S^{4d}}{\epsilon_{\rm eig}^d \epsilon_{\rm st}^{2d}}
    \log\left(\frac{1}{\epsilon_{\rm st}}\right)\right) \in
    {\bf O}\left(\frac{\alpha_A\bar{\kappa}_A^{4d+1}}{\epsilon_{\rm eig}^d \epsilon_{\rm st}^{2d}}
    \log\left(\frac{1}{\epsilon_{\rm st}}\right)\right).
\end{equation}
For diagonalizable $A$, we have $d=1$, which leads to a bound on the total query complexity 
\begin{equation}
    {\bf O}\left(\frac{\alpha_A\bar{\kappa}_A\kappa_S^4}{\epsilon_{\rm eig} \epsilon_{\rm st}^2}
    \log\left(\frac{1}{\epsilon_{\rm st}}\right)\right) \in 
    {\bf O}\left(\frac{\alpha_A\bar{\kappa}_A^{5}}{\epsilon_{\rm eig}^d \epsilon_{\rm st}^{2d}}
    \log\left(\frac{1}{\epsilon_{\rm st}}\right)\right).
\end{equation}
\end{remark}
\begin{remark}
    Notice that we choose $a \in \text{polylog}(\alpha_A, \kappa_S, 1/\epsilon_{\rm eig}, 1/\epsilon_{\rm st})$ in our algorithm, and therefore the number of ancilla qubits and additional two-qubit gates required is larger than the total query complexity by a factor at most $\text{polylog}(\alpha_A, \kappa_S, 1/\epsilon_{\rm eig}, 1/\epsilon_{\rm st})$. 
\end{remark}

\section{Estimation of real eigenvalues}
\label{sec:qeve}

In this section, we first define the problem of eigenvalue estimation rigorously, the informal statement for which is given in Problem~\ref{prob:inf_qeve}. Then we design the algorithm based on resolvent-state generation, and finally prove the main result by performing error and complexity analysis. The 
problem statement as well as the analysis of the algorithm is very similar to that of 
QEUE in \S~\ref{sec:qpe}.

We now provide the formal definition of the problem solved by our algorithm. Recall from Problem~\ref{prob:inf_qeve} that we are given a matrix $A$ whose spectrum is not necessarily real. We denote the real eigenvalues of $A$ by $\{\lambda_l\}$, and other eigenvalues by $\{\mu_k\}$. It is assumed that $\Im(\mu_k) \notin (0,\epsilon_{\rm eig})$ for all $k$. No assumption is made on the defectiveness of the eigenvalues $\{\lambda_l\}$ and $\{\mu_k\}$. We are given a state $\ket{\psi} = \sum_l \beta_l \ket{\lambda_l}$, where $\ket{\lambda_l}$ is a eigenvector of $A$ with eigenvalue $\lambda_l$. In other words,  $\ket{\lambda_l}$ is a generalized eigenvector with rank one. The goal is to generate a state close to $\ket{\psi_A}= \sum_{l}\beta_l\ket{\phi_{\lambda_l}}\ket{\lambda_l}$, where $\ket{\phi_{\lambda_l}}$ is an $\epsilon_{\rm eig}$-APQA of  $\lambda_l$, 
as defined in Def.~\ref{def:aqa}. We are now ready to state the problem rigorously.

\begin{problem}[Quantum coherent Estimation of Real Eigenvalues (QERE)]
\label{prob:qeve}
We are given two accuracy parameters $\epsilon_{\rm st}, \epsilon_{\rm eig} \in \mathbb{R}^+$, an $\alpha_A$-block encoding $O_A$ of an $n$-qubit matrix $A$ such that $\Im(\Lambda(A))\cap(0,\epsilon_{\rm eig})=\emptyset$, and an $n$-qubit unitary circuit $O_{\psi}$ such that $\ket{\psi}:=O_{\psi}\ket{0^n}$ has support over only the eigenstates $\{\ket{\lambda_l}\}$ of $A$ with real eigenvalues $\{\lambda_l\}$, i.e.  
\begin{equation}
    \ket{\psi} = \sum_{l:\lambda_l \in \Lambda(A) \cap \mathbb{R}} \beta_l \ket{\lambda_l}, \quad \beta_l \in \mathbb{C} \ \forall l.
\end{equation}
The goal is to generate to an accuracy $\epsilon_{\rm st}$ an $(n+a)$-qubit state of the form
\begin{equation}
    \ket{\psi_A} = \frac{\sum_l \beta_l \ket{\phi_{\lambda_l}}\ket{\lambda_l}}
    {\norm{\sum_l \beta_l \ket{\phi_{\lambda_l}}\ket{\lambda_l}}},
\end{equation}
where $\ket{\phi_{\lambda_l}} \in \mathcal{H}_2^{\otimes a}$ for each $l$ is an $(\epsilon_{\rm eig}/\rho)$-APQA of $\lambda_l$ with respect to a discrete curve $[\gamma,2^a]$. Here $\gamma$ is the curve 
\begin{equation}
    \gamma:[0,1]\to [-\rho,\rho],\quad  \gamma(t) = \rho(2t-1),\quad  \rho=\alpha_A + \epsilon_{\rm eig}.
\end{equation}
\end{problem}
Even though the eigenvalues $\{\lambda_l\}$ lie on the 
segment $[-\alpha_A,\alpha_A]$, we choose $\text{Image}(\gamma)$ to be 
$[-\alpha_A-\epsilon_{\rm eig},\alpha_A+\epsilon_{\rm eig}]$
in our problem. This is to ensure that an approximation with precision $\epsilon_{\rm eig}$ of any eigenvalue $\lambda_l$ indeed lies on $\gamma$. Similarly, the problem requires preparation of 
$(\epsilon_{\rm eig}/2\rho)$-AQPA of $\lambda_l$, which ensures that a good estimate has
at most $\epsilon_{\rm eig}$ additive error.

The design of the algorithm reflects that of the algorithm for QEUE. We start by constructing
the resolvent matrix $R$ with diagonal block $R(z_j) = (z\mathds{1}_n - A)^{-1}$. However, 
in this case, we choose $z_j= \gamma(t_j) + i\delta$ for some $0<\delta<\epsilon_{\rm eig}$.
The matrix $R$ can be expressed compactly in terms of a diagonal matrix with entries $\{z_j\}$ 
as defined below. 
\begin{definition}
For a $0<\delta<\epsilon_{\rm eig}$, $a\in\mathbb{Z}^+$, and a discretization $\Gamma=[\gamma, 2^a]$, define
\begin{equation}
    Z_{\delta,\Gamma} := \sum_{j=0}^{2^a-1} z_j \ket{j}\bra{j}, \quad 
    z_j := \gamma(t_j) + i\delta.
\end{equation}
\end{definition}

Since none of the eigenvalues of $A$ have imaginary part $\delta$, the resolvent matrix $R:=(Z_{\delta,\Gamma}\otimes \mathds{1}_n - \mathds{1}_a \otimes A)^{-1}$ is bounded. Our algorithm proceeds by generating an approximation to the state $\ket{\tilde{\psi}_{A}} \propto R\ket{+^n}\ket{\psi}$ using QLSA. This state, in turn, provides an approximation of the desired state $\ket{\psi_A}$. Towards the former step, we first generate the block encoding of 
\begin{equation}
\label{eq:A_H}
M: = R^{-1} = Z_{\delta,\Gamma}\otimes \mathds{1}_n - \mathds{1}_a \otimes A
\end{equation}
in the following lemma.
\begin{lemma}[Block encoding of $M$]
Given the inputs of Problem~\ref{prob:qeve}, a $\delta>0$, and a discretization $\Gamma=[\gamma, 2^a]$, then an $\alpha_M$-block encoding $O_M$ of $M$ with $\alpha_M = \sqrt{\rho^2+\delta^2} +\alpha_A $ can be constructed with one use of $O_A$.
\end{lemma}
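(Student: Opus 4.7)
The plan is to mirror the proof of the analogous block encoding lemma for $M$ in the QEUE setting (\S~\ref{sec:qpe}), adjusting only the scaling factor to account for the different geometry of the points $z_j$. The construction proceeds by first block encoding each of the two summands in $M = Z_{\delta,\Gamma}\otimes \mathds{1}_n - \mathds{1}_a \otimes A$ separately, and then invoking Lemma~\ref{lem:LCBE} to form a block encoding of their linear combination.

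First I would bound the spectral norm of $Z_{\delta,\Gamma}$. Since $Z_{\delta,\Gamma}$ is diagonal with entries $z_j = \rho(2t_j-1)+i\delta$ for $t_j=j/2^a \in [0,1]$, we have
\begin{equation}
    \norm{Z_{\delta,\Gamma}} = \max_j |z_j| = \max_j \sqrt{\rho^2(2t_j-1)^2+\delta^2} \le \sqrt{\rho^2+\delta^2}.
\end{equation}
Because $Z_{\delta,\Gamma}$ is diagonal with known entries, a $\sqrt{\rho^2+\delta^2}$-block encoding of $Z_{\delta,\Gamma}$ can be constructed using standard techniques (e.g.\ by loading the values $z_j/\sqrt{\rho^2+\delta^2}$ into amplitudes of an ancilla), requiring no queries to $O_A$. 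Tensoring with $\mathds{1}_n$ gives a $\sqrt{\rho^2+\delta^2}$-block encoding of $Z_{\delta,\Gamma}\otimes\mathds{1}_n$ at no additional query cost.

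Second, the given $\alpha_A$-block encoding $O_A$ of $A$ trivially yields an $\alpha_A$-block encoding of $\mathds{1}_a\otimes A$ by appending $a$ idle qubits on the system side of the block-encoded operator; this uses one query to $O_A$. Finally, applying Lemma~\ref{lem:LCBE} to the two block encodings (one with sign $+1$ and the other with sign $-1$) produces an $\alpha_M$-block encoding of the difference $M = Z_{\delta,\Gamma}\otimes \mathds{1}_n - \mathds{1}_a\otimes A$ with the subnormalizations adding, i.e. $\alpha_M = \sqrt{\rho^2+\delta^2}+\alpha_A$, and the total query cost is one use of $O_A$ as claimed.

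No step presents a real obstacle here; the only minor subtlety is ensuring that the bound $\norm{Z_{\delta,\Gamma}}\le\sqrt{\rho^2+\delta^2}$ is tight enough that no sharper constant is available, which follows because the endpoints $t_j$ closest to $0$ or $1$ make $|z_j|$ arbitrarily close to $\sqrt{\rho^2+\delta^2}$ for large $2^a$.
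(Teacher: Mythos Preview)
Your proposal is correct and follows essentially the same approach as the paper: compute (an upper bound on) $\norm{Z_{\delta,\Gamma}}$, block-encode the diagonal matrix at that subnormalization, and then invoke Lemma~\ref{lem:LCBE} to combine with $O_A$. The only cosmetic difference is that the paper asserts $\norm{Z_{\delta,\Gamma}}=\sqrt{\rho^2+\delta^2}$ as an equality (indeed $|z_0|=\sqrt{\rho^2+\delta^2}$ since $t_0=0$), whereas you state it as an inequality; this does not affect the argument.
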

\begin{proof}
Since $\norm{Z_{\delta,\Gamma}} = \sqrt{\rho^2+\delta^2}$, we can construct a $(\sqrt{\rho^2+\delta^2})$-block encoding of $Z_{\delta,\Gamma}$. Then the desired 
$\alpha_M$-block encoding of $M$ can be constructed by linear combination using Lemma~\ref{lem:LCBE}.
\end{proof}

The complexity of generating the state $\ket{\tilde{\psi}_{A}}$ using QLSA depends on $\norm{M^{-1}}$. The following lemma establishes bounds on $\norm{M^{-1}}$.
\begin{lemma}[Spectral norm bound on the resolvent]
\label{lem:condA}
Let $A$ be a complex matrix of size $2^n \times 2^n$ such that $\mathcal{K}_{\delta}(-iA) < \infty$ for some $\delta>0$. Then $M$ defined in Eq.~\eqref{eq:A_H}  is invertible and
\begin{equation}
    \norm{M^{-1}} \le \frac{\mathcal{K}_{\delta}(-iA)}{\delta}.
\end{equation}
\end{lemma}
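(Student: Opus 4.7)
The plan is to mirror the argument used for Lemma~\ref{lem:condA_qpe}, exploiting the block-diagonal structure of $M$ in the ancilla register. Since $Z_{\delta,\Gamma} = \sum_j z_j \ket{j}\bra{j}$ is diagonal, one has
\begin{equation}
    M = \sum_{j=0}^{2^a-1} \ket{j}\bra{j}\otimes \bigl(z_j \mathds{1}_n - A\bigr).
\end{equation}
The assumption $\Im(\Lambda(A))\cap(0,\epsilon_{\rm eig})=\emptyset$ from Problem~\ref{prob:qeve}, together with $\delta\in(0,\epsilon_{\rm eig})$, guarantees that every $z_j=\gamma(t_j)+i\delta$ avoids $\Lambda(A)$, so each block $z_j\mathds{1}_n-A$ is invertible and therefore so is $M$, with
\begin{equation}
    \norm{M^{-1}} = \max_{j}\norm{\bigl(z_j\mathds{1}_n-A\bigr)^{-1}}.
\end{equation}

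The remaining step is to identify this maximum with a value of the (restricted) Kreiss constant of $-iA$. The key observation is the identity
\begin{equation}
    (\delta+iy)\mathds{1}_n + iA \;=\; i\bigl[(y-i\delta)\mathds{1}_n + A\bigr] \;=\; -i\bigl[(-y+i\delta)\mathds{1}_n - A\bigr],
\end{equation}
which gives, after taking inverses and norms,
\begin{equation}
    \norm{\bigl((\delta+iy)\mathds{1}_n + iA\bigr)^{-1}} \;=\; \norm{\bigl((-y+i\delta)\mathds{1}_n - A\bigr)^{-1}}.
\end{equation}
Since $y$ ranges over all of $\mathbb{R}$, so does $x:=-y$, so that
\begin{equation}
    \sup_{y\in\mathbb{R}}\norm{\bigl((\delta+iy)\mathds{1}_n + iA\bigr)^{-1}} \;=\; \sup_{x\in\mathbb{R}}\norm{\bigl((x+i\delta)\mathds{1}_n - A\bigr)^{-1}}.
\end{equation}
By Def.~\ref{def:Kdelta}, the left-hand side equals $\mathcal{K}_{\delta}(-iA)/\delta$.

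Finally, because every $z_j$ is of the form $x_j+i\delta$ with $x_j=\gamma(t_j)\in\mathbb{R}$, each individual block norm is bounded by the real supremum on the right. Combining,
\begin{equation}
    \norm{M^{-1}} \;=\; \max_j \norm{\bigl(z_j\mathds{1}_n-A\bigr)^{-1}} \;\le\; \sup_{x\in\mathbb{R}}\norm{\bigl((x+i\delta)\mathds{1}_n-A\bigr)^{-1}} \;=\; \frac{\mathcal{K}_{\delta}(-iA)}{\delta},
\end{equation}
which is the desired inequality. The only nontrivial step is the algebraic manipulation turning the horizontal line $\Im(z)=\delta$ into the vertical line $\Re(w)=\delta$ (for $w=\delta+iy$) used in the definition of $\mathcal{K}_\delta$; once this change of variables is carried out, the remainder is a direct application of the definitions.
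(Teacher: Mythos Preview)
Your proof is correct and follows essentially the same route as the paper: exploit the block-diagonal structure of $M$ to reduce $\norm{M^{-1}}$ to a maximum over resolvent blocks, then bound by the supremum over the line $\Im(z)=\delta$ and identify this with $\mathcal{K}_\delta(-iA)/\delta$ via the change of variables $z=x+i\delta \leftrightarrow w=\delta+iy$. You spell out the algebraic identity $( \delta+iy)\mathds{1}_n+iA = -i\bigl((-y+i\delta)\mathds{1}_n-A\bigr)$ more explicitly than the paper does, and you invoke the spectral-gap assumption from Problem~\ref{prob:qeve} for invertibility, whereas the hypothesis $\mathcal{K}_\delta(-iA)<\infty$ alone already forces every $(x+i\delta)\mathds{1}_n-A$ to be invertible; but these are cosmetic differences, not substantive ones.
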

\begin{proof}
Note that $M = \sum_j \ket{j}\bra{j}\otimes [z_j\mathds{1}_n - A]$, so that $M^{-1} = \sum_j \ket{j}\bra{j}\otimes [(\gamma(t_j)+i\delta)\mathds{1}_n - A]^{-1}$ is block diagonal. Consequently, 
    \begin{equation}
        \norm{M^{-1}} = \max_j \norm{[(\gamma(t_j)+i\delta)\mathds{1}_n - A]^{-1}} \le \sup_{x\in\mathbb{R}}\norm{[(\delta-ix)\mathds{1}_n + iA]^{-1}} =
        \frac{\mathcal{K}_\delta(-iA)}{\delta},
    \end{equation}
    following the definition of $\mathcal{K}_\delta$ in Def.~\ref{def:Kdelta}.
\end{proof}

We are now ready to derive the complexity of generating the state proportional to $R\ket{+^a}\ket{\psi}$.
\begin{lemma}[Complexity of generating the resolvent state]
\label{lem:stateprepqere}
We are given the inputs of Problem~\ref{prob:qeve}, a $\delta\in(0,\epsilon_{\rm eig})$, an $a\in\mathbb{Z}^+$, and $\mathcal{K}_{\delta}(-iA)$. Let
    \begin{equation}\label{eq:psiHtilde}
        \ket{\tilde{\psi}_{A}}:=\sum_l \beta_l\ket{\tilde{\phi}_{\lambda_l,\delta}}\ket{\lambda_l},\quad
      \ket{\tilde{\phi}_{\lambda_l,\delta}}  := 
      \sqrt{\frac{(\alpha_A+\epsilon_{\rm eig})\delta}{2^{a-1}\pi}}\sum_{j=0}^{2^a-1} (z_j-\lambda_l)^{-1}\ket{j} \quad \forall l.
    \end{equation}
Then the state $\ket{\tilde{\psi}_{A}}/\norm{\ket{\tilde{\psi}_{A}}}$ can be prepared with accuracy $\epsilon_{\rm st}/2$ using
    \begin{equation}
        \mathbf{O}\left(\frac{\alpha_A\mathcal{K}_{\delta}(-iA)}{\delta}
        \log\left(\frac{1}{\epsilon_{\rm st}}\right)\right)
    \end{equation}
queries to controlled-$O_A$,  controlled-$O_{\psi}$ and their inverses.
\end{lemma}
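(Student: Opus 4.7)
The plan is to mirror the proof of Lemma~\ref{lem:stateprep_qpe} in the QEUE section, since the setup is entirely analogous. First I would verify that the normalized resolvent state coincides with the normalized solution of the linear system defined by $M$. Expanding the definition in Eq.~\eqref{eq:psiHtilde} and noting that $\ket{+^a}\ket{\psi} = 2^{-a/2}\sum_j \ket{j}\sum_l \beta_l \ket{\lambda_l}$, one gets
\begin{equation}
    M^{-1}\ket{+^a}\ket{\psi} = \frac{1}{\sqrt{2^a}}\sum_{l,j}\beta_l\, (z_j-\lambda_l)^{-1}\ket{j}\ket{\lambda_l},
\end{equation}
which is proportional to $\ket{\tilde{\psi}_A}$. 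The prefactor $\sqrt{(\alpha_A+\epsilon_{\rm eig})\delta/(2^{a-1}\pi)}$ in the definition is purely cosmetic for later analysis (see the analogue of Remark~\ref{rem:prefactor}) and disappears under normalization, so
\begin{equation}
    \frac{\ket{\tilde{\psi}_A}}{\norm{\ket{\tilde{\psi}_A}}} = \frac{M^{-1}\ket{+^a}\ket{\psi}}{\norm{M^{-1}\ket{+^a}\ket{\psi}}}.
\end{equation}

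Next I would invoke the optimal-scaling QLSA (Lemma~\ref{lem:opt_lin}) using the $\alpha_M$-block encoding $O_M$ of $M$ constructed in the preceding lemma, together with the state preparation oracle $H^{\otimes a}\otimes O_\psi$ for $\ket{+^a}\ket{\psi}$, which costs one query to $O_\psi$. This yields a state $\epsilon_{\rm st}/2$-close to the target using $\mathbf{O}\bigl(\alpha_M\,\alpha_{M^{-1}}\log(1/\epsilon_{\rm st})\bigr)$ queries to controlled-$O_M$, controlled-$O_\psi$, and their inverses; each query to $O_M$ entails one query to controlled-$O_A$ by Lemma~\ref{lem:LCBE}. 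To finish the complexity accounting, I would substitute the two bounds already in hand: $\alpha_M = \sqrt{\rho^2+\delta^2}+\alpha_A \in \mathbf{O}(\alpha_A)$ since $\rho=\alpha_A+\epsilon_{\rm eig}$ and $\delta<\epsilon_{\rm eig}\le \alpha_A$ in the regime of interest, and $\alpha_{M^{-1}} = \mathcal{K}_\delta(-iA)/\delta$ by Lemma~\ref{lem:condA}. Multiplying these gives the stated query complexity $\mathbf{O}\bigl(\alpha_A \mathcal{K}_\delta(-iA)/\delta\,\log(1/\epsilon_{\rm st})\bigr)$.

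There is essentially no obstacle here; the only subtlety is confirming that the normalization prefactor in Eq.~\eqref{eq:psiHtilde} is irrelevant at this stage (it will play a role later in the error analysis where the norms of the $\ket{\tilde{\phi}_{\lambda_l,\delta}}$ must be compared uniformly in $l$), and that the $\alpha_M$-block encoding absorbs the $\alpha_A$ dependence cleanly. Should one wish to avoid the implicit assumption $\epsilon_{\rm eig}\le\alpha_A$, the constant can be carried through explicitly and $\alpha_M$ replaced by $\sqrt{2}(\alpha_A+\epsilon_{\rm eig})+\alpha_A$, which does not change the asymptotic scaling.
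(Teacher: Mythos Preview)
Your proposal is correct and follows the same approach as the paper's proof, which simply observes that $\ket{\tilde{\psi}_A}/\norm{\ket{\tilde{\psi}_A}} = M^{-1}\ket{+^a}\ket{\psi}/\norm{M^{-1}\ket{+^a}\ket{\psi}}$ and then invokes Lemma~\ref{lem:opt_lin}. Your version is in fact more explicit than the paper's, spelling out the expansion of $M^{-1}\ket{+^a}\ket{\psi}$, the irrelevance of the prefactor under normalization, and the substitution $\alpha_M\in\mathbf{O}(\alpha_A)$, $\alpha_{M^{-1}}=\mathcal{K}_\delta(-iA)/\delta$.
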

\begin{remark}
Similar to Remark~\ref{rem:prefactor}, we have added an $l$-independent prefactor  $\sqrt{\rho\delta/2^{a-1}\pi}$, for $\rho = \alpha_A + \epsilon_{\rm eig}$, to the definition of $\ket{\tilde{\phi}_{\lambda_l,\delta}}$ in Eq.~\eqref{eq:psiHtilde}. This prefactor, even if absent, does not affect the above lemma because we explicitly normalize $\ket{\tilde{\psi}_{A}}$ in the lemma's proof. 
However, we note that  for large $a$ and small $\delta$ the added prefactor approximately normalizes $\ket{\tilde{\phi}_{\lambda_l,\delta}}$ and will be useful in simplifying future calculations. We prove this fact in  Lemma~\ref{lem:qere_errors}.

\end{remark}
\begin{proof}
Observe that
\begin{equation}
    \frac{\ket{\tilde{\psi}_{A}}}{\norm{\ket{\tilde{\psi}_{A}}}} = \frac{M^{-1}\ket{+^a}\ket{\psi}}{\norm{M^{-1}\ket{+^a}\ket{\psi}}}.
\end{equation}
The complexity follows from the complexity of QLSA in Lemma~\ref{lem:opt_lin}.
\end{proof}

The state $\ket{\tilde{\psi}_{A}}/\norm{\ket{\tilde{\psi}_{A}}}$
is our candidate for the approximation of the desired state $\ket{\psi_A}$. We now analyze the errors in order to determine suitable values of $\delta$ and $a$ to meet the requirements on $\ket{\psi_A}$ stated in Problem~\ref{prob:qeve}. The value of $\delta$ in turn will also dictate the complexity of our algorithm due to Lemma~\ref{lem:stateprepqere}.

\begin{lemma}[Upper bound on discretization error]
For $\delta \le 1$ and 
\begin{equation}
	2^a \ge \frac{5\rho \epsilon_{\rm eig}}{\pi\delta^3},
\end{equation}
projectors $ P_{\lambda_l,\epsilon}, Q_{\lambda_l,\epsilon}$ 
defined in Eq.~\eqref{eq:proj} 
with $\epsilon:=\epsilon_{\rm eig}/2\rho$ 
and state $\ket{\tilde{\phi}_{\lambda_l,\delta}}$ defined in Eq.~\eqref{eq:psiUtilde}, we have
\begin{align}
\abs{\norm{\ket{\tilde{\phi}_{\lambda_l,\delta}}}^2-
\int_{0}^{1} f_l(t) dt} &\le \frac{\delta}{\epsilon_{\rm eig}}, \nonumber\\
\abs{\norm{P_{\lambda_l,\epsilon}\ket{\tilde{\phi}_{\lambda_l,\delta}}}^2-
\int_{t_-}^{t_+} f_l(t) dt} &\le \frac{\delta}{\epsilon_{\rm eig}},
\end{align}
where 
\begin{equation}
        f_l(t) := \frac{2\rho\delta}{\pi[(\gamma(t)-\lambda_l)^2+\delta^2]} = 
        \frac{2\rho\delta}{\pi[(\rho (2t-1)-\lambda_l)^2+\delta^2]} 
\end{equation}
and
$t_\pm = \gamma^{-1}(\lambda_l) \pm \epsilon = 
\gamma^{-1}(\lambda_l \pm \epsilon_{\rm eig})$, $\rho = \alpha_A+\epsilon_{\rm eig}$.
\end{lemma}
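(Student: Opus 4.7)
The plan is to mirror the proof of the analogous QEUE discretization lemma by recasting both squared norms as Riemann sums of $f_l$ and then applying Lemma~\ref{lem:riemannsum}. The essential observation is that since each $\lambda_l$ is real and $z_j = \gamma(t_j) + i\delta$, one has $|z_j-\lambda_l|^2 = (\gamma(t_j)-\lambda_l)^2 + \delta^2$, so the prefactor $\sqrt{\rho\delta/(2^{a-1}\pi)}$ in Eq.~\eqref{eq:psiHtilde} is precisely what makes
\[
\|\ket{\tilde{\phi}_{\lambda_l,\delta}}\|^2 = \sum_{0}^{1}[f_l,2^a], \qquad \|P_{\lambda_l,\epsilon}\ket{\tilde{\phi}_{\lambda_l,\delta}}\|^2 = \sum_{t_-}^{t_+}[f_l,2^a],
\]
in the Riemann-sum notation of Def.~\ref{def:riemannsum}. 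Both claimed inequalities then reduce to bounding the Riemann-sum error for $f_l$ on intervals of length at most one.

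To apply Lemma~\ref{lem:riemannsum} I would produce suprema of $|f_l|$ and $|df_l/dt|$. The first is immediate: $|f_l(t)| \le 2\rho/(\pi\delta)$ from $(\gamma(t)-\lambda_l)^2 + \delta^2 \ge \delta^2$. For the second, the chain rule gives
\[
\frac{df_l}{dt} = -\frac{8\rho^2 \delta(\gamma(t)-\lambda_l)}{\pi[(\gamma(t)-\lambda_l)^2+\delta^2]^2},
\]
and substituting $x = (\gamma(t)-\lambda_l)/\delta$ reduces the maximization to optimizing $|x|/(x^2+1)^2$, attained at $|x|=1/\sqrt{3}$, which yields $\sup_t |df_l/dt| \le 9\rho^2/(2\sqrt{3}\,\pi\delta^2)$.

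Plugging both bounds into Lemma~\ref{lem:riemannsum}, together with $(t_{\max}-t_{\min})^2 \le 1$, gives a discretization error at most
\[
\epsilon_{\rm disc} \le \frac{1}{2^a}\left(\frac{9\rho^2}{4\sqrt{3}\,\pi\delta^2} + \frac{4\rho}{\pi\delta}\right),
\]
and the final step is to check that the stated lower bound on $2^a$ (combined with $\delta\le 1$) absorbs this into $\delta/\epsilon_{\rm eig}$. The argument is structurally identical for the integral over $[0,1]$ and the subinterval $[t_-,t_+]$, so both bounds follow at once. The only genuine obstacle is constant chasing: tracing the combined contribution of the $\|f_l\|_\infty$ and $\|f_l'\|_\infty$ terms through the hypothesis on $2^a$, and confirming that the boundary-mismatch term (coming from replacing $[t_{\min},t_{\max}]$ by $[\lceil 2^a t_{\min}\rceil/2^a,(\lfloor 2^a t_{\max}\rfloor+1)/2^a]$ in the Riemann-sum step) and the trapezoidal term together respect the promised $\delta/\epsilon_{\rm eig}$ budget. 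This is mechanical; the remainder of the proof is routine calculus.
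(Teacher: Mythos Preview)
Your approach is exactly the paper's: both squared norms are rewritten as Riemann sums of $f_l$ via $|z_j-\lambda_l|^2=(\gamma(t_j)-\lambda_l)^2+\delta^2$, and Lemma~\ref{lem:riemannsum} is applied after bounding $\sup|f_l|$ and $\sup|f_l'|$. Your derivative computation is in fact more careful than the paper's. The paper sets $x=\gamma(t)-\lambda_l$ and writes $|df_l/dt|=|4\rho\delta x/(\pi(\delta^2+x^2)^2)|$, silently dropping the chain-rule factor $dx/dt=2\rho$; with that factor restored one obtains exactly your numerator $8\rho^2\delta$, and hence $\sup_t|f_l'(t)|$ scales as $\rho^2/\delta^2$ rather than $\rho/\delta^2$.

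The gap in your write-up is precisely the step you dismiss as mechanical. With $\sup|f_l'|\asymp \rho^2/\delta^2$ and $(t_{\max}-t_{\min})=1$ for the full interval, the Riemann-sum error from Lemma~\ref{lem:riemannsum} is of order $\rho^2/(2^a\delta^2)$, so enforcing $\epsilon_{\rm disc}\le \delta/\epsilon_{\rm eig}$ actually requires $2^a\gtrsim \rho^2\epsilon_{\rm eig}/\delta^3$, which is strictly stronger than the stated hypothesis $2^a\ge 5\rho\epsilon_{\rm eig}/(\pi\delta^3)$ whenever $\rho=\alpha_A+\epsilon_{\rm eig}$ is large. The constants do not close as written; the paper's proof only appears to close because of the missing $2\rho$. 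This is harmless for the downstream complexity (since $a$ enters only logarithmically), but to make the lemma go through you need to strengthen the hypothesis to $2^a\ge c\,\rho^2\epsilon_{\rm eig}/\delta^3$ for an appropriate absolute constant $c$, rather than assert that the existing bound already absorbs the error.
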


\begin{proof}
Observe that 
\begin{align}
\norm{\ket{\tilde{\phi}_{\lambda_l,\delta}}}^2 &= 
\frac{2\rho\delta}{\pi 2^a}\sum_{j=0}^{2^a-1}\frac{1}{\abs{z_j-\lambda_l}^2}\\
\norm{P_{\lambda_l,\epsilon}\ket{\tilde{\phi}_{\lambda_l,\delta}}}^2 &=
\frac{2\rho\delta}{\pi 2^a}\sum_{j=\lceil Nt_-\rceil }^{\lfloor Nt_+\rfloor}\frac{1}{\abs{z_j-\lambda_l}^2}.
\end{align}
Using Pythagorus theorem, we get
\begin{equation}
    \abs{z(t)-\lambda_l}^2 = (\rho (2t-1)-\lambda_l)^2+\delta^2.
\end{equation}
Now we can view $\norm{\ket{\tilde{\phi}_{\lambda_l,\delta}}}^2$ and  
$\norm{P_{\lambda_l,\epsilon}\ket{\tilde{\phi}_{\lambda_l,\delta}}}^2$
as discrete integrals of $g_l$. More precisely,
\begin{equation}
\norm{\ket{\tilde{\phi}_{\lambda_l,\delta}}}^2 = 
\sum_{0}^{1}[f_l,2^a],\quad 
\norm{P_{\lambda_l,\epsilon}\ket{\tilde{\phi}_{\lambda_l,\delta}}}^2 =
\sum_{t_-}^{t_+}[f_l,2^a].
\end{equation}
In the next step, we will approximate these discrete integrals by their continuous counterparts.
To bound $d f_l/dt$, we first define $x = \rho (2t-1)-\lambda_l$,
so that $f_l(t) = 2\rho\delta/\pi(\delta^2+x^2)$. Then
\begin{align}
\abs{\frac{d f_l}{d t}} &= \abs{\frac{4\rho\delta x}{\pi(\delta^2+x^2)^2}}  \\
&\le \abs{\frac{2\delta x}{(\delta^2+x^2)}\frac{2\rho}{\pi(\delta^2+x^2)}} \\
&\le \frac{2\rho}{\pi\delta^2}.
\end{align}
Observe that 
\begin{equation}
\abs{f_l(t)} = \abs{\frac{2\rho\delta}{\pi(\delta^2+x^2)}}\le \frac{2\rho\delta}{\pi\delta^2}
=\frac{2\rho}{\pi\delta}.
\end{equation}
Now using Lemma~\ref{lem:riemannsum}, we have
\begin{equation}
        \abs{\int_{0}^{1} f_l(t)dt - \sum_{0}^{1}[f_l,2^a]} 
        \le \epsilon_{\rm disc}, \quad
        \abs{\int_{t_-}^{t_+} f_l(t)dt - \sum_{t_-}^{t_+}[f_l,2^a]} \le 
        \epsilon_{\rm disc},
\end{equation}
where
\begin{equation}
\epsilon_{\rm disc} = \frac{1}{2^a}\left\{\frac{2\rho}{2\pi\delta^2}+2\frac{2\rho}{\pi\delta}\right\} = 
\frac{1}{2^a}\left(\frac{\rho}{\pi\delta^2}+\frac{4\rho}{\pi\delta}\right)
\le \frac{1}{2^a}\left(\frac{\rho}{\pi\delta^2}+\frac{4\rho}{\pi\delta^2}\right)
\le \frac{1}{2^a}\frac{5\rho}{\pi\delta^2}
\end{equation}
Finally,
\begin{equation}
2^a \ge \frac{5\rho \epsilon_{\rm eig}}{\pi\delta^3} \implies 
\epsilon_{\rm disc} \le \frac{\delta}{\epsilon_{\rm eig}}.
\end{equation}

\end{proof}

\begin{lemma}[Failure probability of eigenvalue estimation]
\label{lem:qere_errors}
For $\delta \le \epsilon_{\rm eig}/4$ and 
\begin{equation}
	2^a \ge \frac{5\rho \epsilon_{\rm eig}}{\pi\delta^3},
\end{equation}
projectors $ P_{\lambda_l,\epsilon}, Q_{\lambda_l,\epsilon}$ defined in Eq.~\eqref{eq:proj} and state $\ket{\tilde{\phi}_{\lambda_l,\delta}}$ defined in Eq.~\eqref{eq:psiUtilde}, the following inclusions hold for each $l$:
    \begin{enumerate}
        \item \begin{equation}
            \frac{\sqrt{5}}{2} \ge \norm{P_{\lambda_l,\epsilon}\ket{\tilde{\phi}_{\lambda_l,\delta}}} 
      \ge \frac{1}{2}.
        \end{equation}
        \item \begin{equation}
        \abs{1-\frac{\norm{P_{\lambda_0,\epsilon}\ket{\tilde{\phi}_{\lambda_0,\delta}}}}{\norm{P_{\lambda_l,\epsilon}\ket{\tilde{\phi}_{\lambda_l,\delta}}}}} 
        \le \frac{4\delta}{\epsilon_{\rm eig}}.
    \end{equation}
    \item \begin{equation}
        \norm{Q_{\lambda_l,\epsilon}\ket{\tilde{\phi}_{\lambda_l,\delta}}}       
        \le \sqrt{2(1+1/\pi)}\sqrt{\frac{\delta}{\epsilon_{\rm eig}}}.
    \end{equation}
    \end{enumerate}

\end{lemma}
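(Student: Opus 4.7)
The plan is to follow the template of Lemma~\ref{lem:qpe_errors} almost verbatim, with the Lorentzian $f_l$ in place of the trigonometric $g_l$. The preceding lemma already reduces each of the three claims to a statement about the continuous integrals $\int_0^1 f_l(t)\,dt$ and $\int_{t_-}^{t_+} f_l(t)\,dt$, with a discretization error of at most $\epsilon_{\rm disc} \le \delta/\epsilon_{\rm eig}$. Under the change of variables $x = \rho(2t-1)-\lambda_l$, $dx = 2\rho\,dt$, the integrand becomes $\frac{1}{\pi}\cdot\frac{\delta}{x^2+\delta^2}$, whose antiderivative is $\frac{1}{\pi}\arctan(x/\delta)$; hence every integral of interest admits a closed-form evaluation.

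For part~1, this substitution gives
\begin{equation}
\int_{t_-}^{t_+} f_l(t)\,dt = \frac{2}{\pi}\arctan\frac{\epsilon_{\rm eig}}{\delta} \ge \frac{1}{2},
\end{equation}
where the lower bound uses $\delta \le \epsilon_{\rm eig}/4$ and hence $\arctan(\epsilon_{\rm eig}/\delta) \ge \arctan 4 > \pi/4$. Combined with $\epsilon_{\rm disc} \le 1/4$, this yields $a_l^2 := \norm{P_{\lambda_l,\epsilon}\ket{\tilde{\phi}_{\lambda_l,\delta}}}^2 \in [1/4,\,5/4]$, i.e.\ $1/2 \le a_l \le \sqrt{5}/2$. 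Part~2 then follows by the same algebra as in the QEUE proof: $|1-a_0/a_l| = |a_l^2-a_0^2|/[a_l(a_0+a_l)] \le 2\epsilon_{\rm disc}/(1/2) \le 4\delta/\epsilon_{\rm eig}$.

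For part~3, the tail integral evaluates in closed form to
\begin{equation}
\left(\int_0^1 - \int_{t_-}^{t_+}\right) f_l(t)\,dt = \frac{1}{\pi}\left[\arctan\frac{\rho+\lambda_l}{\delta} + \arctan\frac{\rho-\lambda_l}{\delta} - 2\arctan\frac{\epsilon_{\rm eig}}{\delta}\right].
\end{equation}
Applying the identity $\arctan y + \arctan(1/y) = \pi/2$ (valid for $y>0$) to each of the first two arctangents rewrites the right-hand side as $\frac{2}{\pi}\arctan(\delta/\epsilon_{\rm eig}) - \frac{1}{\pi}\bigl[\arctan(\delta/(\rho+\lambda_l)) + \arctan(\delta/(\rho-\lambda_l))\bigr]$. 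Since $|\lambda_l|\le \alpha_A$ and $\rho = \alpha_A+\epsilon_{\rm eig}$, we have $\rho \pm \lambda_l \ge \epsilon_{\rm eig}$, so the bracketed correction is non-negative and the tail integral is at most $\frac{2}{\pi}\arctan(\delta/\epsilon_{\rm eig}) \le 2\delta/(\pi\epsilon_{\rm eig})$. Adding the discretization error $2\epsilon_{\rm disc} \le 2\delta/\epsilon_{\rm eig}$ then gives $b_l^2 \le 2(1+1/\pi)\delta/\epsilon_{\rm eig}$, as required.

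The argument presents no serious obstacle; the only subtlety is the appeal to the problem's hypothesis $|\lambda_l|\le \alpha_A$ (equivalently $\rho \pm \lambda_l \ge \epsilon_{\rm eig}$) that controls the contribution from near the endpoints $t = 0, 1$ of the finite parametrization. Unlike the closed unit circle in the QEUE case, $\gamma$ is an interval here; but because $f_l$ decays as $1/x^2$ away from $\lambda_l$ and the endpoints sit at distance at least $\epsilon_{\rm eig}$ from it, no modular treatment of the discrete index is required and the QEUE template carries over unchanged.
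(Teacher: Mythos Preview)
Your proof is correct and follows the paper's template essentially step for step; the only difference is in part~3, where the paper bounds the tail integral by first extending the outer range to $(-\infty,\infty)$, obtaining $\bigl(\int_{-\infty}^{\infty}-\int_{t_-}^{t_+}\bigr)f_l(t)\,dt = 1 - \tfrac{2}{\pi}\arctan(\epsilon_{\rm eig}/\delta) = \tfrac{2}{\pi}\operatorname{arccot}(\epsilon_{\rm eig}/\delta)\le \tfrac{2\delta}{\pi\epsilon_{\rm eig}}$, whereas you evaluate $\int_0^1 f_l$ exactly and then discard the non-negative endpoint corrections via the complement identity. Both routes land on the same bound; the paper's move is slightly slicker since it sidesteps the explicit appeal to $|\lambda_l|\le\alpha_A$ (which is of course implicit in the block-encoding hypothesis anyway).
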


\begin{proof}
Let $a_l = \norm{P_{\lambda_l}\ket{\tilde{\phi}_{\lambda_l,\delta}}}$
and $b_l = \norm{Q_{\lambda_l}\ket{\tilde{\phi}_{\lambda_l,\delta}}}$. 
\begin{enumerate}
\item We have
\begin{equation}
a_l^2 = \norm{P_{\lambda_l,\epsilon}\ket{\tilde{\phi}_{\lambda_l,\delta}}}^2 =
\sum_{t_-}^{t_+}[f_l,2^a]
\end{equation}
and
\begin{equation}
        \abs{\int_{t_-}^{t_+} f_l(t)dt - \sum_{t_-}^{t_+}[f_l,2^a]}  = 
        \abs{\int_{t_-}^{t_+} f_l(t)dt - a_l^2} \le 
        \epsilon_{\rm disc} \le \frac{\delta}{\epsilon_{\rm eig}}\le 1/4.
\end{equation}
By explicit integration, we get
\begin{equation}
\int_{t_-}^{t_+} f_l(t)dt = 
\frac{2}{\pi} \arctan \left(\frac{\epsilon_{\rm eig}}{\delta}\right)
\end{equation}
Since $\epsilon_{\rm eig} \ge \delta$, we have 
\begin{equation}
\frac{2}{\pi} \arctan \left(\frac{\epsilon_{\rm eig}}{\delta}\right) \ge \frac{2}{\pi}\left(\frac{\pi}{4}\right) \ge 1/2
\end{equation}
and therefore $\int_{t_-}^{t_+} g_l(t)dt \ge 1/2$.
Consequently, we have 
\begin{equation}
a_l^2 = \norm{P_{\lambda_l,\epsilon}\ket{\tilde{\phi}_{\lambda_l,\delta}}}^2 \ge 
\int_{t_-}^{t_+} g_l(t)dt - \epsilon_{\rm disc} \ge 1/4.
\end{equation}
Consequently $a_l\ge 1/2$ for all $l$.
Similarly, 
\begin{equation}
a_l^2 =  
\int_{t_-}^{t_+} g_l(t)dt + \epsilon_{\rm disc} \le 1 + \frac{\delta}{\epsilon_{\rm eig}} \le \frac{5}{4}.
\end{equation}

\item We have $\abs{a_l^2-a_0^2} \le 2\epsilon_{\rm disc}$. Then using $a_l \ge 1/2$, we get
\begin{equation}
\abs{1-\frac{a_0}{a_l}} = \abs{\frac{a_l^2-a_0^2}{a_l(a_0+a_l)}} \le \frac{2\epsilon_{\rm disc}}{1/2}
\le \frac{4\delta}{\epsilon_{\rm eig}}.
\end{equation}

\item Finally,
\begin{equation}
b_l^2 = \norm{Q_{\lambda_l,\epsilon}\ket{\tilde{\phi}_{\lambda_l,\delta}}}^2 =
\left(\sum_{0}^{1} - \sum_{t_-}^{t_+}\right)[f_l,2^a].
\end{equation}
We then have 
\begin{equation}
        \abs{\left(\int_{0}^{1} - \int_{t_-}^{t_+}\right) f_l(t)dt - 
        \left(\sum_{0}^{1} - \sum_{t_-}^{t_+}\right)[f_l,2^a]}  \le 
        2\epsilon_{\rm disc} \le 2\delta/\epsilon_{\rm eig}.
\end{equation}
The integral can be bounded as follows:
\begin{equation}
\left(\int_{0}^{1} - \int_{t_-}^{t_+}\right) f_l(t)dt\le 
\left(\int_{-\infty}^{\infty} - \int_{t_-}^{t_+}\right) f_l(t)dt
 = 1 - \frac{2}{\pi}\arctan{\frac{\epsilon_{\rm eig}}{\delta}} = 
\frac{2}{\pi}\text{arccot}(\frac{\epsilon_{\rm eig}}{\delta}).
\end{equation}
Let $\theta = \text{arccot}(\frac{\epsilon_{\rm eig}}{\delta})$, or equivalently, 
$\cot(\theta) = \frac{\epsilon_{\rm eig}}{\delta}$.
Then 
\begin{equation}
\theta \le \tan(\theta) =  \frac{\delta}{\epsilon_{\rm eig}}.
\end{equation}
Consequently, we have 
\begin{equation}
\left(\int_{0}^{1} - \int_{t_-}^{t_+}\right) g_l(t)dt \le \frac{2\delta}{\pi \epsilon_{\rm eig}}.
\end{equation}
Finally,
\begin{equation}
b_l^2 \le \frac{2\delta}{\pi \epsilon_{\rm eig}}+2\epsilon_{\rm disc} \implies b_l \le 
\sqrt{(2+2/\pi)\frac{\delta}{\epsilon_{\rm eig}}}.
\end{equation}
\end{enumerate}
\end{proof}

We are now ready to state and prove the main theorem of this section.
This theorem stipulates the values of the parameters $\delta, a$ 
in our algorithm that ensure that $\ket{\tilde{\psi}_A}$ is $\epsilon_{\rm st}$-close
to $\ket{\psi_A}$ as required.
\begin{thm}[Complexity of solving QERE]
\label{thm:qeveideal}
Given the inputs of Problem~\ref{prob:qeve},  a quantum algorithm can be constructed that generates $\ket{\psi_A}$ to accuracy $\epsilon_{\rm st}$ by making
    \begin{equation}
        \mathbf{O}\left(\frac{ \alpha_A \kappa_S^4\mathcal{K}_\delta(-iA)}{\epsilon_{\rm eig}\epsilon_{\rm st}^2} 
         \log\left(\frac{1}{\epsilon_{\rm st}}\right)\right)
    \end{equation}
queries to controlled-$O_A$, controlled-$O_{\psi}$ and their inverses, where
$\delta \in \Theta(\epsilon_{\rm eig}\epsilon_{\rm st}^2/\kappa_S^4)$.
\end{thm}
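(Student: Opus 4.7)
The plan is to mirror the proof of Theorem~\ref{thm:qpeideal} nearly step-for-step, since every ingredient used in the QEUE analysis has an exact QERE analog already in place: Lemma~\ref{lem:qere_errors} reproduces the three bounds of Lemma~\ref{lem:qpe_errors} with the same constants, Lemma~\ref{lem:stateprepqere} supplies the QLSA preparation cost and its $\mathcal{K}_\delta(-iA)$ dependence, and both the propagation lemma for $\kappa_S$ and Lemma~\ref{lem:normalizederror} are problem-agnostic.

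First, I would decompose
\begin{equation*}
\ket{\tilde{\psi}_A} = \sum_l \beta_l\, \ket{\tilde{\phi}_{\lambda_l,\delta}}\ket{\lambda_l} = \ket{\tilde{\psi}_{A,1}} + \ket{\tilde{\psi}_{A,2}} + \ket{\tilde{\psi}_{A,3}}
\end{equation*}
exactly as in the QEUE case: $\ket{\tilde{\psi}_{A,1}}$ replaces each $\ket{\tilde{\phi}_{\lambda_l,\delta}}$ by $P_{\lambda_l,\epsilon}\ket{\tilde{\phi}_{\lambda_l,\delta}}$ rescaled by $\norm{P_{\lambda_0,\epsilon}\ket{\tilde{\phi}_{\lambda_0,\delta}}}/\norm{P_{\lambda_l,\epsilon}\ket{\tilde{\phi}_{\lambda_l,\delta}}}$ so that an $l$-independent normalization factors out, $\ket{\tilde{\psi}_{A,2}}$ absorbs the $(1-\mathrm{ratio})$ correction, and $\ket{\tilde{\psi}_{A,3}}$ collects the $Q_{\lambda_l,\epsilon}$ leakage. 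Combining the propagation lemma with the three bounds of Lemma~\ref{lem:qere_errors} reproduces the inequalities $\norm{\ket{\tilde{\psi}_{A,1}}}\ge 1/(2\kappa_S)$, $\norm{\ket{\tilde{\psi}_{A,2}}}\le 2\sqrt{5}\,\kappa_S\delta/\epsilon_{\rm eig}$, and $\norm{\ket{\tilde{\psi}_{A,3}}}\le \kappa_S\sqrt{2(1+1/\pi)\delta/\epsilon_{\rm eig}}$.

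Next, I would pick
\begin{equation*}
\delta \in \Theta\!\left(\frac{\epsilon_{\rm eig}\epsilon_{\rm st}^2}{\kappa_S^4}\right)
\end{equation*}
so that both ratios $\norm{\ket{\tilde{\psi}_{A,k}}}/\norm{\ket{\tilde{\psi}_{A,1}}}$ for $k=2,3$ drop below $\epsilon_{\rm st}/8$, and then apply Lemma~\ref{lem:normalizederror} with $\ket{x} = \ket{\tilde{\psi}_{A,1}}$, $\ket{\tilde{x}} = \ket{\tilde{\psi}_A}$ to conclude that the exact normalized resolvent state $\ket{\tilde{\psi}_A}/\norm{\ket{\tilde{\psi}_A}}$ lies within distance $\epsilon_{\rm st}/2$ of the target $\ket{\psi_A}$. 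Invoking Lemma~\ref{lem:stateprepqere} to prepare this state via QLSA to accuracy $\epsilon_{\rm st}/2$ and applying the triangle inequality yields the total error $\epsilon_{\rm st}$. The claimed query complexity then follows by substituting $1/\delta \in \mathbf{O}(\kappa_S^4/\epsilon_{\rm eig}\epsilon_{\rm st}^2)$ into the cost bound of Lemma~\ref{lem:stateprepqere}.

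I do not expect any real obstacle, because the constants appearing in Lemma~\ref{lem:qere_errors} match those of Lemma~\ref{lem:qpe_errors} term-by-term, so the QEUE bookkeeping transfers verbatim. The only incidental verification is that the discretization constraint $2^a \ge 5\rho\epsilon_{\rm eig}/(\pi\delta^3)$ from Lemma~\ref{lem:qere_errors} is compatible with $a \in \polylog(\alpha_A,\kappa_S,1/\epsilon_{\rm eig},1/\epsilon_{\rm st})$, which is immediate from the polynomial scalings of $\delta$ and $\rho$ in the input parameters and therefore does not affect the asymptotic query count.
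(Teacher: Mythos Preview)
Your proposal is correct and follows essentially the same approach as the paper, which simply states that the proof is identical to that of Theorem~\ref{thm:qpeideal} with Lemma~\ref{lem:stateprepqere} substituted for Lemma~\ref{lem:stateprep_qpe}. Your write-up in fact spells out the argument in more detail than the paper itself does.
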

\begin{proof}
The proof is identical to that of Theorem~\ref{thm:qpeideal}, except that Lemma~\ref{lem:stateprepqere}
is used in place of Lemma~\ref{lem:stateprep_qpe} in the last step.
\end{proof}

\begin{remark}
In Theorem~\ref{thm:qeveideal}, $\mathcal{K}_\delta(-iA)$ can be replaced by
any $K$ such that $\mathcal{K}_\delta(-iA) \le K$ 
that is given instead of $\mathcal{K}_\delta(-iA)$. In particular, 
if the dimension of the largest Jordan block $d$ and 
Jordan condition number $\bar{\kappa}_A$ are given, then by 
Proposition~\ref{prop:kreissbound}, $\mathcal{K}_\delta(-iA)$
can be replaced by $K = \bar{\kappa}_A(1-\delta^d)/\delta^{d-1}(1-\delta)$, 
resulting in the total query complexity
\begin{equation}
    {\bf O}\left(\frac{\alpha_A\bar{\kappa}_A\kappa_S^{4d}}{\epsilon_{\rm eig}^d \epsilon_{\rm st}^{2d}}
    \log\left(\frac{1}{\epsilon_{\rm st}}\right)\right) \in 
    {\bf O}\left(\frac{\alpha_A\bar{\kappa}_A^{4d+1}}{\epsilon_{\rm eig}^d \epsilon_{\rm st}^{2d}}
    \log\left(\frac{1}{\epsilon_{\rm st}}\right)\right).
\end{equation}
For diagonalizable $A$, we have $d=1$, which leads to the total query complexity 
\begin{equation}
    {\bf O}\left(\frac{\alpha_A\bar{\kappa}_A\kappa_S^4}{\epsilon_{\rm eig} \epsilon_{\rm st}^2}
    \log\left(\frac{1}{\epsilon_{\rm st}}\right)\right) \in 
    {\bf O}\left(\frac{\alpha_A\bar{\kappa}_A^5}{\epsilon_{\rm eig} \epsilon_{\rm st}^2}
    \log\left(\frac{1}{\epsilon_{\rm st}}\right)\right).
\end{equation}
\end{remark}
\begin{remark}
    Notice that we choose $a \in \text{polylog}(\alpha_A, \kappa_S, 1/\epsilon_{\rm eig}, 1/\epsilon_{\rm st})$ in our algorithm, and therefore the number of ancilla qubits and additional two-qubit gates required is larger than the total query complexity by a factor at most $\text{polylog}(\alpha_A, \kappa_S, 1/\epsilon_{\rm eig}, 1/\epsilon_{\rm st})$. 
\end{remark}

\section{Estimating eigenvalues on a general curve}
\label{sec:peve}

In this section, we derive the conditions on a general complex curve $\gamma$ under which our algorithm can be extended to the estimation of eigenvalues lying on $\gamma$. The curve $\gamma$ is assumed to be fixed, i.e. not an input to the problem and therefore is problem-size independent.

Let us first analyze our approach for estimating real and unimodular eigenvalues respectively on a common footing, to motivate the extension to  general curves. Observe that in both these problems, the eigenvalues are promised to lie on a curve $\gamma$, but the resolvent matrix is constructed for a discrete curve $[\gamma_\delta, 2^a]$. In the case of phase estimation, the resolvent matrix is constructed using $\{z_j=(1+\delta)\gamma(t_j)\}$ lying on the curve $\gamma_\delta=(1+\delta)\gamma$. Similarly, in the case of real eigenvalue estimation, $\text{Image}(\gamma) = [-\rho,\rho]$, and the resolvent matrix is constructed using $\{z_j=\gamma(t_j)+i\delta\}$ lying on the shifted curve $\gamma_\delta=\gamma+i\delta$. The assumption on the spectrum ensures that no eigenvalues lie on $\gamma_\delta$ as well as within distance $\delta$ on either side $\gamma_\delta$. This ensures that the resolvent state can be generated in complexity linear in ${\bf O}(K/\delta)$, where $K$ is a constant capturing the non-Hermiticity of the input matrix. 

The complexity then depends on how large a $\delta$ can be chosen so that we can prepare the desired state $\ket{\psi_H}$ to required accuracy $\epsilon_{\rm st}$. Notice that while $\gamma$ is fixed, we have the freedom in choosing the curve $\gamma_\delta$ parametrized by $\delta$. The properties of $\gamma$ enter in the error analysis in Lemmas~\ref{lem:qpe_errors} and~\ref{lem:qere_errors}:
    \begin{enumerate}
        \item To obtain bounds on 
\begin{equation*}
            \norm{P_{\lambda_l,\epsilon}\ket{\tilde{\phi}_{\lambda_l,\delta}}}^2 \quad
            \text{and}\quad
        \abs{1-\frac{\norm{P_{\lambda_0,\epsilon}\ket{\tilde{\phi}_{\lambda_0,\delta}}}}{\norm{P_{\lambda_l,\epsilon}\ket{\tilde{\phi}_{\lambda_l,\delta}}}}},
        \end{equation*}
we needed to prove that for every $\epsilon_{\rm eig}>0$,
\begin{equation}
\label{eq:cond1}
     \int_{t-\epsilon}^{t+\epsilon} {\abs{\gamma(t)-\gamma_{\delta}(t')}^{-2}dt'} = c_{\delta,\epsilon} \quad \forall t \in [0,1], \quad \forall \epsilon > 0,
\end{equation}
where $c_{\delta,\epsilon}$ is some constant independent of $t$, and
we also showed that 
\begin{equation}
\label{eq:cond2}
    \abs{1-\frac{\int_{t-\epsilon}^{t+\epsilon} {\abs{\gamma(t)-\gamma_{\delta}(t')}^{-2}dt'}}{\int_{0}^{1} {\abs{\gamma(t)-\gamma_{\delta}(t')}^{-2}dt'}}} \in {\bf O}\left(\frac{\delta}{\epsilon}\right) \quad \forall \epsilon > 0.
\end{equation}

\item Finally to account for discretization errors, we proved that
\begin{equation}
\label{eq:cond3}
    \sup_{t'}\abs{\frac{d}{dt'}\left(\abs{\gamma(t)-\gamma_{\delta}(t')}^{-2}\right)} \in 
    \text{poly}\left(\frac{1}{\delta}\right).
\end{equation}
\end{enumerate}
If these three properties are satisfied for the curve $\gamma$ and the family of curves $\{\gamma_{\delta}\}$, then an efficient algorithm for eigenvalue estimation can be constructed following our approach. We state these requirements formally in the next conjecture.

\begin{conjecture}
Suppose $\gamma$ is a fixed, known complex curve, which admits a family of curves $\{\gamma_\delta,\  \delta \in \mathbb{R}^+\}$ satisfying Equations~\eqref{eq:cond1}, \eqref{eq:cond2} and \eqref{eq:cond3}. We are given two accuracy parameters $\epsilon_{\rm st}, \epsilon_{\rm eig} \in \mathbb{R}^+$, an $\alpha_A$-block encoding $O_A$ of an $n$-qubit diagonalizable matrix $A$ such that a part of $\Lambda(A)$ lies on $\gamma$ and the other part lies outside the $\epsilon_{\rm eig}$- neighborhood of $\gamma$. 
Suppose $\bar{\kappa}_A$ is the Jordan condition number of $A$ and the dimension of 
the largest Jordan block of $A$ is $d \in {\bf O}(\log(n))$.
We are also given an $n$-qubit unitary circuit $O_{\psi}$ such that $\ket{\psi}:=O_{\psi}\ket{0^n}$ has support over only the eigenstates $\{\ket{\lambda_l}\}$ of $A$ with eigenvalues $\{\lambda_l\}$ on $\gamma$, i.e.  
\begin{equation}
\label{eq:qerefinalstate}
    \ket{\psi} = \sum_{l:\lambda_l \in \text{Image}(\gamma)} \beta_l \ket{\lambda_l}, \quad \beta_l \in \mathbb{C} \ \forall l.
\end{equation}
Then a quantum algorithm can generate, with 
$\text{poly}(\alpha_M,\bar{\kappa}_A,1/\epsilon_{\rm eig},1/\epsilon_{\rm st})$ queries to $O_A$ and $O_{\psi}$,
and to an accuracy $\epsilon_{\rm st}$, an $(n+a)$-qubit state of the form
\begin{equation}
    \ket{\psi_A} = \frac{\sum_l \beta_l \ket{\phi_{\lambda_l}}\ket{\lambda_l}}
    {\norm{\sum_l \beta_l \ket{\phi_{\lambda_l}}\ket{\lambda_l}}},
\end{equation}
where $\ket{\phi_{\lambda_l}} \in \mathcal{H}_2^{\otimes a}$ for each $l$ is an $\epsilon_{\rm eig}$-APQA of $\lambda_l$ with respect to a discretization $[\gamma,2^a]$. 
\end{conjecture}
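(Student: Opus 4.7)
The plan is to mirror the approach used for QEUE and QERE in Theorems~\ref{thm:qpeideal} and~\ref{thm:qeveideal}, treating conditions~\eqref{eq:cond1}--\eqref{eq:cond3} as abstract replacements for the specific analytic facts exploited in those two cases. First I would define $Z_{\delta,\Gamma} := \sum_{j=0}^{2^a-1}z_j\ket{j}\bra{j}$ with $z_j := \gamma_\delta(t_j)$, set $M := Z_{\delta,\Gamma}\otimes\mathds{1}_n - \mathds{1}_a\otimes A$, and construct a block encoding of $M$ with normalization $\alpha_M = \norm{Z_{\delta,\Gamma}} + \alpha_A$ from a single use of $O_A$ via the standard LCU construction (Lemma~\ref{lem:LCBE}); since $\gamma$ is problem-size independent, $\norm{Z_{\delta,\Gamma}}$ is an $\mathbf{O}(1)$ constant. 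Because $\gamma_\delta$ is promised to stay at distance at least $\delta$ from $\Lambda(A)$, Proposition~\ref{lem:resolventbound} gives $\norm{M^{-1}}\in\mathbf{O}(\bar{\kappa}_A/\delta^d)$. Running QLSA (Lemma~\ref{lem:opt_lin}) on the linear system $M\bm{x}=\ket{+^a}\ket{\psi}$ then prepares the resolvent state $\ket{\tilde{\psi}_A}/\norm{\ket{\tilde{\psi}_A}}$ to accuracy $\epsilon_{\rm st}/2$ using $\mathbf{O}(\alpha_M\bar{\kappa}_A\delta^{-d}\log(1/\epsilon_{\rm st}))$ queries.

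Next I would reproduce the analogues of Lemmas~\ref{lem:qpe_errors} and~\ref{lem:qere_errors}. The key quantities $\norm{\ket{\tilde{\phi}_{\lambda_l,\delta}}}^2$ and $\norm{P_{\lambda_l,\epsilon}\ket{\tilde{\phi}_{\lambda_l,\delta}}}^2$ are, up to a $\delta$-dependent prefactor, Riemann sums of the function $t'\mapsto\abs{\gamma(\gamma^{-1}(\lambda_l))-\gamma_\delta(t')}^{-2}$. Condition~\eqref{eq:cond3} combined with Lemma~\ref{lem:riemannsum} bounds the discretization error by $\mathrm{poly}(1/\delta)/2^a$, so choosing $2^a\in\mathbf{O}(\mathrm{poly}(1/\delta)/\epsilon_{\rm eig})$ makes $\epsilon_{\rm disc}\le\delta/\epsilon_{\rm eig}$. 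Condition~\eqref{eq:cond1} ensures that, after selecting the prefactor $\sqrt{c_{\delta,\epsilon}^{-1}/2^a}$, the quantities $\norm{P_{\lambda_l,\epsilon}\ket{\tilde{\phi}_{\lambda_l,\delta}}}$ are uniformly $\Theta(1)$ across $l$. Condition~\eqref{eq:cond2} directly bounds the relative fluctuation $\abs{1-\norm{P_{\lambda_0,\epsilon}\ket{\tilde{\phi}_{\lambda_0,\delta}}}/\norm{P_{\lambda_l,\epsilon}\ket{\tilde{\phi}_{\lambda_l,\delta}}}}$ by $\mathbf{O}(\delta/\epsilon_{\rm eig})$, and subtracting the localized integral from the full one yields $\norm{Q_{\lambda_l,\epsilon}\ket{\tilde{\phi}_{\lambda_l,\delta}}}\in\mathbf{O}(\sqrt{\delta/\epsilon_{\rm eig}})$. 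The closing step then proceeds verbatim as in Theorem~\ref{thm:qpeideal}: decompose $\ket{\tilde{\psi}_A}$ into a good component, an $l$-rescaling component, and a tail component; apply the propagation lemma to pick up the $\kappa_S$ factors; choose $\delta\in\Theta(\epsilon_{\rm eig}\epsilon_{\rm st}^2/\kappa_S^4)$ so that the latter two components are at most $\epsilon_{\rm st}/8$ relative to the good one; and compose with the QLSA error via Lemma~\ref{lem:normalizederror}.

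The main obstacle I anticipate is the lower bound on $\norm{P_{\lambda_l,\epsilon}\ket{\tilde{\phi}_{\lambda_l,\delta}}}$: condition~\eqref{eq:cond1} as stated only asserts that the localized integral is $l$-independent, not that it is bounded away from zero, yet a quantitative lower bound $c_{\delta,\epsilon}\in\Omega(1)$ (matching the $(2/\pi)\arctan(\epsilon_{\rm eig}/\delta)\ge 1/2$ estimate used in the real and unimodular cases) is precisely what drives the choice of the normalization prefactor and hence the propagation of errors. A cleaner statement of the conjecture probably needs to strengthen~\eqref{eq:cond1} to include such a quantitative lower bound, or equivalently to impose regularity on $\gamma$ (for instance a uniform lower bound on $\abs{\gamma'}$ together with a transversality condition on how $\gamma_\delta$ approaches $\gamma$ as $\delta\to 0$) from which this lower bound follows. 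A second, milder concern is that the factor $\delta^{-d}$ in the cost is strictly polynomial in $1/\epsilon_{\rm eig},1/\epsilon_{\rm st}$ only when $d$ is constant; the stated assumption $d\in\mathbf{O}(\log n)$ therefore gives a quasi-polynomial query cost unless $d$ is additionally fixed, mirroring the $\bar{\kappa}_A^{4d+1}$ behaviour highlighted in the remarks after Theorems~\ref{thm:qpeideal} and~\ref{thm:qeveideal}.
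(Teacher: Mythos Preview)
The paper does not prove this statement: it is explicitly labeled a \emph{conjecture}, and the surrounding text in \S\ref{sec:peve} only motivates it by abstracting the analytic facts used for QEUE and QERE into conditions~\eqref{eq:cond1}--\eqref{eq:cond3}, then stating ``If these three properties are satisfied\ldots an efficient algorithm for eigenvalue estimation can be constructed following our approach.'' There is no proof to compare against. Your outline faithfully reconstructs exactly the heuristic the authors have in mind---build $M$ from the discretized shifted curve $\gamma_\delta$, bound $\norm{M^{-1}}$ via Proposition~\ref{lem:resolventbound}, run QLSA, and rerun the error analysis of Lemmas~\ref{lem:qpe_errors}/\ref{lem:qere_errors} with the three abstract conditions replacing the explicit integrals---so in that sense it matches the paper's intended route.

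The obstacles you identify are genuine and are plausibly why the authors left this as a conjecture rather than a theorem. Your first point is sharp: condition~\eqref{eq:cond1} as written only asserts $t$-independence of the localized integral, not a quantitative lower bound $c_{\delta,\epsilon}\in\Omega(1)$; without that, the step $a_l\ge 1/2$ in the analogue of Lemma~\ref{lem:qpe_errors} cannot be reproduced and the error-propagation argument has no anchor. Your second point is also correct: with $d\in\mathbf{O}(\log n)$ and $\delta\in\Theta(\epsilon_{\rm eig}\epsilon_{\rm st}^2/\kappa_S^4)$, the factor $\delta^{-d}$ is not polynomial in the parameters $\alpha_M,\bar\kappa_A,1/\epsilon_{\rm eig},1/\epsilon_{\rm st}$ listed in the conjecture, so the stated complexity would require $d$ to be an absolute constant (as in the diagonalizable case $d=1$) or the conclusion would need to allow explicit $n$-dependence. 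Both observations indicate that the hypotheses would have to be strengthened before your outline becomes a proof; they do not reflect a flaw in your approach relative to the paper, since the paper offers no proof either.
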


It is then interesting to obtain a simpler characterization of the curves $\gamma$ that satisfy Equations~\eqref{eq:cond1},\eqref{eq:cond2} and \eqref{eq:cond3}. It is clear from the outset that intersecting curves violate Eq.~\eqref{eq:cond1}. Similarly, Eq.~\eqref{eq:cond3} may be violated at discontinuities in a curve. It is an interesting open question whether a non-intersecting and piecewise continuous curve $\gamma$ always admits a family of curves $\{\gamma_\delta\}$ 
satisfying the required properties.  

One possible way of extending our framework could be by applying the algorithm for QEUE on circles centered at the origin with radii $k\delta$, $k = 1, 2, \dots$. Thus, one may use binary search to get the desired eigenvalue even without knowing where this eigenvalue is located, as long as the given eigenvalue is known to not lie on
any of these circles. In this way, prior knowledge of the curve is no longer needed.

\paragraph{Acknowledgments}
This research was supported by the 
Australian Research Council Centre of Excellence for 
Engineered Quantum Systems (CE170100009). 
SK acknowledges support from Deborah Jin Fellowship.
We are grateful for insightful conversations with Kunal Sharma
and Gopikrishnan Muraleedharan.

\bibliographystyle{quantum}
\bibliography{Ref}

\end{document}